\let\proof\@undefined          
\let\endproof\@undefined       
\newtheorem{theorem}{Theorem}[section]
\newtheorem{lemma}[theorem]{Lemma}
\newtheorem{corollary}[theorem]{Corollary}
\newtheorem{example}{Example}
\newtheorem{definition}{Definition}
\newtheorem{claim}{Claim}
\newtheorem*{proof}{Proof}
\def\qed{{                 
   \parfillskip=0pt        
   \widowpenalty=10000     
   \displaywidowpenalty=10000  
   \finalhyphendemerits=0  
                           %
   \leavevmode             
   \unskip                 
   \nobreak                
   \hfil                   
   \penalty50              
   \hskip.2em              
   \null                   
   \hfill                  
   $\square$
                           %
   \par}}                  
\newcommand{\learner}{\ensuremath{\mathsf{learner}}}
\newcommand{\SelPath}{\ensuremath{\mathit{SelPath}}}
\newcommand{\Paths}{\ensuremath{\mathit{Paths}}}
\newcommand{\Fusions}{\ensuremath{\mathit{Fusions}}}
\newcommand{\dblslash}{\ensuremath{/\!/}}
\newcommand{\wc}{\mathord{\star}}
\newcommand{\sep}{\mathord{\mathord{:}\mathord{:}}}
\newcommand{\lex}{\mathit{lex}}
\newcommand{\can}{\mathit{can}}
\newcommand{\lab}{\mathit{lab}}
\renewcommand{\root}{\mathit{root}}
\newcommand{\sel}{\mathit{sel}}
\newcommand{\child}{\mathit{child}}
\newcommand{\desc}{\mathit{desc}}
\newcommand{\origin}{\mathit{origin}}
\newcommand{\true}{\mathit{true}}
\newcommand{\false}{\mathit{false}}
\newcommand{\Tree}{\ensuremath{\mathit{Tree}}}
\newcommand{\Twig}{\ensuremath{\mathit{Twig}}}
\newcommand{\Path}{\ensuremath{\mathit{Path}}}
\newcommand{\APath}{\ensuremath{\mathit{AnchPath}}}
\newcommand{\CPath}{\ensuremath{\mathit{ConjPath}}}
\newcommand{\PTwig}{\mathit{PsfTwig}}
\newcommand{\PATH}{{\ensuremath{\mathsf{Path}}}}
\newcommand{\TWIG}{{\ensuremath{\mathsf{Twig}}}}
\newcommand{\APATH}{{\ensuremath{\mathsf{AnchPath}}}}
\newcommand{\PTWIG}{\mathsf{PsfTwig}}
\newcommand{\CPATH}{\ensuremath{\mathsf{ConjPath}}}
\newcommand{\poly}{\mathit{poly}}
\newcommand{\CS}{\mathit{C\!S}}
\newcommand{\Null}{\ensuremath{\text{\sc{}Null}}}
\newcommand{\minus}{\mathbin\setminus}
\newcommand{\Prop}[1]{\ensuremath{\mathbf{P}_{#1}}}
\newcounter{lineNoCounter}
\newcommand{\resetLineNoCounter}{\setcounter{lineNoCounter}{0}}
\newcommand{\lineNo}{\makebox[0pt][l]{%
\makebox[0pt][l]{%
\addtocounter{lineNoCounter}{1}%
}%
\makebox[10pt][r]{
\sf\scriptsize%
\arabic{lineNoCounter}:%
}%
}}
\title{Learning XML Twig Queries\\[-5pt]
\textbf{\small Accepted to ICDT 2012}}
\author{
Sławek Staworko\\
Mostrare, INRIA \& LIFL (CNRS UMR8022)\\
University of Lille, France\\
\texttt{slawomir.staworko@inria.fr}
\and
Piotr Wieczorek\\
Institute of Computer Science\\
University of Wrocław\\ 
\texttt{piotr.wieczorek@cs.uni.wroc.pl}
}
\begin{document}
\maketitle
\thispagestyle{empty}
\begin{abstract}
  We investigate the problem of learning XML queries, \emph{path}
  queries and \emph{twig} queries, from examples given by the user. A
  learning algorithm takes on the input a set of XML documents with
  nodes annotated by the user and returns a query that selects the
  nodes in a manner consistent with the annotation. We study two
  learning settings that differ with the types of annotations. In the
  first setting the user may only indicate \emph{required nodes} that
  the query must select (i.e., \emph{positive examples}). In the
  second, more general, setting, the user may also indicate
  \emph{forbidden nodes} that the query must not select (i.e.,
  \emph{negative examples}). The query may or may not select any node
  with no annotation.

  We formalize what it means for a class of queries to be
  \emph{learnable}. One requirement is the existence of a learning
  algorithm that is \emph{sound} i.e., always returning a query
  consistent with the examples given by the user. Furthermore, the
  learning algorithm should be \emph{complete} i.e., able to produce
  every query with sufficiently rich examples. Other requirements
  involve tractability of the learning algorithm and its robustness to
  nonessential examples. We identify practical classes of Boolean and
  unary, path and twig queries that are learnable from positive
  examples. We also show that adding negative examples to the picture
  renders learning unfeasible.
\end{abstract}
\section{Introduction}
XML has become a de facto standard for representation and exchange of
data in web applications. An XML document is basically a labeled tree
whose leaves store textual data and the standard XML format is text
based to allow users an easy and direct access to the contents of the
document~\cite{XML10}. However, to satisfy even modest information
needs, the user is often required to formulate her queries using one
of existing query languages whose common core is
XPath~\cite{XPath1,XPath2}. XPath queries allow to access the contents
of the desired nodes with a syntax similar to directory paths used to
navigate in the UNIX file system. Unfortunately, even the XPath query
language, and any language with formal syntax, might be too difficult
to be accessible to every user, and in general, there is a lack of
frameworks allowing the user to formulate the query without the
knowledge of a specialized query language.

In this paper, we propose to address this gap with the help of
algorithms that infer the query from examples given by the user. We
remark, however, that the need for general inference of XML queries is
justified by other novel database applications. For instance, in the
setting of XML data exchange~\cite{ArLi05} the pattern queries used to
define data mappings need to be specified by the user. A learning
algorithm could be a base for real \emph{ad-hoc} data exchange
solutions, where the pattern queries defining mappings are inferred as
new sources are discovered. Another example of potential application
is wrapper induction~\cite{GoCe09,Soderland99}.

The problem of XML query learning is defined as follows: given an XML
document with nodes annotated by the user construct a query that
selects the nodes accordingly to the annotations. Clearly, this
problem has two parameters: the class of queries within which the
algorithm should produce its result and the type of annotations the
user may use. In the current work we focus on two well-known
subclasses of XPath: \emph{twig} and \emph{path
  queries}~\cite{ACLS02}. We identify two types of annotations:
\emph{required} nodes i.e., nodes that need to be selected by the
query, and \emph{forbidden} nodes i.e., those that the query must not
select. Because we do not require all nodes to be annotated, every
unannotated node is implicitly annotated as \emph{neutral}, which
means that the query may or may not select it. In terms of
computational learning theory~\cite{KeVa94}, a required node is called
a \emph{positive example} and a forbidden node is a \emph{negative
  example}. In this paper, we consider two settings: one, where the
user provides only positive examples, and a more general one, where
both positive and negative examples are present.
\begin{example}
\label{ex:intro-unary}
Take for instance the XML document in Figure~\ref{fig:library} with a
library listing. Some of its elements are annotated as required ($+$)
and some as forbidden ($-$).
\begin{figure}[htb]
  \centering
  \begin{tikzpicture}[yscale=0.85,xscale=1.25]\small
    \node (r) at (0,0) {\tt library};
    \node (n1) at (-2.5,-1) {\tt collection} edge[-] (r);
    \node (n2) at (0,-1) {\tt book} edge[-] (r);
    \node (n3) at (2.5,-1) {\tt book} edge[-] (r);
    \node[draw,outer sep = 3pt] (n4) at (-3.6,-2) {\tt title} edge[-] (n1);
    \node at (-4.25,-1.65) {$+$};
    \node (n5) at (-2.4,-2) {\tt author} edge[-] (n1);
    \node[draw,outer sep = 3pt] (n6) at (-1.2,-2) {\tt title} edge[-] (n2);
    \node at (-1.85,-1.65) {$+$};
    \node (n7) at (-0,-2) {\tt author} edge[-] (n2);
    \node (n8) at (1.2,-2) {\tt author} edge[-] (n2);
    \node[draw,outer sep = 3pt] (n9) at (2.5,-2) {\tt title} edge[-] (n3);
    \node at (1.85,-1.65) {$-$};
    \node (n10) at (3.6,-2) {\tt author} edge[-] (n3);
    \node (n11) at (-3.6,-3) {\sl Capital} edge[-] (n4);
    \node (n12) at (-2.4,-3.5) {\sl K. Marx} edge[-] (n5);
    \node (n11) at (-1.2,-3) {\sl Manifesto} edge[-] (n6);
    \node (n12) at (-0,-3.5) {\sl K. Marx} edge[-] (n7);
    \node (n13) at (1.2,-3) {\sl F. Engels} edge[-] (n8);
    \node (n14) at (2.5,-3.5) {\sl The conditions of \ldots} edge[-] (n9);
    \node (n15) at (3.6,-3) {\sl F. Engels} edge[-] (n10);
  \end{tikzpicture}
  \caption{Annotation of a library database}
  \label{fig:library}
\end{figure}

The query that the user might want to receive is one that selects the
titles of works by K.\ Marx:
\[
q_0 = /\mathtt{library}/\wc[
\mathtt{author}\mathord{=}"\text{K.\ Marx}"]/
\mathtt{title}.
\]
The query
$/\mathtt{library}/\wc[\mathtt{author}\mathord{=}"\text{K.~Marx}"]/\wc$
is also consistent with the annotation but it properly contains $q_0$.
This makes $q_0$ more specific w.r.t.\ the user annotations, and
therefore, may be better fitted for the results of learning.  The
query selecting titles of all works,
$/\mathtt{library}/\wc/\mathtt{title}$ is not consistent because it
selects the forbidden \texttt{title} node. The query
$/\mathtt{library}/\mathtt{book}[\mathtt{author}\mathord{=}"\text{K.~Marx}"]/\mathtt{title}$
is also not consistent with the annotation because it does not select
the required \texttt{title} node of \textsl{Capital}. \qed
\end{example}

Our study requires us to define precisely what it means for a class of
queries $\mathcal{Q}$ to be \emph{learnable}. We propose a definition
influenced by computational learning theory~\cite{KeVa94}, and
inference of languages in particular~\cite{Go67,OnGa91,Hi97}. First of
all, for $\mathcal{Q}$ to be learnable there must exist a learning
algorithm $\learner$ which on the input takes a sample $S$ i.e., a set
of examples, and returns a query $q\in\mathcal{Q}$. Naturally,
$\learner$ should be \emph{sound}, that is the query $q$ must be
consistent with the sample $S$. Because the soundness condition is not
enough to filter out trivial learning algorithms (cf. discussion
following Definition~\ref{def:1}), we furthermore require $\learner$
to be \emph{complete}, that is able to learn every query with
sufficiently informative examples. More precisely, $\learner$ is
\emph{complete} if for every $q\in\mathcal{Q}$ there exists a so
called \emph{characteristic sample} $\CS_q$ of $q$ (w.r.t.\
$\learner$) such that $\learner(\CS_q)$ returns $q$. Note that an
unsavy user in the role of a teacher may not know exactly what is the
characteristic sample, but rather attempt to approach it by adding
more and more examples until the algorithm returns a satisfactory
query. Consequently, it is commonly required for the characteristic
sample to be \emph{robust} under inclusion i.e., $\learner(S)$ should
return $q$ for any sample $S$ that extends $\CS_q$ while being
consistent with $q$. Finally, polynomial restrictions are imposed on
$\learner$ and the size of the characteristic sample to ensure
tractability of the framework.

The primary goal of this paper is learning unary queries, but on the
way there we also investigate the learnability of Boolean
queries. Unary queries select a set of nodes in a document and are
typically used for information extraction tasks. On the other hand,
Boolean queries test whether or not a given document satisfies certain
property, and their typical use case is the classification of
documents e.g., for filtering purposes. When learning a Boolean query,
an example is a tree with a marker indicating whether it is a positive
or a negative example.
\begin{example}
\label{ex:intro-boolean}
Consider a simple XML feed with offers from a consumer-to-consumer web
site (Figure~\ref{fig:rss}) annotated by the user as either required
($+$) or forbidden ($-$).
\begin{figure}[htb]
  \centering
  \begin{tikzpicture}[yscale=0.9,xscale=1.25]
    \begin{scope}[xshift=0cm]
      \node at (0,0.5) {$+$};
      \node (n0) at (0,0) {\tt offer};
      \node (n1) at (0,-1) {\tt item} edge[-] (n0);
      \node (n2) at (-0.5,-2) {\tt type} edge[-] (n1);
      \node (n4) at (0.5,-2) {\tt descr} edge[-] (n1);
      \node (n3) at (-0.5,-2.85) {\sl For sale} edge[-] (n2);
      \node (n5) at (0.5,-3.35) {\sl Audi A4} edge[-] (n4);
    \end{scope}
    \begin{scope}[xshift=2cm]
      \node at (0,.5) {$-$};
      \node (n0) at (0,0) {\tt offer};
      \node (n1) at (0,-1) {\tt item} edge[-] (n0);
      \node (n2) at (-0.5,-2) {\tt type} edge[-] (n1);
      \node (n4) at (0.5,-2) {\tt descr} edge[-] (n1);
      \node (n3) at (-0.5,-2.85) {\sl Wanted} edge[-] (n2);
      \node (n5) at (0.5,-3.35) {\sl MacBook} edge[-] (n4);
    \end{scope}
    \begin{scope}[xshift=5.15cm]
      \node at (0,.5) {$+$};
      \node (n0) at (0,0) {\tt offer};
      \node (n0') at (0,-0.75) {\tt list} edge[-] (n0);
      \node (n1) at (-1,-1.33) {\tt item} edge[-] (n0');
      \node (n2) at (-1.5,-2.17) {\tt type} edge[-] (n1);
      \node (n4) at (-0.5,-2.17) {\tt descr} edge[-] (n1);
      \node (n1') at (1,-1.33) {\tt item} edge[-] (n0');
      \node (n2') at (0.5,-2.17) {\tt type} edge[-] (n1');
      \node (n4') at (1.5,-2.17) {\tt descr} edge[-] (n1');
      \node[inner sep=0pt] (n3) at (-1.5,-2.9) {\sl For sale} edge[-] (n2);
      \node (n5) at (-0.5,-3.35) {\sl 3D Puzzle} edge[-] (n4);
      \node[inner sep=0pt] (n3') at (0.5,-2.9) {\sl Wanted} edge[-] (n2');
      \node (n5') at (1.5,-3.35) {\sl Eee PC} edge[-] (n4');
    \end{scope}
  \end{tikzpicture}
  \caption{An annotated XML stream.}
  \label{fig:rss}
\end{figure}
A Boolean query satisfying the user annotations selects all sale
offers i.e.,
$q_1=.[\mathtt{offer}\dblslash{}\mathtt{item}/\mathtt{type}\mathord{=}"\text{For
  sale}"]$.\qed
\end{example}

We investigate the learnability for Boolean and unary, path and twig
queries in the presence of positive examples only and in the presence
of both positive and negative examples. For learning in the presence
of positive examples only, we identify practical subclasses of
\emph{anchored} path queries and \emph{path-subsumption-free} twig
queries that are learnable. The main idea behind our learning
algorithms is to attempt to construct an (inclusion-)minimal query
consistent with the examples. Intuitively this means that our
algorithms try to construct a query that is as specific as possible
with respect to the user input (cf. $q_0$ in
Example~\ref{ex:intro-unary}). This approach is common to a host of
algorithms learning concepts from positive examples~\cite{Angluin80}
including reversible regular languages~\cite{Angluin82}, $k$-testable
regular languages~\cite{GaVi90}, and single occurrence regular
expressions~\cite{BeGeNeVa10}. While our learning algorithms for path
queries return minimal queries consistent with the input sample, we
show that this approach cannot be fully adopted for twig queries
because there are input samples for which the consistent minimal twig
query is of exponential size. Here, our learning algorithms return
queries that can be seen as polynomially-sized approximations.

The learnability of the full classes of path and twig queries remains
an open question. However, we identify the essential properties of the
query classes that enable our learning techniques, and observe that
these properties do not hold for the full classes of path and twig
queries. This indicates that new approaches may need to be explored if
learning of the full classes is feasible at all.
 
In the setting where both positive and negative examples are allowed,
we study the \emph{consistency problem}: given a document with a set
of positive and negative annotations is there a query that satisfies
the annotations? This problem is trivial if only positive examples are
given because the universal query, that selects all nodes in a tree,
is consistent with any set of positive examples. However, as we show,
adding even one negative example renders the consistency problem
intractable. This result holds for all considered classes of queries,
including anchored path queries and path-subsumption-free twig
queries, and in fact, it holds for so simple classes of queries that
it is hard to envision some reasonable restrictions that would admit
learnability in the presence of positive and negative examples.

The main contribution of this paper is defining and establishing
theoretical boundaries for learning path and twig queries from
examples. To the best of our knowledge this is the first work
addressing this particular problem. Additionally, we investigate two
problems that might be of independent interest: constructing a minimal
query consistent with a set of positive examples and checking the
consistency of a set of positive and negative examples. The
characterization of the properties of the learnable classes of queries
and the algorithm for learning unary path queries are based on
existing techniques, tree pattern homomorphisms~\cite{MiSu99,MiSu04}
and pattern learning~\cite{Angluin79,Shinohara82}, but we employ them
in new, nontrivial ways. The remaining results, including the
remaining learning algorithms and intractability of the consistency
problem, are new and nontrivial.

The paper is organized as follows. In Section~\ref{sec:basic-notions}
we introduce basic notions and define formally the learning
framework. In Section~\ref{sec:classes} we define the learnable
subclasses of queries and identify their essential properties that
enable our learning algorithms. In Sections~\ref{sec:simple-path-expr}
through \ref{sec:tree-queries} we present the corresponding learning
algorithms. In Section~\ref{sec:impact-negat-exampl} we discuss the
impact of negative examples on learning. We discuss the related work
in Section~\ref{sec:related-work}. Finally, we summarize our results
and outline further directions in
Section~\ref{sec:concl-future-work}. Because of space restriction we
present only sketches of the most important proofs; complete proofs
will be given in the full version of the paper (currently in
preparation for journal submission).

\noindent {\bf Acknowledgments.} We would like to thank our fellow
colleagues and anonymous reviewers for their helpful comments. We also
would like to thank Radu Ciucanu and Ioana Adam who implemented the
algorithms and shared their insights allowing to improve theoretical
properties of the algorithms. This research has been partially
supported by Ministry of Higher Education and Research, Nord-Pas de
Calais Regional Council and FEDER through the Contrat de Projets Etat
Region (CPER) 2007-2013, Codex project ANR-08-DEFIS-004, and Polish
Ministry of Science and Higher Education research project N N206
371339.

\section{Basic notions}
\label{sec:basic-notions}

Throughout this paper we assume an infinite set of node labels
$\Sigma$ which allows us to model documents with textual values. We
also assume that $\Sigma$ has a total order, that can be tested in
constant time, and has a minimal element that can be obtained in
constant time as well. We extend the order on $\Sigma$ to the standard
lexicographical order $\leq_\lex$ on words over $\Sigma$ and define a
well-founded canonical order on words: $w\leq_\can u$ iff $|w|<|u|$ or
$|w|=|u|$ and $w\leq_\lex u$.

\paragraph{Trees} We model XML documents with unranked labeled
trees. Formally, a {\em tree} $t$ is a tuple
$(N_t,\root_t,\lab_t,\child_t)$, where $N_t$ is a finite set of nodes,
$\root_t\in N_t$ is a distinguished root node,
$\lab_t:N_t\rightarrow\Sigma$ is a labeling function, and
$\child_t\subseteq N_t\times N_t$ is the parent-child relation. We
assume that the relation $\child_t$ is acyclic and require every
non-root node to have exactly one predecessor in this relation. By
$\Tree_0$ we denote the set of all trees.

The \emph{size} of a tree is the cardinality of its node set. The
\emph{depth} of a node is the length of the path from the root to the
node and the \emph{height} of the tree is the depth of its deepest
leaf. For a tree $t$ by $\Paths(t)$ we denote the set of paths from
the root node to the leaf nodes of $t$. We view a path both as a tree,
in particular it has nodes, and as a word. Often, we use unranked
terms over $\Sigma$ to represent trees. For instance, the term
$r(a(b),b(a(c)),c(b(a)))$ corresponds to the tree $t_0$ in
Figure~\ref{fig:tree}.
\begin{figure}[htb]
  \centering
  \subfigure[Tree $t_0$.]{\label{fig:tree}
    \centering
    \begin{tikzpicture}[yscale=0.75,xscale=1.25]
      \node at (0,0) (n0) {$r$};
      \node at (-0.75,-1) (n1) {$a$};
      \node at (0,-1) (n4) {$b$};
      \node at (0,-2) (n5) {$a$};
      \node at (0.75,-1) (n2) {$c$};
      \node at (0.75,-2) (n3) {$b$};
      \node at (0.75,-3) (n6) {$a$};
      \node at (-0.75,-2) (n7) {$b$};
      \node at (0,-3) (n8) {$c$};
      \draw[-,semithick] (n5) -- (n8);
      \draw[-,semithick] (n1) -- (n7);      
      \draw[-,semithick] (n0) -- (n1);
      \draw[-,semithick] (n0) -- (n1);
      \draw[-,semithick] (n0) -- (n2);
      \draw[-,semithick] (n2) -- (n3);
      \draw[-,semithick] (n3) -- (n6);
      \draw[-,semithick] (n0) -- (n4);
      \draw[-,semithick] (n4) -- (n5);
    \end{tikzpicture} 
  }
  \hspace{3cm}
  \subfigure[Decorated trees $t_1$ and $t_2$.]{
    \label{fig:decorated-trees}
    \centering
    \begin{tikzpicture}[yscale=0.75,xscale=1.25]
      \node at (0,0) (n0) {$r$};
      \node at (-0.75,-1) (n1) {$a$};
      \node at (0,-1) (n4) {$b$};
      \node[draw,outer sep = 3pt] at (0,-2) (n5) {$a$};
      \node at (0.75,-1) (n2) {$c$};
      \node at (0.75,-2) (n3) {$b$};
      \node at (0.75,-3) (n6) {$a$};
      \node at (-0.75,-2) (n7) {$b$};
      \node at (0,-3) (n8) {$c$};
      \draw[-,semithick] (n5) -- (n8);
      \draw[-,semithick] (n1) -- (n7);      
      \draw[-,semithick] (n0) -- (n1);
      \draw[-,semithick] (n0) -- (n1);
      \draw[-,semithick] (n0) -- (n2);
      \draw[-,semithick] (n2) -- (n3);
      \draw[-,semithick] (n3) -- (n6);
      \draw[-,semithick] (n0) -- (n4);
      \draw[-,semithick] (n4) -- (n5);
      \begin{scope}[xshift=2.75cm]
      \node at (0,0) (n0) {$r$};
      \node at (-0.75,-1) (n1) {$a$};
      \node at (0,-1) (n4) {$b$};
      \node at (0,-2) (n5) {$a$};
      \node at (0.75,-1) (n2) {$c$};
      \node at (0.75,-2) (n3) {$b$};
      \node[draw,outer sep = 3pt] at (0.75,-3) (n6) {$a$};
      \node at (-0.75,-2) (n7) {$b$};
      \node at (0,-3) (n8) {$c$};
      \draw[-,semithick] (n5) -- (n8);
      \draw[-,semithick] (n1) -- (n7);      
      \draw[-,semithick] (n0) -- (n1);
      \draw[-,semithick] (n0) -- (n1);
      \draw[-,semithick] (n0) -- (n2);
      \draw[-,semithick] (n2) -- (n3);
      \draw[-,semithick] (n3) -- (n6);
      \draw[-,semithick] (n0) -- (n4);
      \draw[-,semithick] (n4) -- (n5);

      \end{scope}
    \end{tikzpicture} 
  }
  \caption{Trees.}
  \label{fig:trees}
\end{figure}

To represent examples and answers to queries, we use trees with one
distinguished \emph{selected} node. Formally, a {\em decorated tree}
is a pair $(t,\sel_t)$, where $t$ is a tree and $\sel_t\in N_t$ is a
distinguished {\em selected node}. We denote the set of all decorated
trees by $\Tree_1$. Figure~\ref{fig:decorated-trees} contains two
decorated versions of $t_0$: the selected node is indicated with a
square box. In the sequel, we rarely make the distinction between
standard trees and decorated ones, and when it does not lead to
ambiguity, we refer to both structures as simply trees.

\noindent {\bf Queries.} We work with the class of twig queries, also
know as \emph{tree pattern queries}~\cite{ACLS02}. Twig queries are
essentially unordered trees whose nodes may be additionally labeled
with a distinguished wildcard symbol $\wc$ and that use two types of
edges, child and descendant, corresponding to the standard XPath
axes. To model unary queries we also add a distinguished selecting
node.
\begin{figure}[htb]
  \centering
  \subfigure[Boolean twig query $q_0$.]{\label{fig:twig0}
    \centering
    \begin{tikzpicture}[yscale=0.75, xscale=1.25]
      \path[use as bounding box] (-1.75,.25) rectangle (1.75,-2.25);
      \node at (0,0) (n0) {$r$};
      \node at (0,-1) (n2) {$\wc$};
      \node at (0.75,-2) (n3) {$a$};
      \node at (-0.75,-2) (n1) {$\wc$};
      \draw[-,semithick] (n2) -- (n1);
      \draw[-,semithick] (n0) -- (n2);
      \draw[-,double,semithick] (n2) -- (n3);
    \end{tikzpicture}
  }
  \subfigure[Unary path query $p_0$.]{\label{fig:twig1}
    \centering
    \begin{tikzpicture}[yscale=0.75,xscale=1.25]
      \path[use as bounding box] (-1.75,.25) rectangle (1.75,-2.25);
      \node at (0,0) (n0) {$r$};
      \node at (0,-0.9) (n2) {$\wc$};
      \node[draw,outer sep = 3pt] at (0,-2) (n3) {$a$};
      \draw[-,semithick] (n0) -- (n2);
      \draw[-,double,semithick] (n2) -- (n3);
    \end{tikzpicture}
  }
  \caption{Twig queries.}
  \label{fig:twigs}
\end{figure}

A \emph{Boolean twig query} $q$ is a tuple
$(N_q,\root_q,\lab_q,\child_q,\desc_q)$, where $N_q$ is a
finite set of nodes, $\root_q\in N_q$ is the root node,
$\lab_q:N_q\rightarrow\Sigma\cup\{\wc\}$ is a labeling function,
$\child_q\subseteq N_q\times N_q$ is a set of child edges, and
$\desc_q\subseteq N_q\times N_q$ is a set of descendant edges.  We
assume that $\child_q\cap\desc_q=\emptyset$ and that the relation
$\child_q\cup\desc_q$ is acyclic and require every non-root node to
have exactly one predecessor in this relation. By $\Twig_0$ we denote
the set of all Boolean twig queries. A \emph{unary twig query} is a
pair $(q,\sel_q)$, where $q$ is a Boolean twig query and $\sel_q\in
N_q$ is a distinguished \emph{selecting} node. We denote the set of
all twig queries by $\Twig_1$. Figure~\ref{fig:twigs} contains
examples of twig queries: child edges are drawn with a single line,
descendant edges with a double line, and the selecting node is
indicated with a square box.

Additionally, we use restricted classes of Boolean and unary path
queries, $\Path_0$ and $\Path_1$ respectively. Formally,
$\Path_i$ contains those elements of $\Twig_i$ whose nodes have at
most one child. Furthermore, the selecting node of a unary path query
is always its only leaf (cf.\ Figure~\ref{fig:twig1}). We note that
$\Twig_1$ captures exactly the class of descending positive
disjunction-free XPath queries, and in the sequel, we use elements of
the abbreviated XPath syntax~\cite{XPath1,XPath2} to present both
elements of $\Twig_1$ and $\Twig_0$. For instance, the query in
Figure~\ref{fig:twig0} can be written as $r/\wc[\wc]\dblslash{}a$, and
the query in Figure~\ref{fig:twig1} as $r/\wc\dblslash{}a$.

Because no unary twig query can select at the same time the root node
and another node of a tree, we disallow the root to be an answer, and
from now on, we consider only unary queries and decorated trees whose
selected node is other than root. Note that this restriction can be
easily bypassed by adding a virtual root node to every tree in the
input sample. Also, this way the \emph{universal query} is
$\wc\dblslash{}\wc$.

\paragraph{Embeddings} We define the semantics of twig queries
using the notion of embedding which is essentially a mapping of nodes
of a query to the nodes of a tree (or another query) that respects the
semantics of the edges of the query. In the sequel, for two
$x,y\in\Sigma\cup\{\wc\}$ we say that $x$ \emph{matches} $y$ if
$y\neq\wc$ implies $x=y$. Note that this relation is not symmetric:
$a$ matches $\wc$ but $\wc$ does not match $a$.

Formally, for $i\in\{0,1\}$, a query $q\in\Twig_i$ and a tree
$t\in\Tree_i$, an {\em embedding} of $q$ in $t$ is a function $\lambda
: N_q \rightarrow N_t$ such that:
\begin{enumerate}
\itemsep0pt
\item[$1$.] $\lambda(\root_q)=\root_t$,
\item[$2$.] for every $(n,n')\in\child_q$,
  $(\lambda(n),\lambda(n'))\in\child_t$,
\item[$3$.] for every $(n,n')\in\desc_q$,
  $(\lambda(n),\lambda(n'))\in(\child_t)^+$,
\item[$4$.] for every $n\in N_q$, $\lab_t(\lambda(n))$ matches
  $\lab_q(n)$,
\item[$5$.] if $i=1$, then $\lambda(\sel_q)=\sel_t$. 
\end{enumerate}
Then, we write $\lambda: q\hookrightarrow t$ or simply
$t\preccurlyeq{}q$. Figure~\ref{fig:embeddings} presents all
embeddings of the query $q_0$ in the tree $t_0$
(Figure~\ref{fig:tree}). \begin{figure}[htb] \centering
  \begin{tikzpicture}[yscale=0.75,xscale=1.25]
    \node at (0,0) (n0) {$r$};
    \node at (-0.75,-1) (n1) {$a$};
    \node at (0,-1) (n4) {$b$};
    \node at (0,-2) (n5) {$a$};
    \node at (0.75,-1) (n2) {$c$};
    \node at (0.75,-2) (n3) {$b$};
    \node at (0.75,-3) (n6) {$a$};
    \node at (-0.75,-2) (n7) {$b$};
    \draw[-,semithick] (n1) -- (n7);      
    \draw[-,semithick] (n0) -- (n1);
    \draw[-,semithick] (n0) -- (n2);
    \draw[-,semithick] (n2) -- (n3);
    \draw[-,semithick] (n3) -- (n6);
    \draw[-,semithick] (n0) -- (n4);
    \draw[-,semithick] (n4) -- (n5);
    \begin{scope}[xshift=-2.75cm]
      \node at (0,0) (m0) {$r$};
      \node at (0,-1) (m2) {$\wc$};
      \node at (0.75,-2) (m3) {$a$};
      \node at (-0.75,-2) (m1) {$\wc$};
      \draw[-,semithick] (m2) -- (m1);
      \draw[-,semithick] (m0) -- (m2);
      \draw[-,double,semithick] (m2) -- (m3);

    \draw (m0) edge[->,bend left,densely dotted] (n0);
    \draw (m1) edge[->,bend right,densely dotted] (n5);
    \draw (m2) edge[->,bend left,densely dotted] (n4);
    \draw (m3) edge[->,bend left,densely dotted] (n5);
    \end{scope}
    \begin{scope}[xshift=2.75cm]
      \node at (0,0) (m0) {$r$};
      \node at (0,-1) (m2) {$\wc$};
      \node at (0.75,-2) (m3) {$a$};
      \node at (-0.75,-2) (m1) {$\wc$};
      \draw[-,semithick] (m2) -- (m1);
      \draw[-,semithick] (m0) -- (m2);
      \draw[-,double,semithick] (m2) -- (m3);
      
    \draw (m0) edge[->,bend right,densely dotted] (n0);
    \draw (m1) edge[->,bend right,densely dotted] (n3);
    \draw (m2) edge[->,bend right,densely dotted] (n2);
    \draw (m3) edge[->,bend left,densely dotted] (n6);
    \end{scope}
  \end{tikzpicture}
  \caption{Embeddings of $q_0$ in $t_0$.}
  \label{fig:embeddings}
\end{figure}

Note that we do not require the embedding to be injective i.e., two
nodes of the query may be mapped to the same node of the
tree. Embeddings of path queries are, however, always injective. Also,
note that the semantics of $\dblslash$-edge is that of a proper
descendant (and not that of descendant-or-self).

Typically, the semantics of a unary query is defined in terms of the
set of nodes it selects in a tree~\cite{Li06b,MiSu04}: a node $n$ of a
tree $t$ is an \emph{answer} to a unary twig query $q$ in $t$ if there
is an embedding $\lambda:q\hookrightarrow t$ such that
$\lambda(\sel_q)=n$ (then $n$ is also said to be \emph{reachable} by
$q$ in $t$). However, we use an alternative way of defining the
semantics of a query. Formally, the \emph{language} of a query
$q\in\Twig_i$ for $i\in\{0,1\}$ is the set
\[
\mathcal{L}_i(q)=\{t\in\Tree_i\mid t\preccurlyeq q\}.
\]
Naturally, the two notions are very closely related e.g., the
decorated trees $t_1$ and $t_2$ (Figure~\ref{fig:trees}) belong to
$\mathcal{L}_1(p_0)$ (Figure~\ref{fig:twigs}) and the nodes selected in
$t_1$ and $t_2$ are exactly the answers to $p_0$ in tree $t_0$.

The notion of an embedding extends in a natural fashion to a pair of
queries $q,p\in\Twig_i$ for some $i\in\{0,1\}$: an \emph{embedding} of
$q$ in $p$ is a function $\lambda:N_q\rightarrow N_p$ that satisfies
the conditions $1$, $2$, $4$, $5$ above (with $t$ being replaced by
$p$) and the following condition:
\begin{enumerate}
\itemsep0pt
\item[$3\rlap{${}'$}$.] for all $(n,n')\in\desc_q$,
  $(\lambda(n),\lambda(n'))\in(\child_{p}\cup\desc_{p})^+$.
\end{enumerate}
Then, we write $\lambda:p\hookrightarrow q$ or simply $q\preccurlyeq
p$ and say that $p$ \emph{subsumes} $q$.

The \emph{containment} (or \emph{inclusion}) $q\subseteq p$ of two
queries $q,p\in Twig_i$ for $i\in\{0,1\}$ is simply
$\mathcal{L}_i(q)\subseteq\mathcal{L}_i(p)$, and we say that $q$ and
$p$ are \emph{equivalent}, denoted $q\equiv p$, if $q\subseteq p$ and
$p\subseteq q$. Note that for twigs, subsumption implies containment
i.e., if $q\preccurlyeq p$, then $q\subseteq p$. The converse does not
hold in general. For instance, we have $a[.\dblslash{}b]\subseteq\wc[\wc]$ but
$a[.\dblslash{}b]\not\preccurlyeq\wc[\wc]$. There are also significant
computational differences: the containment of twigs is
coNP-complete~\cite{Sc04,NeSc03} whereas their subsumption is in
PTIME.

\paragraph{Query minimality} In this paper we identify queries that are
minimal for a given set of trees (as examples). It is important to
emphasise that we always mean minimality in terms of query
inclusion. Formally, for $i\in\{0,1\}$, a class of queries
$\mathcal{Q}\subseteq\Twig_i$, a query $q\in\mathcal{Q}$, and a set of
trees $S\subseteq\Tree_i$, we say that $q$ is \emph{minimal query in
  $\mathcal{Q}$ consistent with $S$} if $S\subseteq\mathcal{L}_i(q)$
and there is no $q'\in\mathcal{Q}$ such that $q'\subseteq q$,
$q'\not\equiv q$, and $S\subseteq\mathcal{L}_i(q')$.

\paragraph{Learning framework} We use a variant of the standard
language inference framework~\cite{KeVa94,Go67,OnGa91,Hi97} adapted to
learning queries. A learning setting comprises of the set of concepts
that are to be learnt, in our case queries, and the set of instances
of the concepts that are to serve as examples in learning, in our case
trees (possibly decorated). These two sets are bound together by the
semantics which maps every concept to its set of instances.
\begin{definition}
  A {\em learning setting} is a tuple
  $(\mathcal{D},\mathcal{Q},\mathcal{L})$, where $\mathcal{D}$ is a
  set of examples, $\mathcal{Q}$ is a class of queries, and
  $\mathcal{L}$ is a function that maps every query in $\mathcal{Q}$
  to the set of all its examples (a subset of $\mathcal{D}$). \qed
\end{definition}
As an example, a setting for learning unary Twig queries from positive
examples is the tuple $(\Tree_1,\Twig_1,\mathcal{L}_1)$. This general
formulation allows also to easily define settings for learning from
both positive and negative examples, which we present in
Section~\ref{sec:impact-negat-exampl}.

To define formally what learnability for queries means we fix a
learning setting $\mathcal{K}=(\mathcal{D},\mathcal{Q},\mathcal{L})$
and introduce some auxiliary notions. A \emph{sample} is a finite
nonempty subset $S$ of $\mathcal{D}$ i.e., a set of examples. The
\emph{size} of a sample is the sum of the sizes of the examples it
contains. A sample $S$ is \emph{consistent} with a query
$q\in\mathcal{Q}$ if $S\subseteq\mathcal{L}(q)$. A \emph{learning
  algorithm} is an algorithm that takes a sample and returns a query
in $\mathcal{Q}$ or a special value $\Null$.
\begin{definition}
  \label{def:1}
  A query class $\mathcal{Q}$ is \emph{learnable in polynomial time
    and data} in the setting
  $\mathcal{K}=(\mathcal{D},\mathcal{Q},\mathcal{L})$ iff there exits a
  polynomial learning algorithm $\learner$ and a polynomial $\poly$
  such that the following two conditions are satisfied:
  \begin{enumerate}
    \itemsep0pt
  \item \textbf{Soundness.} For any sample $S$ the algorithm
    $\learner(S)$ returns a query consistent with $S$ or a special
    $\Null$ value if no such query exists.
  \item \textbf{Completeness.} For any query $q\in\mathcal{Q}$ there
    exists a sample $\CS_q$ such that for every sample $S$ that
    extends $\CS_q$ consistently with $q$ i.e., $\CS_q\subseteq
    S\subseteq\mathcal{L}(q)$, the algorithm $\learner(S)$ returns a
    query equivalent to $q$. Furthermore, the size of $\CS_q$ is
    bounded by $\poly(|q|)$.\qed
  \end{enumerate}
\end{definition}
The sample $\CS_q$ is often called the \emph{characteristic sample}
for $q$ w.r.t.\ $\learner$ and $\mathcal{K}$ but we point out that for
a learning algorithm there may exist many samples fitting the role and
the definition of learnability requires merely that one such sample
exists. The soundness condition is a natural requirement but alone it
is insufficient to eliminate trivial learning algorithms. For
instance, for the setting where only positive examples are used, an
algorithm returning the universal query $\wc\dblslash{}\wc$ is
sound. Consequently, we require the algorithm to be complete
analogously to how it is done for grammatical language
inference~\cite{Go67,OnGa91,Hi97}. An alternative and natural way to
ban trivial learning algorithms would be to require the algorithm to
return some minimal query consistent with the input sample. Our
approach follows this direction but as we show later on, it is not
possible to fully adhere to it because there exist samples for which
the minimal consistent twig query is of exponential size.

\section{Learnable query classes}
\label{sec:classes}
In this section we define the classes of queries, that in the
following sections we prove learnable from positive examples, and
identify two essential properties of these classes that enable our
learning algorithms. Both properties follow from the importance of
logical implication in learning: learning can often be seen as a
search of the correct hypothesis obtained by an iterative refinement
of some initial hypothesis and at every iteration the current
hypothesis is often a logical consequence of the previous one. The
first property requires the containment to be equivalent to
subsumption, which allows to capture containment with a simple
structural characterization. The second property is the existence of
polynomially sized \emph{match sets}~\cite{MiSu04}, which were
originally introduced as an easy way of testing query inclusion. The
match sets that we construct will serve us as the characteristic
samples. We emphasise that the full classes of twig and path queries
do not have these properties but this does not imply that they are not
learnable but it merely precludes the direct adaptation of our
learning techniques. Whether the full classes of queries are learnable
remains an open question.

To formally define the two properties, we fix a class of queries
$\mathcal{Q}$ with their semantics defined by $\mathcal{L}$. The
properties are:
\begin{itemize}
\item[$(\Prop1)$] for every two $q_1,q_2\in\mathcal{Q}$, $q_1
  \subseteq q_2$ if and only if $q_1\preccurlyeq q_2$.
\item[$(\Prop2)$] every $q\in\mathcal{Q}$ has a polynomial
  \emph{match set} i.e., a set $\CS_q$ of (positive) examples such
  that the size of $\CS_q$ is polynomial in the size of $q$ and for
  every $q'\in \mathcal{Q}$ we have $q\subseteq q'$ if and only if
  $\CS_q\subseteq\mathcal{L}(q')$.
\end{itemize}
We next present the construction of match sets in a generic form and
then we introduce the learnable classes of queries and state the
properties $\Prop1$ and $\Prop2$ for them.

\subsection{Match sets as characteristic samples}
\label{sec:match-sets-char}
We now present the construction of match sets that will be later on
used as characteristic samples. Because the constructions of the match
sets for all the subclasses of queries are very similar, we present it
in a generic form. Take a twig query $q$, let $N$ be the size of $q$,
$a_0$ be the minimal element of $\Sigma$, and $a_1$ and $a_2$ be two
fresh symbols not used in $q$ and different from $a_0$. The
constructed match set $\CS_q$ contains exactly two trees: $t_0$ is
obtained from $q$ by replacing every $\wc$ with $a_0$ and every
descendant edge by a child edge; $t_1$ is obtained from $q$ by
replacing every $\wc$ with $a_1$ and every descendant edge with a path
of length $N$ whose all nodes are labeled with
$a_2$. Figure~\ref{fig:characteristic-sample} contains the
characteristic sample for the unary twig query
$q_1=r/b[a\dblslash{}b]\dblslash{}c[d]/\wc/c$.
\begin{figure}[htb]
  \centering
  \begin{tikzpicture}[yscale=0.65,xscale=1.25]
    \node at (-0.5,0) (q) {$q_1:$};
    \node at (0,0) (r) {$r$};
    \node at (0,-1) (n1) {$b$}     edge[-] (r);
    \node at (0.5,-2) (m1) {$a$} edge[-] (n1);
    \node at (0.5,-3) (m2) {$b$} edge[-,double] (m1);
    \node at (0,-2) (n2) {$c$}     edge[-,double] (n1);
    \node at (-0.5,-3) (n3') {$d$} edge[-] (n2);
    \node at (0,-3) (n3) {$\wc$}   edge[-] (n2);
    \node[draw,outer sep = 3pt] at (0,-4) (n4) {$c$}   edge[-] (n3);
    \begin{scope}[xshift=2cm]
      \node at (-0.5,0) (q) {$t_0:$};
      \node at (0,0) (r) {$r$};
      \node at (0,-1) (n1) {$b$}     edge[-] (r);
      \node at (0.5,-2) (m1) {$a$} edge[-] (n1);
      \node at (0.5,-3) (m2) {$b$} edge[-] (m1);
      \node at (0,-2) (n2) {$c$}     edge[-] (n1);
      \node at (-0.5,-3) (n3') {$d$}     edge[-] (n2);
      \node at (0,-3) (n3) {$a_0$}   edge[-] (n2);
      \node[draw,outer sep = 3pt] at (0,-4) (n4) {$c$}   edge[-] (n3);
    \end{scope}
    \begin{scope}[xshift=4cm]
      \node at (-0.5,0) (q) {$t_1:$};
      \node at (0,0) (r) {$r$};
      \node at (0,-1) (n1) {$b$}     edge[-] (r);
      \node at (0.5,-2) (m1) {$a$} edge[-] (n1);
      \node at (0.5,-3) (m') {$a_2$} edge[-] (m1);
      \node at (0.5,-4) (m'') {$a_2$} edge[-,dotted] (m');
      \node at (0.5,-5) (m2) {$b$} edge[-] (m'');
      \node at (0,-2) (n') {$a_2$} edge[-] (n1);
      \node at (0,-3) (n1) {$a_2$} edge[-,dotted] (n');
      \node at (0,-4) (n2) {$c$}     edge[-] (n1);
      \node at (-0.5,-5) (n3') {$d$}     edge[-] (n2);
      \node at (0,-5) (n3) {$a_1$}   edge[-] (n2);
      \node[draw,outer sep = 3pt] at (0,-6) (n4) {$c$}   edge[-] (n3);
      \node at (-0.5,-2.5) (brace) {\small$8\Big\{$};
      \node at (1,-3.5) (brace) {\small$\Big\}8$};
    \end{scope}
  \end{tikzpicture}
  \caption{The characteristic sample for $q_1$.}
  \label{fig:characteristic-sample}
\end{figure}
We point out that for a query and a learning algorithm there might be
more than just one characteristic sample. This is also the case with
our learning algorithms. While the construction we present above might
seem quite artificial, we use it due to its properties that might be
of independent interest (match sets). Simpler, and easier to compose
by a unskilled user, characteristic samples are often possible.

\subsection{Anchored path queries}
We begin with a base subclass of path queries, called anchored path
queries. Essentially, a path query is anchored when no inner $\wc$
node is incident to a $\dblslash{}$-edge.  The main reason for
introducing this class of queries is that when working with their
embeddings the restriction on the use of $\dblslash{}$ allows us to
limit the ``jumps'' that the embedding may perform in between two
nodes connected by a descendant edge. An additional restriction on the
leaf node of Boolean path queries is imposed for technical reasons
(cf. proof of Lemma~\ref{lemma:prop-anchored-paths} for more details).

Formally, the class of \emph{unary anchored path queries} imposes one
restriction: a $\dblslash{}$-edge cannot be incident to a $\wc$-node
unless it is the root node or the leaf node (which is also
selecting). For instance, the unary queries
$r\dblslash{}a\dblslash{}b/\wc/c$, $\wc\dblslash{}a\dblslash{}b/\wc$,
and $\wc\dblslash{}\wc$ are anchored but the query
$r\dblslash{}a/\wc\dblslash{}b$ is not. An additional restriction is
imposed on the \emph{Boolean anchored path queries}: if the leaf node
is $\wc$, then the edge incident to it is $\dblslash{}$. For instance,
the Boolean queries $a\dblslash{}b/\wc/c\dblslash{}\wc$ and
$a\dblslash{}b/\wc/c\dblslash{}a/\wc/b$ are anchored but the Boolean
query $\wc\dblslash{}a\dblslash{}b/\wc$ is not anchored. We denote by
$\APath_1$ and $\APath_0$ the sets of unary and Boolean anchored path
queries respectively.

Clearly, the subclasses of anchored path queries are properly included
in the full classes of path queries, however, we believe that the
restrictions are not very limiting and the classes of anchored queries
remain practical. Basically, anchored path queries cannot discriminate
the descendants of a node based on their depth alone. We also point
out that the additional restriction imposed on Boolean queries is
quite minor: the Boolean query $r\dblslash{}a/\wc$ is not anchored but
it is equivalent to $r\dblslash{}a\dblslash{}\wc$ which is
anchored. Note, however, that the Boolean query
$r\dblslash{}a/\wc/\wc$ does not have an equivalent Boolean anchored
query.

While \Prop1 for anchored path queries follows from the results
in~\cite{MiSu99,MiSu04}, below we present a proof using a technique
that allows to show \Prop1 and \Prop2 for all the query classes we
introduce later on (and these results are new and cannot be derived
from the results in~\cite{MiSu99,MiSu04}).
\begin{lemma}
  \label{lemma:prop-anchored-paths}
  Unary and Boolean anchored path queries have the properties \Prop1
  and \Prop2.
\end{lemma}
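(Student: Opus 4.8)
The plan is to prove the two properties separately, establishing $\Prop1$ first and then using the match-set construction to obtain $\Prop2$.

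For $\Prop1$, the direction $q_1 \preccurlyeq q_2 \Rightarrow q_1 \subseteq q_2$ is the already-noted general fact that subsumption implies containment (it holds for all twigs), so the work is entirely in the converse: assuming $q_1 \subseteq q_2$ for anchored path queries, I would construct an embedding $\lambda : q_2 \hookrightarrow q_1$. The natural strategy is to feed a carefully chosen ``canonical'' tree $t_{q_1} \in \mathcal{L}(q_1)$ into the containment hypothesis to force an embedding of $q_2$ into that tree, and then pull this embedding back to an embedding of $q_2$ directly into $q_1$. Taking $t_{q_1}$ to be the tree obtained from $q_1$ by replacing each $\wc$ with a fresh label and each $\dblslash{}$-edge by a sufficiently long path of fresh filler labels (essentially the $t_1$ of the match-set construction, with length $N$ exceeding the size of $q_2$) is the right choice: since $t_{q_1} \in \mathcal{L}(q_1) \subseteq \mathcal{L}(q_2)$, there is an embedding $q_2 \hookrightarrow t_{q_1}$, and because the filler labels and $\wc$-replacement labels are fresh, any query node of $q_2$ matching a non-filler node of $t_{q_1}$ must correspond canonically back to a node of $q_1$.

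The crux, and the step I expect to be the main obstacle, is showing that this pulled-back map respects the \emph{descendant} edges of $q_2$ as genuine descendant relations in $q_1$ (condition $3'$), rather than accidentally collapsing a $\dblslash{}$-edge of $q_2$ onto a single child-edge or a label node of $q_1$. This is exactly where \emph{anchoredness} is used: because no inner $\wc$ of an anchored path query is incident to a $\dblslash{}$-edge, the target of any descendant edge carries a concrete label, so in $t_{q_1}$ it can only be matched at a node coming from $q_1$ and not swallowed inside a filler path; choosing the filler paths of length $> |q_2|$ guarantees that a node of $q_2$ reachable by a chain of edges cannot ``skip past'' a node of $q_1$ by landing in the middle of a filler segment. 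I would handle path queries by exploiting injectivity of path embeddings (noted in the excerpt) to argue the pulled-back map is order-preserving along the single branch, and treat the Boolean-leaf restriction separately to rule out the degenerate case where $q_2$ ends in a $\wc$ attached by a child edge (the reason the excerpt flags this restriction as needed precisely here).

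For $\Prop2$, I would verify that the two-element set $\CS_q = \{t_0, t_1\}$ from Section~\ref{sec:match-sets-char} is a match set. It has size polynomial in $|q|$ by construction. Soundness of the set ($q \subseteq q' \Rightarrow \CS_q \subseteq \mathcal{L}(q')$) is immediate since $t_0, t_1 \in \mathcal{L}(q)$. For the converse I would show that if both $t_0$ and $t_1$ embed $q'$, then $q \preccurlyeq q'$, whence $q \subseteq q'$ by $\Prop1$: the embedding into $t_0$ pins down the child-edge structure of $q'$ against $q$ (since $t_0$ replaces every $\dblslash{}$ by a single child edge, any edge of $q'$ mapped here is a tight child step), while the embedding into $t_1$, with its long $a_2$-paths, certifies that any edge of $q'$ which must stretch across such a path is genuinely a descendant edge. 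Combining the two embeddings — and using the freshness of $a_1, a_2$ to force $\wc$-nodes of $q'$ onto the right positions — yields the desired embedding $q' \hookrightarrow q$, completing the proof.
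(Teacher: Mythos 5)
Your overall strategy --- embed the other query into a canonical tree built from $q$ (fresh labels for $\wc$-nodes, long fresh filler paths for $\dblslash$-edges) and pull the embedding back to $q$, using anchoredness plus a length argument to keep query nodes out of the filler segments --- is exactly the paper's technique; the paper merely packages it as a single claim (if $\CS_q\subseteq\mathcal{L}_i(q')$ then there is an embedding of $q'$ into $q$) from which \Prop1 and \Prop2 both follow, whereas you run the argument twice. Your \Prop1 part is sound, and it isolates the two key points correctly: anchoredness of the embedded query places concrete labels at the endpoints of its $\dblslash$-edges and at block borders, and a block (a chain of child edges) of length at most $|q_2|$ cannot be swallowed by a filler segment longer than $|q_2|$ without dragging a border node in with it.

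The \Prop2 paragraph, however, misassigns the roles of the two trees, and taken literally it has a gap. The embedding into $t_0$ does \emph{not} ``pin down the child-edge structure of $q'$ against $q$'': in $t_0$ every $\dblslash$-edge of $q$ has been collapsed to a child edge, so a child edge of $q'$ may map across what was a $\dblslash$-edge of $q$, and the $t_0$-embedding therefore carries no information about edge types of $q$. Worse, the $t_0$- and $t_1$-embeddings are independent functions that need not agree on where they send the nodes of $q'$, so there is no well-defined way to ``combine'' them edge by edge. What $t_0$ actually supplies --- and what your sketch needs but never states --- is (i) the bound $|q'|\leq N$, since path embeddings are injective and $t_0$ has exactly $N=|q|$ nodes, and (ii) the fact that $q'$ uses neither $a_1$ nor $a_2$ (both labels occur in $t_1$, so the $t_1$-embedding alone cannot rule them out). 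Point (i) is indispensable precisely because, unlike in your \Prop1 argument where you chose the filler length larger than $|q_2|$, the match set is fixed before $q'$ is known: its filler length is $N$, and your ``cannot skip past'' argument only applies once every block of $q'$ is known to be shorter than $N$. With (i) and (ii) in hand, the desired embedding of $q'$ into $q$ is the pullback of the $t_1$-embedding alone --- exactly as in your \Prop1 argument and in the paper's proof --- and, as there, the Boolean $\dblslash\wc$-leaf still requires the explicit re-routing to a non-filler descendant (every $a_2$-node of $t_1$ has one), not merely the observation that $\wc$-leaves attached by child edges are excluded by the definition of the class.
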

To prove this lemma it suffices to show the following claim.
\begin{claim}
  \label{claim:1}
  For any $i\in\{0,1\}$ and any two $q,q'\in\APath_i$, if
  $\CS_q\subseteq \mathcal{L}_i(q')$, then there exists an embedding
  $\lambda: q'\hookrightarrow q$.
  \label{claim:propAnchPaths}
\end{claim}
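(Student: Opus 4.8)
Since subsumption implies containment, and one checks directly that $\CS_q\subseteq\mathcal{L}_i(q)$, the claim yields the nontrivial directions of $\Prop1$ and $\Prop2$ (the remaining directions being exactly these two observations), so I concentrate on the claim itself. Write the path $q$ as $v_1\cdots v_N$ with $v_1=\root_q$, and recall that $\CS_q=\{t_0,t_1\}$, where $t_0$ is the \emph{collapsed} copy of $q$ (every $\dblslash$-edge turned into a child edge, every $\wc$ relabelled $a_0$) and $t_1$ is the \emph{stretched} copy (every $\dblslash$-edge replaced by a path of at least $N$ fresh $a_2$-nodes, every $\wc$ relabelled $a_1$). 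In $t_1$ I call the images of $v_1,\dots,v_N$ the \emph{landmarks} and the inserted nodes the \emph{fillers}. From $\CS_q\subseteq\mathcal{L}_i(q')$ fix two tree embeddings $\mu_0\colon q'\hookrightarrow t_0$ and $\mu_1\colon q'\hookrightarrow t_1$; since $q'$ is a path both are injective and strictly increase depth, both send $\root_{q'}$ to the root and, in the unary case, the leaf of $q'$ to the selecting node. As $\mu_0$ embeds $q'$ into the $N$-node tree $t_0$, we also get $M:=|q'|\le N$.

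The plan is to build the required $\lambda\colon q'\hookrightarrow q$ (i.e.\ $q\preccurlyeq q'$) as a node map $\nu\colon N_{q'}\to N_q$ read off from $\mu_1$. First I use $\mu_0$ only to control labels: every label occurring in $t_0$ lies in $\{a_0\}$ together with the labels of $q$, hence so does every concrete label of $q'$; in particular $q'$ uses neither $a_1$ nor $a_2$. Consequently, in $t_1$ every concretely labelled node of $q'$ is forced by $\mu_1$ onto a landmark, since fillers carry $a_2$ and former-$\wc$ landmarks carry $a_1$. I then set $\nu(u)=\pi(\mu_1(u))$, where $\pi$ maps each landmark (the image of $v_k$) to $v_k$ and each filler lying in the stretched copy of a $\dblslash$-edge $(v_k,v_{k+1})$ to $v_{k+1}$; thus $\pi$ is weakly depth-monotone and is the identity on landmark positions.

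Verifying that $\nu$ meets conditions $1$–$5$ is mostly routine. The root and (unary) selecting conditions are immediate since $v_1,v_N$ are landmarks. The label condition holds automatically: a concrete node of $q'$ sits on an equally labelled landmark, so $\lab_q(\nu(u))=\lab_{q'}(u)$, while a $\wc$-node of $q'$ constrains nothing. For a $\dblslash$-edge of $q'$, anchoring of $q'$ forces its upper endpoint to be concretely labelled (or the root), so $\nu$ sends it to a landmark $v_k$, whereas $\mu_1$ carries the lower endpoint strictly deeper; as $\pi$ is monotone, $\nu$ sends the edge to a pair $v_k,v_{k'}$ with $k<k'$, and $(v_k,v_{k'})\in(\child_q\cup\desc_q)^+$ as required by condition $3'$.

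The real work, and the step I expect to be the main obstacle, is condition $2$: every child edge of $q'$ must map to a child edge of $q$. I would cut $q'$ into maximal blocks of consecutive child edges; by anchoring together with the extra restriction on the Boolean leaf, the two endpoints of any nontrivial block are either the root/selecting leaf or endpoints of $\dblslash$-edges, hence concretely labelled, hence landmarks, say $v_k$ and $v_{k''}$. Within the block the images under $\mu_1$ form a downward child-path of $t_1$, so $v_{k''}$ lies exactly $r$ levels below $v_k$, where $r$ is the number of child edges in the block. But in $t_1$ the depth gap between $v_k$ and $v_{k''}$ equals the sum, over the intervening $q$-edges, of their weights ($1$ per child edge, at least $N$ per $\dblslash$-edge), and $r\le M-1<N$, so no $\dblslash$-edge can occur between $v_k$ and $v_{k''}$. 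Hence all these $q$-edges are child edges, the intermediate images are consecutive landmarks $v_k,v_{k+1},\dots,v_{k''}$, and every child edge of $q'$ in the block maps to a genuine child edge of $q$. This counting argument — made possible precisely by stretching every descendant edge to length at least $N$ while $|q'|\le N$, and by the anchoring assumption that keeps $\wc$ away from descendant edges so that the block endpoints are landmarks — is the crux; everything else is bookkeeping.
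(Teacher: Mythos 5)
Your proof is correct and follows essentially the same route as the paper's: use the collapsed tree $t_0$ to bound $|q'|\le N$ and exclude the fresh labels $a_1,a_2$, decompose the anchored query into blocks of child edges, exploit the length-$N$ stretching of descendant edges in $t_1$ to force block endpoints (and then, by the depth-gap count, entire blocks) onto landmark nodes, and finally compose the embedding into $t_1$ with the origin/projection map to obtain $\lambda:q'\hookrightarrow q$. The only notable difference is cosmetic: by making your projection $\pi$ total (sending each filler to the lower endpoint of its stretched edge) you absorb the Boolean $\wc$-leaf case uniformly, where the paper instead separately alters the embedding $\lambda_1$ so that the $\wc$-leaf avoids $a_2$-nodes.
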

\begin{proof}
  We first give an equivalent yet more structured definition of
  anchored path queries. A \emph{block} is a path query fragment $B$
  of the form $\sigma_0/\ldots/\sigma_n$, where $n\geq 0$,
  $\sigma_0,\sigma_n\in\Sigma$, and
  $\sigma_1,\ldots,\sigma_{n-1}\in\Sigma\cup\{\wc\}$.  An
  \emph{anchored path query} $q$ is a path query of the form
  $B_0\dblslash{}B_1\dblslash{}\ldots\dblslash{}B_k$, where $k\geq 0$,
  $B_i$ is a block for $1\leq i \leq k-1$, and $B_0$ is either a block
  that can start with $\wc$ or a single occurrence of $\wc$. Also, in
  case of Boolean anchored path queries $B_k$ is either a block or a
  single occurrence of $\wc$ and in case of unary anchored path
  queries $B_k$ is either a block that can end with $\wc$ or a single
  occurrence of $\wc$.

  We first prove the claim for unary queries (i.e., $i=1$). Let
  $N=|q|$ and $\CS_q=\{t_0,t_1\}$ be constructed as described in
  Section \ref{sec:match-sets-char}. For every node $n$ of $t_1$ whose
  label is not $a_2$ by $\origin(n)$ we denote the node of $q$
  corresponding to $n$. Also, fix $\lambda_1:q'\hookrightarrow t_1$.

  We make several observations. First, $|q'|\leq N$, or otherwise
  there would be no embedding of $q'$ into $t_0$. For the same reason,
  $q'$ does not use the labels $a_1$ and $a_2$. Therefore, if a node
  $n$ of $q'$ is mapped by $\lambda_1$ to a node with label $a_1$ or
  $a_2$, then $\lab_{q'}(n)=\wc$. 

  Next, we show that $\lambda_1$ maps nodes of $q'$ only to those
  nodes of $t_1$ that are not labeled with $a_2$. This is clearly the
  case for the root node and the selecting node of $q'$, that are
  mapped to the root node and the selecting node of $t_1$, and from
  the construction of $t_1$, they have labels different from $a_2$. In
  the following we show the this is the case with other nodes. 

  Let $q'$ be of the form
  $B_0\dblslash{}B_1\dblslash{}\ldots\dblslash{}B_k$. Note that if a
  node $n$ is on the border of $B_j$ (for $0\leq j\leq k$) then from
  the definition of a block $n$ cannot be mapped to $a_2$.  This is
  because $n$ is either a root node, or a selecting node or its label
  is not $\wc$.

  Suppose, that some node of $q'$ belonging to $B_i$ for $0\leq i\leq
  k$, is mapped to a node with label $a_2$ and let $n_1$ and $n_2$ be
  the nodes that are on the borders of $B$. Because $|B|\leq|q'|\leq
  N$ and in $t_1$ nodes labeled with $a_2$ come in sequences of length
  $N$, one of the nodes $n_1$ and $n_2$ needs to be mapped to a node
  labeled with $a_2$. This implies that one of $n_1$ and $n_2$ is
  labeled with $\wc$; a contradiction.

  This shows that $\lambda=\lambda_1\circ\origin$ is a properly
  defined function mapping $N_{q'}$ to $N_q$. We now show that $\lambda$
  is an embedding of $q'$ into $q$. The condition 2 holds because
  $\lambda_1$ preserves the child relation and if nodes $(n_1, n_2)$
  are in $\child_{t_1}$ and both are not labelled with $a_2$ then
  $(\origin(n_1), \origin(n_2))\in \child_{q}$.  The conditions 1, 3,
  and 5 follow from the definition of $\lambda$. For the condition 4,
  take any $n\in N_{q'}$ such that $\sigma=\lab_{q'}(n)\neq\wc$ and
  note that then, $\lambda_1(n)$ in $t_1$ has the same label $\sigma$
  which is different from $a_1$ (because $q$ does not use $a_1$) and
  $a_2$ (as shown above). Therefore the node of $q$ that corresponds
  to $\lambda_1(n)$ is labeled with $\sigma$ as well.

  The proof for Boolean anchored path queries is analogous and it
  suffices to consider the case when $B_k$ is a single occurrence of
  $\wc$. Then indeed an embedding $\lambda_1: q'\hookrightarrow t_1$
  may map the $\wc$-leaf to a node labeled with $a_2$. We note,
  however, that $\lambda_1$ can be easily altered to map the
  $\wc$-leaf to a non $a_2$-node because the $\wc$-leaf is connected
  to with descendant edge and every $a_2$ node in $t_1$ has a
  descendant that is not labeled with $a_2$.\qed
\end{proof}
To show $\Prop1$ it is enough to show the implication from left to
right. Assume $q \subseteq q'$ and note that $\CS_{q} \subseteq
\mathcal{L}(q)$. Therefore, $\CS_{q} \subseteq \mathcal{L}(q')$, which
by Claim~\ref{claim:1}, gives us $q \preccurlyeq q'$. $\Prop2$ follows
directly from Claim~\ref{claim:1}.

\subsection{Conjunctions of anchored path queries}
In our approach to learn twig queries we use path learning algorithms
to infer a set of path queries satisfied in the input sample and then
we combine these path queries into a twig query. Therefore, the
midpoint between learning path queries and twig queries is learning
conjunctions of path queries. We apply this technique only to learn
Boolean twig queries and so we focus only on learning Boolean
conjunctions of path queries. For convenience we use sets of Boolean
path queries to represent conjunctions but a conjunction can also be
seen as a Boolean twig query consisting of path queries meeting at the
root node. The second representation is used to define the semantics
of conjunctions and their characteristic samples.

Because our path learning algorithms infer anchored queries, we
consider only conjunctions of Boolean anchored path queries. Also, if
we have inferred two Boolean path queries $p_1$ and $p_2$, and $p_1$
subsumes $p_2$, then from the point of learning there is no point in
keeping $p_2$ because $p_1$ contains more specific information and
makes $p_2$ redundant. Consequently, we consider only \emph{reduced}
conjunctions i.e., having no two different $p_1,p_2$ such that
$p_1\subseteq p_2$. Naturally the conjunctions must be also
\emph{head-consistent} i.e., any two paths queries in a conjunction
much have the same root label or otherwise we would not be able to
represent it as a twig query. By $\CPath_0$ we denote the class of
conjunctions of Boolean anchored path queries satisfying the
restrictions described above. The use of anchored path queries allows
to prove the following lemma in a manner analogous to the proof of
Lemma~\ref{lemma:prop-anchored-paths}.
\begin{lemma}
  \label{lemma:prop-conj-path}
  Conjunctions of Boolean anchored path queries have the properties
  \Prop1 and \Prop2.
\end{lemma}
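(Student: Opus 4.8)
The plan is to establish both $\Prop1$ and $\Prop2$ for $\CPath_0$ by reducing everything to a single claim analogous to Claim~\ref{claim:1}, exactly as was done for individual anchored path queries. Concretely, for a reduced head-consistent conjunction $Q=\{p_1,\ldots,p_m\}$, I would first define its characteristic sample $\CS_Q$. Since a conjunction is represented as the Boolean twig query whose constituent paths meet at the common root, I would take $\CS_Q=\{t_0,t_1\}$ built by the generic construction of Section~\ref{sec:match-sets-char} applied to that twig: $t_0$ replaces every $\wc$ by $a_0$ and every $\dblslash{}$ by a child edge, while $t_1$ replaces every $\wc$ by $a_1$ and every descendant edge by a path of $a_2$-nodes of length $N=|Q|$. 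The size of $\CS_Q$ is then polynomial in $|Q|$, as required by the first half of $\Prop2$.

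The central step is to prove the following: for any two reduced head-consistent conjunctions $Q,Q'\in\CPath_0$, if $\CS_Q\subseteq\mathcal{L}_0(Q')$ then there is an embedding $\lambda:Q'\hookrightarrow Q$ (where embedding of one conjunction into another means that every path of $Q'$ embeds into $Q$, equivalently the twig for $Q'$ embeds into the twig for $Q$). First I would record the same counting observations as in the path case: because $t_0$ admits an embedding of $Q'$, we get $|Q'|\le N$ and $Q'$ uses neither $a_1$ nor $a_2$. Then, working path by path, I would fix for each constituent $p'$ of $Q'$ an embedding $\lambda_1:p'\hookrightarrow t_1$ and argue, using the block structure of anchored paths together with the fact that in $t_1$ the $a_2$-nodes occur in maximal runs of length $N\ge|p'|$, that no node of $p'$ can be mapped to an $a_2$-node (the Boolean $\wc$-leaf case being handled exactly as in the proof of Claim~\ref{claim:1}, by rerouting the leaf to a non-$a_2$ descendant). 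Composing $\lambda_1$ with $\origin$ then yields an embedding of $p'$ into $Q$; doing this for every path of $Q'$ gives the desired embedding $\lambda:Q'\hookrightarrow Q$.

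Granting this claim, both properties follow formally. For $\Prop2$ I would note that $\CS_Q\subseteq\mathcal{L}_0(Q)$ holds by construction, so if $Q\subseteq Q'$ then $\CS_Q\subseteq\mathcal{L}_0(Q')$; conversely $\CS_Q\subseteq\mathcal{L}_0(Q')$ yields $Q'\hookrightarrow Q$ by the claim, which (since subsumption implies containment, as noted in the excerpt) gives $Q\subseteq Q'$. For $\Prop1$ it suffices to prove the left-to-right implication: $Q\subseteq Q'$ implies $\CS_Q\subseteq\mathcal{L}_0(Q')$, hence $Q'\hookrightarrow Q$, i.e.\ $Q\preccurlyeq Q'$; the reverse implication (subsumption implies containment) is already available.

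The main obstacle I anticipate is the step where one passes from embeddings of a \emph{conjunction} rather than a single path. For a single path the border-node argument is clean because a path has a linear block decomposition; for a conjunction I must ensure that the reducedness and head-consistency hypotheses are genuinely used, and in particular that an embedding of the whole twig of $Q'$ into $t_1$ decomposes into independent per-path embeddings that each avoid the $a_2$-runs. The delicate point is that distinct paths of $Q'$ share the root, so their embeddings must agree on the root of $t_1$; I expect this to cause no trouble since every embedding maps the root to the root, but I would verify that the head-consistency of $Q$ and $Q'$ (common root label) is exactly what makes the shared-root images compatible, and that reducedness is not actually needed for the claim itself but only to make the representative conjunction canonical. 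Beyond this bookkeeping, the argument is a faithful adaptation of the proof of Lemma~\ref{lemma:prop-anchored-paths}, so the remaining verifications of embedding conditions $1$--$4$ are routine.
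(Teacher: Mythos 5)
Your proposal is correct and takes essentially the same route the paper intends: the paper proves Lemma~\ref{lemma:prop-conj-path} ``in a manner analogous to the proof of Lemma~\ref{lemma:prop-anchored-paths}'', i.e., by applying the generic match-set construction to the twig obtained by gluing the conjunction's paths at the root and re-running the block/$a_2$-run argument of Claim~\ref{claim:1} path by path, exactly as you do. One minor slip worth noting: your assertion that $\CS_Q\subseteq\mathcal{L}_0(Q')$ forces $|Q'|\leq N$ is false for the \emph{total} size of the conjunction (twig embeddings into $t_0$ need not be injective, so $Q'$ may have more nodes than $t_0$), but this is harmless since your block argument only uses the per-path bound $|p'|\leq N$, which does hold because embeddings of individual path queries are injective.
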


\subsection{Path-subsumption-free twig queries}
As mentioned previously, our learning algorithms for twig queries
attempt to construct the query $q$ by combining the path queries from
a conjunction inferred beforehand. Because we infer a reduced set of
path queries, the constructed Boolean twig query $q$ has no two
$p_1,p_2\in\Paths(q)$ such that $p_1\subseteq p_2$, where $\Paths(q)$
is the set of Boolean path queries on paths from the root to all
leaves of $q$. Naturally, all path queries in $\Paths(q)$ need to be
anchored. Formally, a Boolean twig query $q$ is
\emph{path-subsumption-free} iff $\Paths(q)$ is a reduced set of
Boolean anchored path queries and by $\PTwig_0$ we denote the class of
Boolean path-subsumption-free twig queries.

The restrictions are relaxed slightly for unary twig queries and
reflect our learning algorithm that first infers a unary anchored
path, and next, decorates it with elements of $\PTwig_0$ used as
filter expressions. Recall that the selecting path in a unary twig
query is the path query on the path from the root node to the
selecting node. Formally, a unary twig query $q$ is
\emph{path-subsumption-free} iff the unary path query from the root
node to the selecting node of $q$ is anchored and every Boolean path
query on the path ending at a (non-selecting) leaf node and beginning
at the closest node on the selecting path is anchored. By $\PTwig_1$
we denote the class of unary path-subsumption-free twig queries.

The classes of path-subsumption-free twig queries may seem at first
very limited. We note, however, that a twig query belongs to our class
if every leaf label is different or every pair of leaves with the same
label cannot be compared with $\preccurlyeq$ (and all paths are
anchored). This simple sufficient condition yields a rather large
class of twig queries used in practice, especially if we consider the
following remark. One of the advantages of considering an infinite set
of labels $\Sigma$ is the ability to capture textual values (stored in
the leaves of a tree). Then, non-selecting leaves of tree patterns are
used for equality tests of text values, and rarely the same value is
used to make an equality test (on similar paths).

\begin{lemma}
  \label{lemma:prop-psf-twig}
  Path-subsumption-free twig queries have the properties \Prop1 and
  \Prop2.
\end{lemma}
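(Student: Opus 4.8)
The plan is to follow the template established in the proof of Lemma~\ref{lemma:prop-anchored-paths}, reducing both properties $\Prop1$ and $\Prop2$ to a single embedding claim. Specifically, I would prove the following: for any $i\in\{0,1\}$ and any two $q,q'\in\PTwig_i$, if $\CS_q\subseteq\mathcal{L}_i(q')$, then there exists an embedding $\lambda:q'\hookrightarrow q$. Once this claim is established, $\Prop2$ follows directly (the match set $\CS_q$ witnesses containment via subsumption), and $\Prop1$ follows by the same short argument used after Claim~\ref{claim:1}: assuming $q\subseteq q'$ and noting $\CS_q\subseteq\mathcal{L}_i(q)\subseteq\mathcal{L}_i(q')$ yields $q\preccurlyeq q'$, while the converse (subsumption implies containment) is already noted in the preliminaries for twigs.

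To prove the embedding claim I would reuse the machinery from Claim~\ref{claim:1}. Let $N=|q|$, fix $\lambda_1:q'\hookrightarrow t_1$, and use the $\origin$ function mapping non-$a_2$ nodes of $t_1$ back to nodes of $q$. The first steps carry over verbatim: $|q'|\leq N$ (else no embedding into $t_0$), $q'$ avoids the fresh labels $a_1,a_2$, and any node of $q'$ mapped to an $a_1$- or $a_2$-labeled node of $t_1$ must carry label $\wc$. The crux, as in the path case, is showing that $\lambda_1$ sends nodes of $q'$ only to non-$a_2$ nodes of $t_1$, so that $\lambda=\lambda_1\circ\origin$ is well defined. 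The key structural fact to exploit is that $q'$ is path-subsumption-free, hence every root-to-leaf path of $q'$ is an anchored path query decomposed into blocks $B_0\dblslash\ldots\dblslash B_k$. The block/border argument from Claim~\ref{claim:1} then applies along each such path: since every descendant edge in $t_1$ is realized as an $a_2$-chain of length exactly $N$ and each block has length at most $|q'|\leq N$, a block node landing inside an $a_2$-chain would force a border node — which is either the root, a leaf, a selecting node, or a non-$\wc$ node — to be mapped to $a_2$, contradicting anchoredness. For the unary case I would handle the selecting path and each maximal Boolean subpath hanging off it separately, matching exactly the definition of $\PTwig_1$; the Boolean case requires the same small patch as before for a $\wc$-leaf attached by a $\dblslash$-edge, rerouting it to a non-$a_2$ descendant.

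Having established that $\lambda$ is well defined, verifying it is an embedding proceeds as in Claim~\ref{claim:1}: conditions $1$, $3'$, and $5$ follow from the definition of $\lambda$ and the fact that $\origin$ respects the overall ancestor structure; condition $2$ holds because $\lambda_1$ preserves the child relation and consecutive non-$a_2$ nodes of $t_1$ originate from child-related nodes of $q$; and condition $4$ holds because a non-$\wc$ label $\sigma$ of $q'$ is preserved by $\lambda_1$ (as $\sigma\notin\{a_1,a_2\}$) and hence carried back unchanged by $\origin$.

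The main obstacle, and the genuinely new ingredient beyond Claim~\ref{claim:1}, is the branching structure of twig queries. In the path case there is a single spine, but here a node of $q'$ may have several children, and the block-border argument must be run independently along every root-to-leaf path while ensuring the resulting node map is globally consistent (i.e.\ a node of $q'$ lying on several paths receives a single well-defined image). The reason this goes through is precisely path-subsumption-freeness combined with anchoredness: the decomposition into blocks is a per-path property, and since $\origin$ is a genuine function on the non-$a_2$ nodes of $t_1$, agreement across paths is automatic once each individual node is shown to avoid the $a_2$-chains. I would therefore take care to phrase the border argument as a statement about arbitrary nodes of $q'$ (via the block containing them on any path through them) rather than about a linear spine, and confirm that the injectivity subtleties of path embeddings play no role since twig embeddings need not be injective.
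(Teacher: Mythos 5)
Your overall strategy is the right one, and it is the paper's: the paper gives no separate proof of this lemma, but states (in the remark following it) that the proof uses only the fact that path-subsumption-free twigs are built from anchored paths, i.e.\ it reruns the Claim~\ref{claim:1} machinery per root-to-leaf path exactly as you propose, with $\lambda=\origin\circ\lambda_1$ giving global consistency for free. Your treatment of the unary case (selecting path plus hanging Boolean paths, matching the definition of $\PTwig_1$) and of the $\wc$-leaf rerouting patch is also correct.

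There is, however, one concrete step that does \emph{not} ``carry over verbatim'' as you claim: the bound $|q'|\leq N$. For path queries this follows because embeddings of paths are injective, but twig embeddings are not --- a point you yourself note at the end, without noticing that it undermines this step. Indeed the paper's own Example~\ref{ex:1} exhibits a path-subsumption-free twig of size exponential in $n$ (the perfect binary tree over $a_i,b_i$) that embeds into trees of linear size, so $\CS_q\subseteq\mathcal{L}_i(q')$ gives no bound whatsoever on $|q'|$. This matters because you derive the crucial inequality $|B|\leq|q'|\leq N$ from it, and without that inequality the border argument (a block cannot traverse an $a_2$-chain of length $N$ without a border landing on $a_2$) collapses. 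The fix is local but must be stated: what the embedding $\lambda_0:q'\hookrightarrow t_0$ does give is a \emph{depth} bound, since every edge of $q'$ is mapped to at least one child step, so every root-to-leaf path of $q'$ has at most as many nodes as the deepest path of $t_0$, hence at most $N$. Blocks lie along root-to-leaf paths, so $|B|\leq N$ follows, and the rest of your argument then goes through unchanged.
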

We point out that in the proof of Lemma~\ref{lemma:prop-psf-twig} we
only use the fact that path-subsumption-free twig queries are
constructed from anchored paths. The other restriction, namely that
the path queries in $\Paths(q)$ cannot subsume one another, is not
used in this proof and it is essential only for the proper work of our
learning algorithms. Our recent results show that learning is possible
without this restriction and we intend to present these findings in
the journal version of the paper.

\section{Learning unary path queries}
\label{sec:simple-path-expr}

In Figure~\ref{fig:inference-apath1} we present a learning algorithm
for the learning setting $\APATH_1=(\Tree_1,\APath_1,\mathcal{L}_1)$
inspired by and extending several learning algorithms for regular
string patterns~\cite{Angluin79,Shinohara82}
(cf. Section~\ref{sec:related-work} for more details). Recall that
$\SelPath(t)$ stands for the path from the root node to the selecting
node of $t$ and extend it to samples $\SelPath(S)=\{\SelPath(t)\mid
t\in S\}$. The algorithm begins with a universal path query $\wc
\dblslash{} \wc$ and considers only the paths from the root to the
selected nodes in the input sample. It constructs the path query in
three stages.

\begin{figure}[htb]
  \resetLineNoCounter
  \begin{tabbing}
    xxx\=xx\=xx\=xx\=xx\=xx\=xx\=xx\=mmm\kill
    {\bf algorithm} $\learner\APATH_1(S)$\\
    {\bf Input:} a sample $S\subseteq\Tree_1$ of decorated trees\\
    {\bf Output:} a minimal $p\in\APath_1$ such that $S \subseteq \mathcal{L}_1(p)$\\
    \lineNo \> $w \colonequals \min_{\leq_{can}}(\SelPath(S))$\\ 
    \lineNo \> {\bf let} $w$ be of the form $a_0/a_1/\cdots/a_n$\\
    \lineNo \> $p \colonequals \wc \dblslash{} \wc $\\
    \lineNo \>{\bf foreach} subpath $u$ of $a_1/a_2/\cdots/a_{n-1}$\\
    \>in the order of decreasing lengths { \bf do }\\
    \lineNo \> \> replace in $p$ any $\dblslash{}$-edge by 
    $\dblslash{}u\dblslash{}$ 
    as long as $S\subseteq\mathcal{L}_1(p)$\\
    \lineNo \> {\bf let} $p$ be of the form $b_0\dblslash{}p_0\dblslash{}b_1$\\
    
    \lineNo \> {\bf if} $S\subseteq\mathcal{L}_1(p\{b_0\gets a_0\})$ {\bf then}\\
    \lineNo \> \> $p\colonequals p\{b_0\gets a_0\}$\\
    \lineNo \> {\bf if} $S\subseteq\mathcal{L}_1(p\{b_1\gets a_n\})$ {\bf then}\\
    \lineNo \> \> $p\colonequals p\{b_1\gets a_n\}$\\

    \lineNo \> {\bf foreach} descendant edge $\alpha$ in $p$ {\bf do}\\

    \lineNo \> \> find maximal $\ell$ s.t.\  $S\subseteq \mathcal{L}_1(p\{\alpha\gets\dblslash{}(\wc/)^{\ell}\})$\\
    \lineNo \> \> {\bf if} $S\subseteq \mathcal{L}_1(p\{\alpha\gets/(\wc/)^{\ell}\})$ {\bf then} \\
    \lineNo \> \> \> $p \colonequals p\{\alpha\gets/(\wc/)^{\ell}\}$\\
    \lineNo \> {\bf return} $p$
  \end{tabbing}
  \caption{Learning algorithm for $\APATH_1$.}
  \label{fig:inference-apath1}
\end{figure}

In the first stage (lines 4 through 6) the algorithm attempts to
identify a collection of factors, essentially path fragments, that are
mutually common to every path in $\SelPath(S)$. Note that if a factor
is present in every path, then it is also present in the
$\leq_\can$-minimal path $w$. The candidate query $p$ is gradually
refined with the factors and the invariant is that these factors are
mutually present on every path in $\SelPath(S)$ and in the specified
order. For every new candidate $w'$, $\learner\APATH_1$ attempts to
find a place where $w'$ can be inserted and yield a path query $p'$
consistent with $S$.

In the second stage (lines 8 through 11), the algorithm takes the
query $p$ and attempts to specialise the first and the last
occurrences of wildcard i.e., replace them with the corresponding
symbol taken from $w$. Here, $p\{x\gets e\}$ creates a copy of $p$ and
replaces in it the reference $x$ by expression $e$ (the original $p$
remains unchanged). In the third stage (lines 12 through 16) the
algorithm attempts to specialize every $\dblslash{}$-edge in $p$ i.e.,
replace it with a maximally long sequence $/\wc/\wc/\ldots/\wc$.

\begin{example}
\label{ex:execution-sample}
In this example we show the execution of the algorithm
$\learner\APATH_1$ on the sample $\{t_1,t_2,t_3\}$ presented in
Figure~\ref{fig:execution-sample} together with path queries
constructed during the execution.
  \begin{figure}[htb]
    \centering
    \begin{tikzpicture}[yscale=0.75,xscale=1.25]
      \begin{scope}[xshift=0cm]
        \node at (-0.5,0) (q) {$t_1:$};
        \node at (0,0) (r) {$r$};
        \node at (0,-1) (n1) {$a$}     edge[-] (r);
        \node at (0,-2) (n2) {$b$}     edge[-] (n1);
        \node at (-0.5,-2) (n2') {$c$}     edge[-] (n1);
        \node at (0,-3) (n3) {$c$}     edge[-] (n2);
        \node[draw,outer sep = 3pt] at (0,-4) (n4) {$a$}     edge[-] (n3);
      \end{scope}

      \begin{scope}[xshift=1.5cm]
        \node at (-0.5,0) (q) {$t_2:$};
        \node at (0,0) (r) {$r$};
        \node at (0,-1) (n1) {$b$}     edge[-] (r);
        \node at (0,-2) (n2) {$b$}     edge[-] (n1);
        \node at (0,-3) (n3) {$c$}     edge[-] (n2);
        \node at (0.5,-3) (n3') {$b$}     edge[-] (n2);
        \node[draw,outer sep = 3pt] at (0,-4) (n4) {$c$}     edge[-] (n3);
      \end{scope}
      \begin{scope}[xshift=3cm]
        \node at (-0.5,0) (q) {$t_3:$};
        \node at (0,0) (r) {$r$};
        \node at (0,-1) (n1) {$a$}     edge[-] (r);
        \node at (0,-2) (n2) {$b$}     edge[-] (n1);
        \node at (0.5,-2) (n2') {$a$}     edge[-] (n1);
        \node at (0,-3) (n3) {$c$}     edge[-] (n2);
        \node at (0,-4) (n4) {$b$}     edge[-] (n3);
        \node[draw,outer sep = 3pt] at (0,-5) (n5) {$c$}     edge[-] (n4);
      \end{scope}
      
      \begin{scope}[xshift=6cm]
        \node at (-0.5,0) (q) {$p_0:$};
        \node at (0,0) (r) {$\wc$};
        \node at (0,-1) (n2) {$b$}     edge[-,double] (r);
        \node at (0,-2) (n3) {$c$}     edge[-] (n2);
        \node[draw,outer sep = 3pt] at (0,-3) (n4) {$\wc$}     edge[-,double] (n3);
      \end{scope}

      \begin{scope}[xshift=7.5cm]
        \node at (-0.5,0) (q) {$p_1:$};
        \node at (0,0) (r) {$r$};
        \node at (0,-1) (n2) {$b$}     edge[-,double] (r);
        \node at (0,-2) (n3) {$c$}     edge[-] (n2);
        \node[draw,outer sep = 3pt] at (0,-3) (n4) {$\wc$}     edge[-,double] (n3);
      \end{scope}

      \begin{scope}[xshift=9cm]
        \node at (-0.5,0) (q) {$p_2:$};
        \node at (0,0) (r) {$r$};
        \node at (0,-1) (n1) {$\wc$}     edge[-] (r);
        \node at (0,-2) (n2) {$b$}     edge[-] (n1);
        \node at (0,-3) (n3) {$c$}     edge[-] (n2);
        \node[draw,outer sep = 3pt] at (0,-4) (n4) {$\wc$}     edge[-,double] (n3);
      \end{scope}
    \end{tikzpicture}
    \caption{A sample and the constructed queries.}
    \label{fig:execution-sample}
  \end{figure}

  In the first stage the algorithm identifies a factor $b/c$ present
  in every selecting path and the resulting path query is
  $p_0=\wc\dblslash{}b/c\dblslash{}\wc$. There is no other common
  factor and the algorithm moves to the second stage where it
  specializes the root node of $p_0$ obtaining this way
  $p_1=r\dblslash{}b/c\dblslash{}\wc$; the selecting node cannot be
  specialized because the selected nodes of $t_1$ and $t_2$ have two
  different labels, $a$ and $c$ resp. Finally, the algorithm attempts
  to specialize the descending edges. Only the top one can be replaced
  by a $\wc$-path of length $1$, yielding
  $p_2=r/\wc/b/c\dblslash{}\wc$, which is also the final result of the
  learning algorithm. \qed
\end{example}

There are aspects of the algorithm that are not fully specified e.g.,
from two different subpaths of the same length which one should be
chosen first in the loop in line 4. We do not enforce any particular
choice because it is inessential from the theoretical point (soundness
and completeness) and in practical implementations the choice could be
made with the help of heuristics.
\begin{example}
\label{ex:execution-sample-second}
Consider the sample consisting of the two trees: $r(a(b(c(d))))$ and
$r(b(c(a(b(d)))))$.
In the first stage the algorithm may identify either a factor $a/b$ or
$b/c$ but not both of them. As a consequence the algorithm may return
one of two possible queries $p_1=r\dblslash{}ab\dblslash{}d$ or
$p_2=r\dblslash{}bc\dblslash{}d$.  In order to make the algorithm
deterministic we may enforce some order of processing among candidate
factors of the same length e.g. from left to right. \qed
\end{example}

We observe that $S\subseteq \mathcal{L}_1(p)$ is an invariant
maintained throughout $\learner\APATH_1$ and with a simple analysis
one can show that $\learner\APATH_1$ is sound for $\APATH_1$. But what
makes this algorithm particularly interesting is the following.
\begin{lemma}
  \label{lemma:minimal-learner1}
  The algorithm $\learner\APATH_1$ returns a minimal anchored path
  query consistent with the input sample.
\end{lemma}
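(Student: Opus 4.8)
The plan is to show that the query $p$ returned by $\learner\APATH_1$ is minimal by proving that any anchored path query $p'$ with $S\subseteq\mathcal{L}_1(p')$ and $p'\subseteq p$ must in fact be equivalent to $p$. By Property $\Prop1$ (Lemma~\ref{lemma:prop-anchored-paths}), $p'\subseteq p$ is equivalent to the existence of an embedding $p\hookrightarrow p'$, so equivalently it suffices to show that whenever $S\subseteq\mathcal{L}_1(p')$ and $p\hookrightarrow p'$, we also have $p'\hookrightarrow p$, i.e. $p'\equiv p$. I would approach this by analyzing the three stages of the algorithm and arguing that at the end of each stage $p$ is as specialised as possible in the respective dimension, given the invariant $S\subseteq\mathcal{L}_1(p)$ maintained throughout.

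First I would fix the structure of $p$ as $b_0\dblslash{}p_0\dblslash{}b_1$ (using the block decomposition from the proof of Claim~\ref{claim:1}) and fix a hypothetical strictly smaller consistent $p'$, together with an embedding $\mu:p\hookrightarrow p'$ witnessing $p'\subseteq p$. I would then compare $p$ and $p'$ along three independent axes, matching the three stages. \emph{Factors (stage 1):} since $p'$ is consistent with $S$ it is consistent with the $\leq_\can$-minimal selecting path $w$, so every maximal $\wc$-free factor of $p'$ appears as a subpath of $w$; the first loop greedily inserts every such factor that preserves consistency, in order of decreasing length, so $p$ already contains, in the correct relative order, every factor that $p'$ could contribute. \emph{Endpoint labels (stage 2):} if $p'$ had a concrete label at its root or selecting leaf that is strictly below $\wc$ in the matching order, the consistency test $S\subseteq\mathcal{L}_1(p\{b_0\gets a_0\})$ (resp.\ $b_1\gets a_n$) would have succeeded, and stage 2 would already have specialised that endpoint; so $p$ is no less specialised at its endpoints than $p'$. \emph{Descendant gaps (stage 3):} for each $\dblslash{}$-edge, the algorithm finds the maximal $\ell$ for which a $\wc$-path of that length is consistent and converts $\dblslash{}$ to a child chain when possible, so any descendant edge in $p$ cannot be tightened to a shorter or child-bounded gap while staying consistent, hence $p'$ cannot be stricter there either.

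Combining the three comparisons, any node or edge where $p'$ is strictly more specific than $p$ would contradict the maximality achieved by the corresponding stage, given that $p'$ is itself consistent with $S$; so the embedding $\mu$ must be onto in the sense needed to build a reverse embedding $p'\hookrightarrow p$, giving $p'\equiv p$ via $\Prop1$. I expect the main obstacle to be the interaction between the stages: specialising endpoints or contracting descendant gaps can in principle destroy the invariant established by an earlier stage, so I must argue that the greedy choices are \emph{locally} maximal in a way that remains valid after the later specialisations, and in particular that the ordered, decreasing-length insertion of factors in stage 1 really captures \emph{every} factor forced by $S$ rather than merely a maximal antichain of them. The anchoring restriction is what controls this: because no inner $\wc$ is incident to a $\dblslash{}$-edge, the ``jumps'' an embedding may perform between descendant-connected nodes are bounded (as noted in the discussion preceding Lemma~\ref{lemma:prop-anchored-paths}), which prevents a smaller $p'$ from hiding extra structure inside a single descendant gap that $p$ failed to discover.
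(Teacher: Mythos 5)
Your overall direction matches the paper's: reduce minimality to subsumption via \Prop1, suppose a consistent anchored path query $p'$ strictly below $p$, and derive a contradiction from the exhaustiveness of the algorithm's stages. But there is a genuine gap, and it is exactly the one you flag yourself: the ``interaction between the stages.'' The paper resolves this with a reduction step that your proposal is missing. It observes that any $p'$ subsumed by $p$ can be written as $p'=p\theta$, where $\theta$ is a substitution that relabels some $\wc$-nodes and replaces some $\dblslash$-edges by path fragments, and that $\theta$ decomposes into a composition of \emph{atomic} rewritings: $\dblslash\mapsto/$, \ $\dblslash\mapsto\dblslash{}B\dblslash{}$ for a block $B$, \ $\wc\mapsto a$, and $\dblslash\mapsto/\wc/\cdots/\wc/$. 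Applying only the first atomic operation gives $q'=p\theta_1$ with $p'\preccurlyeq q'\preccurlyeq p$ and $q'\neq p$, and since subsumption implies containment, $q'$ is still consistent with $S$. Hence one may assume that $p'$ differs from $p$ by a \emph{single} atomic operation, and each operation type is then matched against the stage of the algorithm that would have performed it. This localization is what eliminates the interaction problem: you never need to show that stage 1 captures ``every factor'' of an arbitrary consistent refinement (which, as you suspect, does not follow from greediness alone), only that a single consistent insertion, relabeling, or contraction applicable to the \emph{final} $p$ would already have been carried out. Without that reduction, your closing paragraph (``combining the three comparisons \ldots the embedding $\mu$ must be onto'') is an assertion, not an argument.

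Second, your three axes do not cover all the ways $p'$ can be strictly more specific than $p$. Your stage-2 axis handles only the root and the selecting leaf, and your stage-3 axis handles only the type and length of descendant gaps. The case where $p'$ specializes an \emph{inner} wildcard of $p$ to a concrete label falls through: such wildcards exist in the output, since stage 3 itself introduces $(\wc/)^{\ell}$ chains (e.g., the inner $\wc$ of $p_2=r/\wc/b/c\dblslash{}\wc$ in Example~\ref{ex:execution-sample}), and no stage of the algorithm ever relabels them. The paper covers this case via the atomic operation $\wc\mapsto a$ and attributes it to stage 1: a consistent relabeling of a chain wildcard would create a strictly longer factor common to all selecting paths, which the decreasing-length loop in lines 4--6 would have discovered and inserted, contradicting the form of $p$. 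Your proof needs both the localization step and this missing case before it goes through.
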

We prove the claim below, which by \Prop1 for $\APath_1$ is equivalent
to the lemma above.
\begin{claim}
  If $\learner\APATH_1(S)$ returns $p$, then there is no unary
  anchored path query $q\neq p$ such that $q\preccurlyeq p$ and
  $S\subseteq\mathcal{L}_1(q)$.
\end{claim}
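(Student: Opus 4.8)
The plan is to prove the subsumption-flavored claim directly, which by $\Prop1$ for $\APath_1$ is equivalent to the minimality lemma. I would argue by contradiction: assume some unary anchored path query $q\neq p$ satisfies $q\preccurlyeq p$ and $S\subseteq\mathcal{L}_1(q)$, and fix a witnessing embedding $\mu:p\hookrightarrow q$. Since both $p$ and $q$ are path queries, $\mu$ is injective and monotone along the (linear) path, sends $\root_p$ to $\root_q$ and, by condition~$5$, the selecting leaf of $p$ to the selecting leaf of $q$. Consistency gives, for every $t\in S$, an embedding $q\hookrightarrow t$; as $q$ is a path, it embeds into $\SelPath(t)$, and in particular into the $\leq_{\can}$-minimal selecting path $w=a_0/a_1/\cdots/a_n$ used by the algorithm. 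This bounds and locates all the material of $q$ relative to $w$.

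The core of the argument is a ``no applicable refinement remains'' principle. I would show that any place where $q$ is strictly more specific than $p$ can be isolated into a single elementary operation of one of the three stages, yielding an intermediate query $p'$ with $q\preccurlyeq p'\preccurlyeq p$ and $p'\neq p$. Since subsumption implies containment, $S\subseteq\mathcal{L}_1(q)\subseteq\mathcal{L}_1(p')$, so $p'$ is consistent and strictly subsumed by $p$; but this is exactly an operation the algorithm tests and keeps whenever it preserves the invariant $S\subseteq\mathcal{L}_1(\cdot)$, so its failure to apply $p'$ contradicts termination with $p$. Thus it suffices to enumerate the ways $q$ can refine $p$ along $\mu$ and match each to a stage.

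The case analysis runs as follows. First, if $q$ carries a concrete label $\sigma$ at a node corresponding (via $\mu$) to an \emph{interior} $\wc$-node of $p$, or inserts concrete nodes inside a gap where $p$ has a $\dblslash$-edge, then because $q$ embeds into every selecting path, the corresponding factor of $q$ is a subpath of the interior $a_1/\cdots/a_{n-1}$ of $w$ and is common to all of $\SelPath(S)$; hence it was a candidate in the stage-1 loop, and inserting it preserves consistency, contradicting the termination of stage~1. Second, a concrete label where $p$ has a $\wc$ at its first or last node is exactly what stage~2 attempts via $p\{b_0\gets a_0\}$ and $p\{b_1\gets a_n\}$, so such a refinement contradicts the failure of those tests. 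Third, if some edge that is $\dblslash$ in $p$ is realized in $q$ as a child-path (extra $\wc$-nodes and/or a leading child edge), this is precisely stage~3's specialization $\alpha\gets\dblslash(\wc/)^{\ell}$ and its leading-$/$ variant; since the algorithm chose the maximal $\ell$ and tested the $/(\wc/)^{\ell}$ replacement, a further child-refinement available in $q$ again contradicts termination. Anchoredness is used throughout to guarantee that $\wc$-nodes never sit at the interior border of a block adjacent to a $\dblslash$, so that each such refinement localizes cleanly inside a single block or gap and the intermediate $p'$ is a well-formed anchored query.

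The main obstacle is the first case together with the interaction between stages: I must argue that the extra label/length material exposed by $q$ is genuinely \emph{common to all} selecting paths (so it really was a factor the algorithm considered, not just a coincidence of $w$) and that extracting \emph{one} such factor produces an intermediate query lying between $q$ and $p$ in the $\preccurlyeq$-order without disturbing the refinements elsewhere. This is exactly where the monotonicity of $\mu$ and the block decomposition $B_0\dblslash B_1\dblslash\cdots\dblslash B_k$ of anchored paths are essential: they confine each discrepancy to one block or one descendant gap, preventing refinements from ``sliding'' concrete labels across $\dblslash$-edges and thereby ensuring that a single stage operation suffices to derive the contradiction.
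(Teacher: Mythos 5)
Your proposal is correct and takes essentially the same approach as the paper's proof: argue by contradiction, decompose the discrepancy between $q$ and $p$ into atomic refinement operations, peel off a single one to obtain an intermediate anchored query $p'$ with $q\preccurlyeq p'\preccurlyeq p$ that is still consistent with $S$ (since subsumption implies containment), and contradict termination by matching that operation to the algorithm stage that would have applied it. The only difference is cosmetic, namely how the atomic operations are assigned to stages (you send border-label specialization to stage~2 and $\dblslash\mapsto/$ to stage~3, while the paper folds the first three operation types into its stage-1 factor argument), which does not change the argument.
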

\begin{proof}
  Suppose otherwise and take a unary anchored path query $q\neq p$
  having an embedding $\lambda:p\hookrightarrow q$. We note that $q$
  can be viewed as result of applying a substitution $\theta$ i.e.,
  $q=p\theta$, which substitutes in $p$ some the labels of $\wc$-nodes
  with labels in $\Sigma$ and replaces some of the $\dblslash$-edges
  with path queries. This substitution can be decomposed into a
  composition $\theta=\theta_1\circ\theta_2\circ\ldots\circ\theta_k$
  of atomic operations, which for brevity we present here as rewriting
  rules: 1) $\dblslash{}\mapsto/$ replacing a $\dblslash$-edge with a
  child edge, 2) $\dblslash{}\mapsto\dblslash{}B\dblslash{}$ replacing
  a $\dblslash$-edge with a block $B$ (cf. proof of
  Claim~\ref{claim:1}), 3) $\wc\mapsto a$ changing the $\wc$ label of
  a node to some $a\in\Sigma$, 4)
  $\dblslash{}\mapsto/\wc/\wc/\ldots/\wc/$ replacing a
  $\dblslash$-edge with a $\wc$-path. Now, if we take the path query
  $q'=p\theta_1$, then $q\preccurlyeq q' \preccurlyeq p$ and $q'\neq
  p$. Consequently, it suffices to assume that $p$ is obtained from
  $q$ by applying just one atomic substitution $\theta^\circ$.

  Essentially, the first three types of atomic operations allow to
  identify a new factor or a longer factor that would have been
  discovered and properly incorporated into the resulting query during
  the execution of $\learner\APATH_1(S)$ in lines 4-6. The last type,
  $\dblslash{}\mapsto/\wc/\ldots/\wc$ allows to identify a descending edge that
  would have been converted to a $\wc$-path in lines 13-17. These
  arguments show that $p$ could not have been the result of
  $\learner\APATH_1(S)$; a contradiction.\qed
\end{proof}

We argue that Lemmas~\ref{lemma:prop-anchored-paths}
and~\ref{lemma:minimal-learner1} imply completeness of
$\learner\APATH_1$ w.r.t.\ $\APATH_1$. Indeed, if $\CS_q\subseteq S$
and $\learner\APATH_1(S)$ returns $p$, then $q\subseteq p$ because
$\CS_q\subseteq\mathcal{L}_1(p)$ but there is no query $q'\subset p$
that $S\subseteq\mathcal{L}(q')$. Hence, $q$ and $p$ are equivalent.
\begin{theorem}
  \label{thm:apath1-learnable}
  Anchored path queries are learnable in polynomial time and data from
  positive examples \textnormal{(}i.e., in the setting $\APATH_1$\textnormal{)}.
\end{theorem}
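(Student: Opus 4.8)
The plan is to exhibit $\learner\APATH_1$ (Figure~\ref{fig:inference-apath1}) as the witnessing learning algorithm and to verify the three requirements of Definition~\ref{def:1}: polynomial running time, soundness, and completeness with a polynomially sized characteristic sample. Most of the conceptual content is already packaged in Lemmas~\ref{lemma:prop-anchored-paths} and~\ref{lemma:minimal-learner1}, so the proof is chiefly a matter of assembling these pieces and supplying the running-time analysis.

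For soundness I would observe that $S\subseteq\mathcal{L}_1(p)$ is an invariant of the algorithm: it holds at initialisation because $p=\wc\dblslash{}\wc$ is the universal query, and each of the three refinement stages (lines 4--6, 7--11, and 12--16) commits a modification only after explicitly testing that the invariant is preserved. Hence the returned $p$ is consistent with $S$. Moreover, in the positive-only setting the universal query is consistent with every sample, so a consistent query always exists and the $\Null$ branch is never exercised. For the running time I would argue that each membership test $S\subseteq\mathcal{L}_1(p)$ is decidable in polynomial time, since embeddings of a path query into a (decorated) tree are injective and can be matched along the unique root-to-leaf path; the total cost is then the number of tests times their individual cost. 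The first stage iterates over the $O(n^2)$ subpaths of the $\leq_{\can}$-minimal selecting path (where $n$ is its length), the second stage performs a constant number of tests, and the third stage, for each of at most $n$ descendant edges, searches for a maximal $\wc$-path length again bounded by $n$. This yields a polynomial bound in the size of $S$.

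For completeness I would take $\CS_q$ to be the match set of Section~\ref{sec:match-sets-char}; by property \Prop2 (Lemma~\ref{lemma:prop-anchored-paths}) its size is polynomial in $|q|$, supplying the required $\poly(|q|)$ bound, and it satisfies $q\subseteq q'$ iff $\CS_q\subseteq\mathcal{L}_1(q')$ for every $q'\in\APath_1$. Now fix any $S$ with $\CS_q\subseteq S\subseteq\mathcal{L}_1(q)$ and let $p=\learner\APATH_1(S)$. Since $S\subseteq\mathcal{L}_1(p)$ we have $\CS_q\subseteq\mathcal{L}_1(p)$, so \Prop2 gives $q\subseteq p$. Conversely, $q$ is consistent with $S$ (because $S\subseteq\mathcal{L}_1(q)$) and $q\subseteq p$; by Lemma~\ref{lemma:minimal-learner1} the query $p$ is minimal consistent with $S$, so no query strictly contained in $p$ can be consistent with $S$, forcing $q\equiv p$. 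This is exactly the completeness condition.

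The main obstacle is not any single deep step but rather making the running-time analysis airtight: one must confirm that path-query membership is genuinely polynomial (relying on injectivity of path embeddings, as noted after the definition of embeddings) and that the loops in the three stages terminate after polynomially many consistency checks. The correctness content is already delivered by \Prop1/\Prop2 and by the minimality of the algorithm's output, so once the polynomial bounds are in place the theorem follows by a direct appeal to Definition~\ref{def:1}.
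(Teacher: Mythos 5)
Your proposal is correct and follows essentially the same route as the paper: soundness from the invariant $S\subseteq\mathcal{L}_1(p)$, and completeness by combining the match-set property \Prop2 of Lemma~\ref{lemma:prop-anchored-paths} (giving $q\subseteq p$) with the minimality of the output from Lemma~\ref{lemma:minimal-learner1} (forcing $q\equiv p$). The only difference is that you spell out the polynomial running-time analysis, which the paper leaves implicit.
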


\section{Learning Boolean path queries}
\label{sec:simple-path-patterns}
Learning Boolean path queries is more challenging than learning unary
path queries. In decorated trees, which are examples for learning
unary queries, the selected nodes unambiguously indicate the path to
be matched by the query. The examples for learning Boolean path query
are trees with no indication of the path the constructed query should
match. To address this problem we devise an algorithm that infers a
conjunction of Boolean anchored path queries that are satisfied in the
given sample. Recall that $\APath_0$ is the class of Boolean anchored
path queries and $\CPath_0$ is the class of reduced and
head-consistent conjunctions of Boolean anchored path queries
(represented as sets of Boolean path queries). The corresponding
learning settings are $\APATH_0=(\Tree_0,\APath_0,\mathcal{L}_0)$ and
$\CPATH_0=(\Tree_0,\CPath_0,\mathcal{L}_0)$, where $\mathcal{L}_0$
interprets a set of path queries $P$ as a the twig query obtained by
gluing the root nodes together.

Figure~\ref{fig:path-pattern-inference} contains the learning
algorithms for $\CPATH_0$ and $\APATH_0$. First, we introduce
$\learner\APATH_0^\ast$, a helper learner derived from
$\learner\APATH_1$, which infers a minimal Boolean anchored path query
that is satisfied by the given path $u$ and every tree in the input
sample. Note that to ensure that the output is a Boolean anchored
query $\learner\APATH_0^\ast$ skips the specialization of the last
$\dblslash{}$-edge if doing so would yield a query that is not
anchored (i.e., ending with $\wc$ not preceded immediately by
$\dblslash{}$). The purpose of taking the initial path $u$ from the
input is the ability to consider every path in $S$ as the word in
which to search for common factors.
\begin{figure}[htb]
  \resetLineNoCounter
  \begin{tabbing}
    xxx\=xx\=xx\=xx\=xx\=xx\=xx\=xx\=mmm\kill
    {\bf algorithm} $\learner\APATH_0^\ast(u,S)$\\
    {\bf Input:} a path $u$ and a sample $S\subseteq\Tree_0$ of trees\\
    {\bf Output:} a minimal $p\in\APath_0$  
    s.t. $S\cup\{u\}\subseteq\mathcal{L}_0(p)$\\
    This algorithm is obtained from $\learner\APATH_1$ by: \\
    \> $\bullet$ initializing $w$ to $u$ (line~1) \\
    \> $\bullet$ replacing every
    $S\subseteq\mathcal{L}_1(p)$
    by $S\cup\{w\}\subseteq \mathcal{L}_0(p)$\\
    \> $\bullet$ skipping the execution of loop 13--17 for \\
    \> \> the last $\dblslash{}$-edge if $b_1=\wc$.\\[15pt]
    {\bf algorithm} $\learner\CPATH_0(S)$\\
    {\bf Input:} a sample $S\subseteq\Tree_0$ of trees\\
    {\bf Output:} a set of minimal queries 
    $P\subseteq\APath_0$\\
    \>\>\>\>such that $S\subseteq\mathcal{L}_0(P)$\\
    \lineNo \> $P\colonequals\emptyset$\\
    \lineNo \> {\bf for} $u\in\Paths(S)$ {\bf do}\\
    \lineNo \> \> $p\colonequals\learner\APATH_0^\ast(u,S)$\\
    \lineNo \> \> {\bf if} $\nexists q\in P.\ q\preccurlyeq p$ {\bf then}\\
    \lineNo \> \> \> $P\colonequals P\minus \{ q\in P\mid p \preccurlyeq q\}$\\
    \lineNo \> \> \> $P\colonequals P\cup\{p\}$\\
    \lineNo \> {\bf return} $P$\\[15pt]
    \resetLineNoCounter
    {\bf algorithm} $\learner\APATH_0(S)$\\
    {\bf Input:} a sample $S\subseteq\Tree_0$ of trees\\
    {\bf Output:} a minimal $p\in\APath_0$ 
    such that $S \subseteq \mathcal{L}_0(p)$\\
    \lineNo \> $P\colonequals\learner\CPATH_0(S)$\\
    \lineNo \> choose any $p$ from $P$\\
    \lineNo \> {\bf return} $p$
  \end{tabbing}
  \caption{Learning $\CPATH_0$ and $\APATH_0$.}
  \label{fig:path-pattern-inference}
\end{figure}

Essentially, $\learner\CPATH_0$ considers every path $u$ in tree of
$S$ and uses $\learner\APATH_0^\ast$ to find a most specific (i.e.,
minimal) Boolean path query $p$ satisfied by $u$ and every other
element of $S$. The set $P$ aggregates all minimal results of running
$\learner\APATH_0^\ast$ over all paths in the input sample. The
learning algorithm $\learner\APATH_0$ simply takes the result of
$\learner\CPATH_0$ and chooses one element. The choice is arbitrary,
but later, we show that in the presence of the characteristic sample
$\learner\CPATH$ returns a singleton and there is no ambiguity.
\begin{example}
  \label{ex:2}
  We run $\learner\CPATH_0$ on the sample $S_0$
  (Figure~\ref{fig:sample-for-cpath}) 
  \begin{figure}[htb]
  \centering
  \begin{tikzpicture}[yscale=0.90,xscale=1.25]
    \begin{scope}[xshift=0cm]
      \node (n0) at (0,0) {\tt offer};
      \node (n1) at (0,-1) {\tt item} edge[-] (n0);
      \node (n2) at (-0.5,-2) {\tt for-sale} edge[-] (n1);
      \node (n4) at (0.5,-2) {\tt descr} edge[-] (n1);
    \end{scope}
    \begin{scope}[xshift=3.5cm]
      \node (n0) at (0,0) {\tt offer};
      \node (n0') at (0,-0.75) {\tt list} edge[-] (n0);
      \node (n1) at (-1,-1.33) {\tt item} edge[-] (n0');
      \node (n2) at (-1.5,-2.17) {\tt for-sale} edge[-] (n1);
      \node (n4) at (-0.5,-2.17) {\tt descr} edge[-] (n1);
      \node (n1') at (1,-1.33) {\tt item} edge[-] (n0');
      \node (n2') at (0.5,-2.17) {\tt wanted} edge[-] (n1');
      \node (n4') at (1.5,-2.17) {\tt descr} edge[-] (n1');
    \end{scope}
  \end{tikzpicture}
  \caption{\label{fig:sample-for-cpath}Input sample from
    Example~\ref{ex:2}.}
\end{figure}
corresponding to the positive examples from
Example~\ref{ex:intro-boolean} simplified for clarity of
presentation. The set of paths $\Paths(S_0)$ in the sample consists
of:
\begin{align*} 
&u_1=\mathtt{offer/item/for\text{-}sale},\\
&u_2=\mathtt{offer/item/descr},\\
&u_3=\mathtt{offer/list/item/for\text{-}sale},\\
&u_4=\mathtt{offer/list/item/descr}, \\
&u_5=\mathtt{offer/list/item/wanted}.
\end{align*}
Running $\learner\APATH_0^\ast$ on those paths yields:
\begin{align*}
&\learner\APATH_0^\ast(u_1,S_0)=\mathtt{offer\dblslash{}item/for\text{-}sale},\\
&\learner\APATH_0^\ast(u_2,S_0)=\mathtt{offer\dblslash{}item/descr},\\
&\learner\APATH_0^\ast(u_3,S_0)=\mathtt{offer\dblslash{}item/for\text{-}sale},\\
&\learner\APATH_0^\ast(u_4,S_0)=\mathtt{offer\dblslash{}item/descr}, \\
&\learner\APATH_0^\ast(u_5,S_0)=\mathtt{offer\dblslash{}item\dblslash{}\wc}.  
\end{align*}
Note that the result of $\learner\APATH_0^\ast$ on $u_5$ is the
Boolean anchored query $\mathtt{offer\dblslash{}item\dblslash{}\wc}$
and not the more specific $\mathtt{offer\dblslash{}item/\wc}$ because
it is not anchored; $\learner\APATH_0^\ast$ skips the attempt to
specialize the last $\dblslash{}$-edge because it is followed by
$\wc$. The query $\mathtt{offer\dblslash{}item\dblslash{}\wc}$ is,
however, subsumed by all the previous queries, and therefore,
$\learner\CPATH_0(S_0)$ returns a set containing only the queries
$\mathtt{offer\dblslash{}item/for\text{-}sale}$ and
$\mathtt{offer\dblslash{}item/descr}$. The run of $\learner\APATH_0$
on $S_0$ returns one of those queries e.g., the one whose string
representation is lexicographically minimal
$\mathtt{offer\dblslash{}item/descr}$. While this is not best choice
for Example~\ref{ex:intro-boolean}, the negative examples can be used
in a heuristic to select a query rejecting the most negative examples,
in this case $\mathtt{offer\dblslash{}item/for\text{-}sale}$. \qed
\end{example}

Because $\learner\CPATH_0(S)$ returns a set $P$ of Boolean path
queries that are satisfied in every tree in $S$, this algorithm is
sound. Naturally, $\learner\APATH_0$ is also sound because it returns
one element of $P$. To show completeness of both learning algorithms,
we point out an important property of $\learner\APATH_0^\ast$. The
construction of characteristic samples $\CS_P$ and $\CS_p$ is in
Section~\ref{sec:match-sets-char}.
\begin{lemma}
  \label{lemma:learner*}
  Take a conjunctive query $P\in\CPath_0$, let $\CS_P=\{t_0,t_1\}$ be
  the characteristic sample for $P$, and take any sample
  $S\subseteq\mathcal{L}_0(P)$ containing two examples $t_0',t_1'$
  such that $\Paths(t_i)=\Paths(t_i')$ for $i\in\{0,1\}$. Then,
  \begin{enumerate}
    \itemsep0pt
  \item for every $u\in\Paths(S)$, $\learner\APATH_0^\ast(u,S)$
    returns a path query equal to or subsumed by some $p\in P$.
  \item for every $p\in P$ there exists $u\in\Paths(S)$ such that
    $\learner\APATH_0^\ast(u,S)$ returns $p$.
  \end{enumerate}
\end{lemma}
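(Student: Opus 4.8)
The plan is to reduce both parts to a single branch-local version of the argument already used in the proof of Claim~\ref{claim:1}, read off the match set $\CS_P=\{t_0,t_1\}$. Write $N$ for the size of the twig obtained by gluing the roots of $P$. Recall that $t_0$ is $P$ with every $\wc$ turned into $a_0$ and every $\dblslash$ into $/$, and $t_1$ is $P$ with every $\wc$ turned into the fresh $a_1$ and every $\dblslash$ expanded into an $a_2$-path of length $N$; since gluing happens only at the root, $\Paths(t_i)=\{\hat p^{\,i}\mid p\in P\}$, where $\hat p^{\,i}$ is the root-to-leaf path produced from $p$. Two observations will be used repeatedly. First, whether a path query embeds into a tree depends only on that tree's set of root-to-leaf paths (the image of a chain rooted at the root lies on one such path), so $\Paths(t_i')=\Paths(t_i)$ together with $t_i'\in\mathcal L_0(q)$ yields $t_i\in\mathcal L_0(q)$. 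Second, since path embeddings are injective and $t_0$ uses only $a_0$ and the labels of $P$, any anchored path $q$ that embeds into $t_0$ satisfies $|q|\le N$ and uses neither $a_1$ nor $a_2$. The crux is then the following branch-local claim, to be proved exactly as Claim~\ref{claim:1}: if an anchored Boolean path $q$ with $|q|\le N$ and avoiding $a_1,a_2$ embeds into the single path $\hat p^{\,1}$, then composing that embedding with $\origin$ (sending each non-$a_2$ node of $\hat p^{\,1}$ back to its node of $p$) produces an embedding of $q$ into $p$, whence $\mathcal L_0(p)\subseteq\mathcal L_0(q)$. The verification repeats Claim~\ref{claim:1}: blocks of $q$ are too short to span an $a_2$-run of length $N$, so no block border is sent to an $a_2$-node and every non-$\wc$ node of $q$ lands on a node carrying its own label in $p$; the only delicate point, a $\wc$-leaf attached by $\dblslash$, is re-routed off the $a_2$-run exactly as in the Boolean case there.

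For part (1), let $r=\learner\APATH_0^\ast(u,S)$. Being consistent with $S$, it is consistent with $t_0'$ and $t_1'$, hence with $t_0$ and $t_1$. Consistency with $t_0$ gives $|r|\le N$ and that $r$ avoids $a_1,a_2$. Fixing an embedding of $r$ into $t_1$, its image is a chain from the root and therefore lies on a single path $\hat p^{\,1}$ for some $p\in P$, so $r$ embeds into $\hat p^{\,1}$. The branch-local claim then yields an embedding of $r$ into $p$, i.e.\ $\mathcal L_0(p)\subseteq\mathcal L_0(r)$ (equivalently $p\preccurlyeq r$ by \Prop1 for $\CPath_0$, Lemma~\ref{lemma:prop-conj-path}), which is the relation required of the output, with $r=p$ as the degenerate case.

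For part (2), fix $p\in P$ and take $u=\hat p^{\,1}$; this is a path of $t_1'$, so $u\in\Paths(S)$, and I set $r=\learner\APATH_0^\ast(u,S)$. Since $r$ is consistent with this very $u=\hat p^{\,1}$ and with $t_0$, and meets the hypotheses of the branch-local claim, there is an embedding of $r$ into $p$, hence $\mathcal L_0(p)\subseteq\mathcal L_0(r)$. On the other hand $p$ is itself consistent with $S\cup\{u\}$: $S\subseteq\mathcal L_0(P)\subseteq\mathcal L_0(p)$ and $p$ embeds into its own expansion $\hat p^{\,1}$. Because $\learner\APATH_0^\ast$ returns a query that is minimal in the inclusion order among anchored paths consistent with $S\cup\{u\}$ (the Boolean analogue of Lemma~\ref{lemma:minimal-learner1}), the consistent query $p$ with $\mathcal L_0(p)\subseteq\mathcal L_0(r)$ cannot sit strictly below $r$; hence $p\equiv r$, and as both are anchored paths this forces $r=p$.

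The hard part will be the bookkeeping inside the branch-local claim together with the minimality step of part (2). In the claim I must faithfully reproduce the three ingredients of Claim~\ref{claim:1} — the global length/label bounds extracted from $t_0$, the impossibility of a block spanning a full $a_2$-run, and the Boolean $\wc$-leaf re-routing — now confined to one branch $\hat p^{\,1}$ rather than all of $t_1$. In part (2) the subtle point is to justify the Boolean minimality of $\learner\APATH_0^\ast$ (including its skipping of the last $\dblslash$-edge when $b_1=\wc$) and to use that equivalent anchored paths coincide, so as to upgrade $p\equiv r$ to the exact equality $r=p$ asserted by the lemma.
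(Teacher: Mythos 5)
The paper never prints a proof of this lemma (it is among those deferred to the full version), so there is nothing to compare line by line; judged on its own, your proof is correct and is assembled from exactly the ingredients the paper intends: the match-set argument of Claim~\ref{claim:1} rerun on a single branch $\hat p^{\,1}$ of $t_1$ (where the $a_2$-runs of length $N$ and the bound $|r|\le N$ extracted from $t_0$ line up precisely because the construction of $\CS_P$ uses the size of the whole glued twig), the observation that satisfaction of a Boolean path query depends only on a tree's root-to-leaf path set, and, for part~(2), the minimality of $\learner\APATH_0^\ast$ combined with \Prop1 and injectivity of path-to-path embeddings to upgrade $p\equiv r$ to $p=r$. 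You also resolved the one genuinely ambiguous point correctly: ``equal to or subsumed by some $p\in P$'' must be read as $\mathcal{L}_0(p)\subseteq\mathcal{L}_0(r)$ (the usage of Example~\ref{ex:2}, not the literal wording of the preliminaries), which is both what your branch-local claim delivers and the only reading under which the lemma is true and yields completeness of $\learner\CPATH_0$.
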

The above result shows completeness of $\learner\CPATH_0$. As for
$\learner\APATH_0$, if we take a Boolean anchored path query $p$ and
apply the previous lemma to $P=\{p\}$, we get that for any sample $S$
consistent with $p$ and containing $\CS_p$ the algorithm
$\learner\CPATH_0(S)$ returns the singleton $\{p\}$, and thus,
$\learner\APATH_0(S)$ returns $p$. This result allows to prove
learnability of both classes of queries.
\begin{theorem}
  The query classes $\CPath_0$ and $\APath_0$ are learnable in
  polynomial time and data from positive examples \textnormal{(}i.e.,
  in the settings $\CPATH_0$ and $\APATH_0$ resp.\textnormal{)}
\end{theorem}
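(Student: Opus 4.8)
The plan is to prove learnability for each class by verifying the two conditions of Definition~\ref{def:1}: soundness (already sketched in the text) and completeness (to follow from Lemma~\ref{lemma:learner*}), while also checking the polynomial time and data bounds. I would treat $\CPath_0$ and $\APath_0$ together, since the algorithm $\learner\APATH_0$ is merely $\learner\CPATH_0$ followed by an arbitrary selection from the returned set.

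First I would address $\CPATH_0$. For soundness, observe that $\learner\CPATH_0(S)$ returns a set $P$ where every $p\in P$ is the output of $\learner\APATH_0^\ast(u,S)$ for some $u\in\Paths(S)$, and by construction of that helper $S\cup\{u\}\subseteq\mathcal{L}_0(p)$; hence $S\subseteq\mathcal{L}_0(P)$, so $P$ is consistent with $S$. For completeness, take any $P\in\CPath_0$ and let $\CS_P=\{t_0,t_1\}$ be its characteristic sample from Section~\ref{sec:match-sets-char}. I would set the characteristic sample for learning to be $\CS_P$ itself and argue that for any $S$ with $\CS_P\subseteq S\subseteq\mathcal{L}_0(P)$ the algorithm returns a query equivalent to $P$. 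The two trees $t_0,t_1$ witness $\Paths(t_i)=\Paths(t_i')$ trivially (taking $t_i'=t_i$), so Lemma~\ref{lemma:learner*} applies: part~(1) guarantees every path query produced by $\learner\APATH_0^\ast$ is subsumed by some $p\in P$, while part~(2) guarantees every $p\in P$ is actually produced. Combined with the reduction step in lines~4--6 of $\learner\CPATH_0$ (which discards any $p$ subsumed by an existing element and removes existing elements subsumed by $p$), the returned set is exactly the reduced set of the queries in $P$, i.e.\ $P$ itself up to equivalence. Here I would invoke \Prop1 for $\CPath_0$ (Lemma~\ref{lemma:prop-conj-path}) to translate the subsumption relations established by the lemma into the containment/equivalence statement required by Definition~\ref{def:1}.

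Next I would dispatch $\APATH_0$. Soundness is immediate since the output is one element of the sound set produced by $\learner\CPATH_0$. For completeness, given a single Boolean anchored path query $p$, I apply the completeness argument above to the singleton conjunction $P=\{p\}$ with characteristic sample $\CS_p=\CS_{\{p\}}$. By the specialization of Lemma~\ref{lemma:learner*} to singletons (as the text already notes), $\learner\CPATH_0(S)$ returns exactly $\{p\}$ whenever $\CS_p\subseteq S\subseteq\mathcal{L}_0(p)$, so the arbitrary choice in $\learner\APATH_0$ is in fact forced and returns $p$. This removes the apparent nondeterminism: in the presence of the characteristic sample the candidate set is a singleton.

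Finally I would verify the polynomial constraints. The size bound on $\CS_P$ (and $\CS_p$) is $\poly(|P|)$ by the explicit construction in Section~\ref{sec:match-sets-char}, which produces exactly two trees whose sizes are bounded by the size of $P$ times the path-length factor $N$. For running time, $\learner\CPATH_0$ iterates over $\Paths(S)$, whose cardinality is linear in $|S|$, and for each path runs $\learner\APATH_0^\ast$, a polynomial-time variant of $\learner\APATH_1$; each subsumption test in lines~4--5 is in \textsc{ptime} by the remark that subsumption of twigs (hence paths) is tractable, and the aggregate is polynomial in $|S|$. The main obstacle I anticipate is the completeness direction, specifically ensuring that the reduction bookkeeping in lines~4--6 of $\learner\CPATH_0$ returns precisely the reduced representation of $P$ and not some strictly subsumed or incomparable fragment; this is exactly what both parts of Lemma~\ref{lemma:learner*} are designed to control, so once that lemma is in hand the argument is essentially a matter of assembling soundness, completeness, and the polynomial bounds into the form demanded by Definition~\ref{def:1}.
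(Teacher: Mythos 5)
Your proposal is correct and takes essentially the same route as the paper: soundness read off from the construction of $\learner\CPATH_0$, completeness of both learners obtained from Lemma~\ref{lemma:learner*} with the match sets of Section~\ref{sec:match-sets-char} as characteristic samples, and the $\APath_0$ case handled by applying that lemma to the singleton $P=\{p\}$, which forces $\learner\CPATH_0(S)$ to return $\{p\}$ and removes the nondeterminism of the final choice. The details you add (the reduction bookkeeping in lines 4--6, the appeal to \Prop1 via Lemma~\ref{lemma:prop-conj-path} to pass from subsumption to equivalence, and the polynomial time and data bounds) are exactly what the paper leaves implicit.
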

We also show minimality of $\learner\APATH_0$.
\begin{lemma}
  \label{lemma:learner2-and-3-minimal}
  For any finite $S\subseteq\Tree_0$, $\learner\CPATH_0(S)$
  returns a set of minimal Boolean anchored path queries consistent
  with $S$ and $\learner\APATH_0(S)$ returns a minimal Boolean
  anchored path query consistent with $S$.
\end{lemma}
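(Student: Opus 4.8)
The plan is to reduce the whole statement to two facts about the set $P=\learner\CPATH_0(S)$ and the family of queries it is assembled from. Write $G=\{\learner\APATH_0^\ast(u,S)\mid u\in\Paths(S)\}$ for the family of all queries produced by the helper learner. Every $p\in P$ is an element of $G$, hence satisfies $S\cup\{u\}\subseteq\mathcal{L}_0(p)$ for its generating path $u$, and in particular $S\subseteq\mathcal{L}_0(p)$, so consistency with $S$ is immediate. Since $\learner\APATH_0(S)$ merely returns one element of $P$, the whole lemma follows once we show that every $p\in P$ is a \emph{minimal} Boolean anchored path query consistent with $S$. Throughout I freely replace $\subseteq$ by $\preccurlyeq$ using \Prop1 for $\APath_0$ (Lemma~\ref{lemma:prop-anchored-paths}).

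First I would record a purely combinatorial fact about the bookkeeping in $\learner\CPATH_0$, call it (A): the returned set $P$ consists, up to equivalence, exactly of the $\subseteq$-minimal (i.e.\ most specific) elements of $G$. This is an order-independent property of the standard ``maintain the antichain of minimal elements'' loop: the test $\nexists q\in P.\ q\preccurlyeq p$ discards a newly produced $p$ precisely when a strictly more specific query is already present, while the update $P\colonequals P\minus\{q\mid p\preccurlyeq q\}$ evicts the less specific queries before inserting $p$. A short induction on the iterations, using transitivity of $\preccurlyeq$ and \Prop1, shows that at the end no $q\in G$ is strictly below any $p\in P$; in particular each $p\in P$ is $\subseteq$-minimal in $G$.

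The heart of the argument is the second fact (B): for every $p'\in\APath_0$ with $S\subseteq\mathcal{L}_0(p')$ there is a path $u'\in\Paths(S)$ such that $\learner\APATH_0^\ast(u',S)\subseteq p'$. To prove it I would fix an embedding of $p'$ into some $t\in S$ (which exists since $p'$ is consistent with $S$); because $p'$ is a path query this embedding is injective and monotone, so its image lies along a single root-to-leaf path $u'\in\Paths(t)\subseteq\Paths(S)$, and the same embedding witnesses $u'\in\mathcal{L}_0(p')$. Thus $p'$ is consistent with $\{u'\}\cup S$. Now $\learner\APATH_0^\ast(u',S)$ is built from the word $u'$ by greedily inserting maximal common factors, specialising the two endpoints, and converting $\dblslash{}$-edges into $\wc$-paths, and it returns a minimal query consistent with $\{u'\}\cup S$ (the analogue of Lemma~\ref{lemma:minimal-learner1}, obtained by the same decomposition of a specialising substitution into the four atomic operations used in the proof of the claim after that lemma). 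The point to establish is that this greedy process is maximal \emph{relative to the fixed word} $u'$: since $p'$ itself reads off $u'$ via the chosen embedding and is anchored, every concrete label, every endpoint label, and every $\wc$-path that $p'$ exhibits is a specialisation the algorithm is forced to perform, so the output is itself a specialisation of $p'$, i.e.\ $\learner\APATH_0^\ast(u',S)\subseteq p'$. The anchoredness caveat---that $\learner\APATH_0^\ast$ skips the last $\dblslash{}$-specialisation when the leaf is $\wc$---does not interfere, precisely because $p'$ is anchored and hence never forces that forbidden specialisation.

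Finally I combine (A) and (B). Suppose some $p\in P$ were not minimal consistent with $S$, witnessed by $p'\in\APath_0$ with $p'\subseteq p$, $p'\not\equiv p$, and $S\subseteq\mathcal{L}_0(p')$. Applying (B) to $p'$ yields a path $u'$ with $p''\colonequals\learner\APATH_0^\ast(u',S)\subseteq p'$, whence $p''\subseteq p$ and (since $p''\equiv p$ would force $p'\equiv p$) in fact $p''\subsetneq p$ strictly; so $p''\in G$ is strictly more specific than $p$. This contradicts (A), which forbids any element of $G$ strictly below $p\in P$. Therefore every $p\in P$ is minimal consistent with $S$, which is the first assertion, and the element returned by $\learner\APATH_0(S)$ is one such $p$, giving the second. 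I expect fact (B), and specifically the ``maximality relative to the fixed word'' step, to be the main obstacle: it is where the precise greedy behaviour of $\learner\APATH_0^\ast$ and the anchoredness restriction must be matched against an arbitrary consistent competitor $p'$, rather than merely invoking the weaker minimality already guaranteed by the algorithm's specification.
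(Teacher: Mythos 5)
The skeleton of your argument is sound up to a point: consistency of every $p\in P$ is immediate, the reduction of the second claim to the first is valid, and your fact (A) --- the antichain invariant of the loop in $\learner\CPATH_0$, which guarantees that no member of $G$ lies strictly below a member of $P$ --- is correct. The fatal problem is fact (B), and it is not merely that your argument for it is incomplete: (B) is \emph{false}, and the paper's own Example~\ref{ex:exp-minimal-queries} refutes it by counting. Indeed, fix any resolution of the algorithm's tie-breaking and suppose (B) held. For every \emph{minimal} anchored query $p'$ consistent with $S$ there would be some $u'\in\Paths(S)$ with $\learner\APATH_0^\ast(u',S)\subseteq p'$; since that output is itself consistent with $S$, minimality of $p'$ forces it to be equivalent to $p'$. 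Hence, up to equivalence, $S$ could admit at most $|\Paths(S)|$ minimal consistent anchored queries. But $S_{\mathrm{exp}}$ consists of two linear trees, so $|\Paths(S_{\mathrm{exp}})|=2$, while it admits $2^n$ pairwise inequivalent minimal consistent anchored queries $r\dblslash\beta_1\dblslash\cdots\dblslash\beta_n\dblslash c$ with $\beta_i\in\{a_i,b_i\}$. So (B) cannot hold, and with it your final contradiction collapses.

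The step of yours that breaks is exactly the one you flagged as the main obstacle: the claim that every specialisation exhibited by $p'$ is one ``the algorithm is forced to perform.'' The loop in lines 4--6 inserts \emph{any} factor of $u'$ that preserves consistency with $S\cup\{u'\}$, processed in decreasing length order with unspecified tie-breaking --- not only the factors exhibited by $p'$ --- and one such foreign insertion can permanently block the factors of $p'$. Concretely, in $S_{\mathrm{exp}}$ with the left-to-right tie-breaking suggested in Example~\ref{ex:execution-sample-second}, take $p'=r\dblslash b_1\dblslash a_2\dblslash\cdots\dblslash a_n\dblslash c$ and $u'$ the path of $t_0$ (into which $p'$ embeds): the algorithm inserts $a_1$ before ever considering $b_1$, after which $b_1$ can never be inserted ($u'$ places $b_1$ below $a_1$, while $t_1$ places it above), so the output lacks $b_1$ and, by \Prop1, is not contained in $p'$; the run on $t_1$'s path symmetrically yields the all-$b_i$ query, which lacks $a_2$ and is likewise not below $p'$. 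Finally, note that what the lemma actually needs is much weaker than (B): only that a consistent strict refinement of a query \emph{surviving into} $P$ forces some member of $G$ strictly below that query. Establishing this requires analysing how the greedy runs on the \emph{other} paths of $S$ (those along which the refinement embeds) relate to the query being refined, in the presence of the blocking phenomenon above; that analysis is absent from your proposal, and it is not supplied by the only minimality argument printed in the paper (the substitution decomposition following Lemma~\ref{lemma:minimal-learner1}), which gives minimality of $\learner\APATH_0^\ast(u,S)$ relative to $S\cup\{u\}$ only, not relative to $S$ alone.
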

We point out that while the result of $\learner\CPATH_0(S)$ is a set
of minimal queries, it is not necessarily a minimal conjunctive query
i.e., it is not a maximal set of minimal queries. In the example below
we show that a set of positive examples may have an exponential number
of minimal Boolean path queries, and therefore, constructing their
conjunction cannot be done in polynomial time.
\begin{example}
  \label{ex:exp-minimal-queries}
  Fix $n>0$ and take the set of positive examples $S_{\mathrm{exp}}$
  of containing exactly two trees
  \begin{align*}
  &t_0=r(a_1(b_1(\ldots a_n(b_n(c))\ldots))),\\
  &t_1=r(b_1(a_1(\ldots b_n(a_n(c))\ldots))).
  \end{align*}
  Any query of the form $r\dblslash{}\beta_1\dblslash{}\ldots\dblslash{}\beta_n\dblslash{}c$, with
  $\beta_i\in\{a_i,b_i\}$ and $i\in\{1,\ldots,n\}$, is a minimal
  Boolean path query consistent with $S_\mathrm{exp}$.\qed
\end{example}

\section{Learning Boolean twig queries}
\label{sec:tree-patterns}

It this section we investigate learning path-subsumption-free twig
queries from positive examples i.e., the learning setting
$\PTWIG_0=(\Tree_0,\PTwig_0,\mathcal{L}_0)$. Recall that
$q\in\PTwig_0$ is query such that the set of root-to-leaf paths
$\Paths(q)$ consists of Boolean anchored path queries and does not
contain two path queries such that one subsumes another. Our approach
is based on the algorithm $\learner\CPATH_0$, which infers a set $P$
of minimal Boolean path queries and a method that allows to
reconstruct a twig query from path queries in $P$. Intuitively
speaking, we shall interleave the path queries from $P$ to obtain the
twig query. Below, we describe formally this technique.

Given a path query $p$ and a node $n\in N_p$, the \emph{split} of $p$
at $n$ is a pair of path queries $p_1$ and $p_2$ such that $p_1$ is
the path from $\root_p$ to $n$ and $p_2$ is the path from $n$ to the
only leaf of $p$. Note that $n$ becomes the root node of $p_2$. A
\emph{fusion} of $p$ into a twig query $q$ is a twig query $q'$ such
that the pair $p_1$ and $p_2$ is a split of $p$ at $n$, there exists
an embedding $\lambda:p_1\hookrightarrow q$, and $q'$ is obtained from
$q$ by attaching $p_2$ at node $\lambda(n)$ (the node $\lambda(n)$ and
the root node $n$ of $p_2$ become the same node, the label of $n$ in
$p_2$ is ignored). By $\Fusions(p,q)$ we denote the set of all fusions
of $p$ into $q$. Figure~\ref{fig:fusions} presents all fusions of
$r\dblslash{}a/b$ into $r[\wc/a]\dblslash{}a/c$.
\begin{figure}[htb]
  \centering
  \begin{tikzpicture}[yscale=0.75,xscale=1.25]
      \begin{scope}[xshift=0cm]
        \node at (-0.5,0) (q) {$p_0:$};
        \node at (0, 0) (m0) {$r$};
        \node at (0,-1) (m1) {$a$} edge[-,double] (m0);
        \node at (0,-2) (m2) {$b$} edge[-] (m1);
      \end{scope}
      \begin{scope}[xshift=1.25cm]
        \node at (-0.5,0) (q) {$q_0:$};
        \node at (0,0) (n0) {$r$};
        \node at (-0.5,-1) (n1) {$\wc$} edge[-] (n0);
        \node at (0.5,-1) (n2) {$a$} edge[-,double] (n0);
        \node at (-0.5,-2) (n3) {$a$} edge[-] (n1);
        \node at (0.5,-2) (n4) {$c$} edge[-] (n2);
      \end{scope}
      
      \draw (2.5,-1) edge[-latex] node[above] {$\Fusions$} (4,-1);

      \begin{scope}[xshift=5.5cm]
        \node at (-0.5,0) (q) {$q_1:$};
        \node at (0,0) (n0) {$r$};
        \node at (-0.75,-1) (n1) {$\wc$} edge[-] (n0);
        \node at (0.75,-1) (n2) {$a$} edge[-,double] (n0);
        \node at (-0.75,-2) (n3) {$a$} edge[-] (n1);
        \node at (0.75,-2) (n4) {$c$} edge[-] (n2);
        \node at (0,-1) (m1) {$a$} edge[-,double] (n0);
        \node at (0,-2) (m2) {$b$} edge[-] (m1);
      \end{scope}

      \begin{scope}[xshift=7.5cm]
        \node at (-0.5,0) (q) {$q_2:$};
        \node at (0,0) (n0) {$r$};
        \node at (-0.5,-1) (n1) {$\wc$} edge[-] (n0);
        \node at (0.5,-1) (n2) {$a$} edge[-,double] (n0);
        \node at (-0.5,-2) (n3) {$a$} edge[-] (n1);
        \node at (0.5,-2) (n4) {$c$} edge[-] (n2);
        \node at (-0.5,-3) (m2) {$b$} edge[-] (n3);
      \end{scope}

      \begin{scope}[xshift=9.25cm]
        \node at (-0.5,0) (q) {$q_3:$};
        \node at (0,0) (n0) {$r$};
        \node at (-0.5,-1) (n1) {$\wc$} edge[-] (n0);
        \node at (0.5,-1) (n2) {$a$} edge[-,double] (n0);
        \node at (-0.5,-2) (n3) {$a$} edge[-] (n1);
        \node at (0.25,-2) (n4) {$c$} edge[-] (n2);
        \node at (0.75,-2) (m2) {$b$} edge[-] (n2);
      \end{scope}

    \end{tikzpicture}
  \caption{Fusions of $p_0$ into $q_0$.}
  \label{fig:fusions}
\end{figure}

We point out that if $q$ is path-subsumption-free and $p$ is anchored,
then all elements of $\Fusions(p,q)$ are path-sub\-sump\-tion-free. We
note that $\Fusions(p,q)$ may be empty e.g., there is no fusion of
$a/a$ into $b[a]/b$, but as we argue next, this is never the case in
the learning algorithm $\learner\PTWIG_0$ which we present in
Figure~\ref{fig:tree-pattern-inference}. We slightly extend the
notation: $\varnothing$ denotes a \emph{phantom} empty twig query and
$\Fusions(\varnothing,p)=\{p\}$.

\begin{figure}[htb]
  \resetLineNoCounter
  \begin{tabbing}
    xxx\=xx\=xx\=xx\=xx\=xx\=xx\=xx\=mmm\kill
    {\bf algorithm} $\learner\PTWIG_0(S)$ \\
    {\bf Input:} a sample $S\subseteq\Tree_0$ of trees\\
    {\bf Output:} a query $p\in\PTwig_0$ such that $S \subseteq \mathcal{L}_0(p)$\\
    \lineNo \> $q\colonequals\varnothing$\\
    \lineNo \> $P\colonequals\learner\CPATH_0(S)$\\
    \lineNo \> {\bf for} $p\in P$ {\bf do}\\
    \lineNo \> \> $C\colonequals\{q'\in\Fusions(p,q)\mid S\subseteq \mathcal{L}_0(q')\}$\\
    \lineNo \> \> $q\colonequals\text{choose any $\preccurlyeq$-minimal element of $C$}$\\
    \lineNo \> {\bf return} $q$
  \end{tabbing}
  \caption{Learning algorithm for $\PTWIG_0$.}
  \label{fig:tree-pattern-inference}
\end{figure}

Basically, $\learner\PTWIG_0$ uses $\learner\CPATH_0$ to construct a
set $P$ of Boolean path queries satisfied in all trees of $S$ and then
fusions all the paths into one twig query. Note that $C$ is never
empty because $q$ is build up from path queries in $P$ that are
satisfied in $S$ and have the same label in their root
nodes. Consequently, $\learner\PTWIG_0$ executes without errors and is
sound. The order in which $\learner\PTWIG_0$ performs fusions is
arbitrary, but later on, we show that in the presence of the
characteristic sample, the set $C$ has exactly one element at all
times, and the final result is the goal query. First, we illustrate
the work of $\learner\PTWIG_0$ on an example.
\begin{example}
  \label{ex:learnerCPath0}
  Consider a sample $S_1$ containing two DBLP listings in
  Figure~\ref{fig:dblp-data-set}: one with a collection of articles
  and the other with a collection of books.
  \begin{figure}[htb]
  \centering
  \begin{tikzpicture}[yscale=0.9,xscale=1.25]
    \begin{scope}[xshift=0cm,xscale=0.5,yscale=0.75]
      \node (n0) at (0,0) {\tt dblp};
      \node (n1) at (-1.5,-1) {\tt article} edge[-] (n0);
      \node (n3) at (-2.75,-2) {\tt author} edge[-] (n1);
      \node (n2) at (-1.5,-2.5) {\tt title} edge[-] (n1);
      \node (n4) at (1.5,-1) {\tt article} edge[-] (n0);
      \node (n5) at (0.5,-2) {\tt author} edge[-] (n4);
      \node (n6) at (2,-2.5) {\tt title} edge[-] (n4);
      \node (n7) at (3,-2) {\tt url} edge[-] (n4);
    \end{scope}
    \begin{scope}[xshift=4.125cm,xscale=0.5,yscale=0.75]
      \node (n0) at (0,0) {\tt dblp};
      \node (n1) at (-1.5,-1) {\tt book} edge[-] (n0);
      \node (n3) at (-2.75,-2) {\tt editor} edge[-] (n1);
      \node (n2) at (-1.5,-2.5) {\tt title} edge[-] (n1);
      \node (n7) at (-0.5,-2) {\tt url} edge[-] (n1);
      \node (n4) at (1.75,-1) {\tt book} edge[-] (n0);
      \node (n5) at (1.25,-2) {\tt author} edge[-] (n4);
      \node (n6) at (2.75,-2.5) {\tt title} edge[-] (n4);
    \end{scope}
  \end{tikzpicture}
  \caption{\label{fig:dblp-data-set} Input sample}
\end{figure}

$\learner\CPATH_0(S_1)$ returns the following path
queries:
\begin{small}
\begin{align*}
&p_1=\mathtt{dblp/\wc/author},&
&p_2=\mathtt{dblp/\wc/title},&
&p_3=\mathtt{dblp/\wc/url}.
\end{align*}%
\end{small}%
We perform fusions in the order $p_1$, $p_2$, and $p_3$. Fusing $p_1$
and $p_2$ yields the query $\mathtt{dblp/\wc[title]/author}$ and
fusing $p_3$ into it gives
$q'=\mathtt{dblp[\wc/url]/\wc[title]/author}$. Note that in the last
step, $\mathtt{dblp/\wc[title][url]/author}$ is one of the fusions but
it is not consistent with the input sample $S_1$. On the other hand,
if the order of fusions is $p_2$, $p_3$, and $p_1$, then the end
result is $q''=\mathtt{dblp[\wc/author]/\wc[title]/url}$.  \qed
\end{example}
While in the previous example the queries $q'$ and $q''$ are minimal
path-subsumption-free twig queries consistent with $S_1$, in general
$\learner\PTWIG_0$ does not need to produce such minimal queries. In
fact, we show that for certain samples, such a minimal query may be of
exponential size and thus impossible to construct by a polynomial
algorithm.
\begin{example}[cont'd Example~\ref{ex:exp-minimal-queries}]
  \label{ex:1}
  Recall the sample $S_{\mathrm{exp}}$ and observe that the minimal
  twig query consistent with $S_{\mathrm{exp}}$ has the shape of a
  perfect binary tree of height $n+1$ where every node at depth
  $i\in\{0,\ldots,n-1\}$ has two children labeled with $a_{i+1}$ and
  $b_{i+1}$ (connected with their parent with a
  $\dblslash$-edge). Naturally, this minimal query is
  path-subsumption-free.\qed
\end{example}
Now, we move to completeness of $\learner\PTWIG_0$ and we fix a query
$q\in\PTwig_0$ and a sample $S\subseteq \mathcal{L}_0(q)$. Recall the
construction of the characteristic sample $\CS_q$ for $q$ from
Section~\ref{sec:match-sets-char}. First, we observe that for
$q\in\PTwig_0$ every $p\in\Paths(q)$ is a $\preccurlyeq$-minimal
element of $\Paths(q)$. As a simple consequence of
Lemma~\ref{lemma:learner*} we get the following.
\begin{lemma}
  \label{lemma:paths-of-a-twig}
  If $S$ contains $\CS_q$, then $\learner\CPATH_0(S)$ returns
  $\Paths(q)$.
\end{lemma}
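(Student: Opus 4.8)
The plan is to prove Lemma~\ref{lemma:paths-of-a-twig} by reducing it to Lemma~\ref{lemma:learner*} applied to the conjunctive query $P = \Paths(q)$. The key observation that makes this reduction work is already noted just before the lemma statement: since $q$ is path-subsumption-free, every $p \in \Paths(q)$ is a $\preccurlyeq$-minimal element of $\Paths(q)$, so $\Paths(q)$ is a reduced set of Boolean anchored path queries and is therefore a legitimate member of $\CPath_0$ (head-consistency holds because all paths of $q$ share the root label of $q$). Thus I may instantiate $P := \Paths(q)$ in Lemma~\ref{lemma:learner*}.

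\noindent\textbf{Connecting the characteristic samples.} The main technical point I would address is that the characteristic sample $\CS_q$ for the \emph{twig} query $q$ (from Section~\ref{sec:match-sets-char}) coincides with the characteristic sample $\CS_P$ for the \emph{conjunctive} query $P = \Paths(q)$ demanded by Lemma~\ref{lemma:learner*}, or at least supplies trees with the same path sets. Both constructions start from the same syntactic object: $\CS_q = \{t_0, t_1\}$, where $t_0$ replaces every $\wc$ by $a_0$ and every $\dblslash{}$-edge by a child edge, and $t_1$ replaces every $\wc$ by $a_1$ and every $\dblslash{}$-edge by an $a_2$-path of length $N = |q|$. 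Since $\mathcal{L}_0$ interprets a conjunction by gluing roots together, the tree obtained from $q$ by this construction has exactly the set of root-to-leaf paths that one gets by applying the same construction to each $p \in \Paths(q)$ individually and gluing at the root. Hence $\Paths(t_i)$ equals the union of the per-path characteristic trees, which is precisely the hypothesis $\Paths(t_i) = \Paths(t_i')$ required by Lemma~\ref{lemma:learner*}. I would verify this correspondence carefully, as it is the crux: the per-path padding lengths agree because $|p| \leq |q| = N$ for every $p \in \Paths(q)$, so using the global bound $N$ in the twig construction still yields valid (indeed, sufficiently long) $a_2$-paths for each individual path.

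\noindent\textbf{Assembling the conclusion.} With $S \supseteq \CS_q$ and $S \subseteq \mathcal{L}_0(q) = \mathcal{L}_0(P)$, the two parts of Lemma~\ref{lemma:learner*} apply directly. Part~(2) gives that for every $p \in \Paths(q)$ there is some $u \in \Paths(S)$ with $\learner\APATH_0^\ast(u,S) = p$, so every path of $q$ is produced and, being $\preccurlyeq$-minimal in $P$, survives the reduction step in lines 4--6 of $\learner\CPATH_0$ and is added to the output set. Part~(1) gives that every query $\learner\APATH_0^\ast(u,S)$ produced is equal to or subsumed by some $p \in P = \Paths(q)$; consequently no query strictly outside $\Paths(q)$ can remain in the returned set, since any such query would be a proper subsumee of some $p \in \Paths(q)$ and would be discarded (or never added) by the subsumption test. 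Combining both directions, the set returned by $\learner\CPATH_0(S)$ is exactly $\Paths(q)$.

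\noindent\textbf{Anticipated obstacle.} I expect the main obstacle to be the bookkeeping in the previous paragraph: showing that the subsumption-based insertion/removal logic of $\learner\CPATH_0$ (lines 4--6) cannot leave any spurious query in $P$ nor drop any genuine path of $q$. Part~(1) alone guarantees each output is bounded above by some $p \in \Paths(q)$, but I must argue that whenever such a bounding $p$ is itself eventually produced (guaranteed by Part~(2)), the algorithm's reduction step ensures only the $\preccurlyeq$-maximal survivors remain, and that these maximal survivors are exactly the pairwise-incomparable elements of $\Paths(q)$ — which, by path-subsumption-freeness, is all of $\Paths(q)$. This requires a short argument that the order of processing paths in line~2 of $\learner\CPATH_0$ does not matter for the final set, which follows from the confluence of repeatedly keeping $\preccurlyeq$-maximal elements of a finite set under $\preccurlyeq$.
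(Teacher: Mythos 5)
Your overall strategy is exactly the paper's: the paper derives this lemma in one line by observing that for $q\in\PTwig_0$ the set $\Paths(q)$ is a reduced, head-consistent member of $\CPath_0$ whose elements are pairwise $\preccurlyeq$-incomparable, and then invoking Lemma~\ref{lemma:learner*} with $P=\Paths(q)$. Your assembling step (part~(2) forces every path of $q$ into the output; part~(1) together with the insertion/removal logic of lines 4--6 of $\learner\CPATH_0$ eliminates everything else) is the intended bookkeeping.

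However, the step you yourself single out as the crux is wrong as argued. The equality $\Paths(t_i)=\Paths(t_i')$ fails in general for $i=1$: the tree $t_1\in\CS_q$ pads each descendant edge with an $a_2$-run of length $|q|$, whereas $\CS_P$ for $P=\Paths(q)$ is by definition the match set of the twig obtained by gluing the paths of $q$ at the root only, so its padding length is the size of that glued twig, which strictly exceeds $|q|$ whenever two paths of $q$ share anything below the root. For instance, for $q=r/c[.\dblslash{}a][.\dblslash{}b]$ the runs have length $4$ in $\CS_q$ but $5$ in $\CS_P$, so the path sets differ and Lemma~\ref{lemma:learner*} cannot be invoked as a black box; your claim that ``the per-path padding lengths agree'' is precisely where the argument breaks. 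The repair is the remark you make only in passing: the proof of Lemma~\ref{lemma:learner*} uses the padding only to guarantee that $a_2$-runs are longer than any block of any path query consistent with the sample, and every path query consistent with the short tree $t_0\in\CS_q$ has at most $|q|$ nodes, so runs of length $|q|$ suffice and that proof carries over verbatim with $\CS_q$ in place of $\CS_P$. This requires opening up (or restating the hypothesis of) Lemma~\ref{lemma:learner*}, not merely citing it --- a subtlety that the paper's own one-line derivation also glosses over. Two smaller points: the algorithm retains $\preccurlyeq$-\emph{minimal} (most specific) queries, not ``$\preccurlyeq$-maximal'' ones as in your last paragraph, and it is this minimality, combined with the pairwise $\preccurlyeq$-incomparability of $\Paths(q)$ guaranteed by path-subsumption-freeness, that makes the returned set exactly $\Paths(q)$.
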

To state that the algorithm approaches the goal query $q$ with every
fusion, we need to define formally the search space of subqueries of
$q$ and show that when moving with the fusion operator we never leave
the space and finally reach $q$. A Boolean twig query $q'$ is a
\emph{subquery} of $q$ if there exists a subset $N$ of leaves of $q$
such that $q'$ is a subgraph induced by the set of paths from the root
of $q$ to the leaves in $N$. The main claim follows.
\begin{lemma}
  Assume that $\CS_q\subseteq S$. For any subquery $q'$ of $q$, and
  any path query $p\in\Paths(q)\minus\Paths(q')$ the set of elements
  of $\Fusions(p,q')$ consistent with $S$ has exactly one
  $\preccurlyeq$-minimal element $q''$. Furthermore, $q''$ is a
  subquery of $q$.
\end{lemma}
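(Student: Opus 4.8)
The plan is to exhibit the minimal fusion explicitly as a subquery of $q$ and then use the match-set machinery to show that every consistent fusion lies above it. Since $q\in\PTwig_0$ is path-subsumption-free, each $p\in\Paths(q)$ is the root-to-leaf branch of a unique leaf; let $\ell$ be the leaf of $q$ whose branch is $p$, and let $N$ be the set of leaves inducing the subquery $q'$. Because $p\notin\Paths(q')$ we have $\ell\notin N$, and I define $q''$ to be the subquery of $q$ induced by $N\cup\{\ell\}$. First I check $q''\in\Fusions(p,q')$: letting $n^\ast$ be the deepest node of the $\ell$-branch that already belongs to $q'$, the split of $p$ at the node corresponding to $n^\ast$, together with the identity-like embedding of its upper part onto the root-to-$n^\ast$ path of $q'$, attaches the lower part at $n^\ast$ and reproduces exactly $q''$.

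Next I dispatch consistency and the \textit{subquery} clause together. By construction $q''$ is a subquery of $q$, so the inclusion of $q''$ into $q$ is an embedding and $q\preccurlyeq q''$; by \Prop1 for $\PTwig_0$ (Lemma~\ref{lemma:prop-psf-twig}) this yields $q\subseteq q''$, hence $S\subseteq\mathcal{L}_0(q)\subseteq\mathcal{L}_0(q'')$, so $q''$ is consistent with $S$. Thus $q''$ is a consistent fusion that is a subquery of $q$, establishing the ``furthermore'' part and one half of minimality.

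The core of the argument is to show that $q''$ is the $\preccurlyeq$-minimum among consistent fusions, which immediately gives that it is the \emph{unique} $\preccurlyeq$-minimal element. Fix any consistent $\hat q\in\Fusions(p,q')$. Since $q'$ is path-subsumption-free and $p$ is anchored, $\hat q\in\PTwig_0$, so by \Prop2 (Lemma~\ref{lemma:prop-psf-twig}) the inclusion $\CS_q\subseteq S\subseteq\mathcal{L}_0(\hat q)$ forces $q\subseteq\hat q$, and \Prop1 then yields an embedding $\mu:\hat q\hookrightarrow q$. I will argue that the image of $\mu$ is contained in $q''$; granting this, $\mu$ restricts to an embedding $\hat q\hookrightarrow q''$ (the descendant conditions survive the restriction because $q''$ is ancestor-closed), i.e.\ $q''\preccurlyeq\hat q$, so $q''$ lies below every consistent fusion. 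As $q''$ is a union of root-to-leaf branches, it suffices to show that $\mu$ maps every leaf of $\hat q$ into $q''$.

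The containment of the leaf images is the main obstacle, and it is exactly where path-subsumption-freeness does the work. For a leaf $v$ inherited from $q'$, its branch $\rho$ is a genuine root-to-leaf path of $q$; the image $\mu(\rho)$ runs down to $\mu(v)$, so extending it to a full branch $\pi$ of $q$ gives $\rho\hookrightarrow\pi$, that is $\pi\preccurlyeq\rho$. Since distinct root-to-leaf branches of a path-subsumption-free query are $\preccurlyeq$-incomparable, $\pi$ must be $\rho$ itself, placing $\mu(v)$ on $\rho$ and hence in $q''$. The delicate case is the single new leaf $\hat\ell$ at the bottom of $p_2$: its branch $\hat p$ is not a branch of $q$, but it carries the suffix $p_2$ of $p$ and attaches inside $q'$ at a node whose image already lies in $q''$ by the previous paragraph (that node is, or lies above, a leaf of $N$). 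Running the same extension argument on $\hat p$, the full branch $\pi$ through $\mu(\hat\ell)$ satisfies $\hat p\hookrightarrow\pi$; I then show, using that $p_2$ is an \emph{anchored} suffix of the $\ell$-branch $p$ together with pairwise incomparability of the branches of $q$, that $\pi$ can only be $p$, so $\mu(\hat\ell)$ lies on the $\ell$-branch and therefore in $q''$. This final step---ruling out that the new branch settles on some excluded branch of $q$ whose label sequence happens to absorb $p_2$---is the crux, and it is precisely here that anchoredness and path-subsumption-freeness, rather than mere consistency, are indispensable.
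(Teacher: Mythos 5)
Your overall architecture is sound and most of it checks out: taking $q''$ to be the subquery of $q$ induced by $N\cup\{\ell\}$, verifying it is a fusion via the split of $p$ at the deepest node of the $\ell$-branch lying in $q'$, getting consistency from $q\preccurlyeq q''$, and reducing minimality to the claim that the embedding $\mu$ of $\hat q$ into $q$ (legitimately obtained from \Prop2 and \Prop1, since fusions of anchored paths into path-subsumption-free queries stay in $\PTwig_0$) sends every leaf of $\hat q$ into the ancestor-closed $q''$. Your treatment of the inherited leaves---extend the image of a branch $\rho\in\Paths(q')$ to a full branch $\pi$ of $q$, get $\pi\subseteq\rho$, and invoke reducedness of $\Paths(q)$---is correct. (The paper defers the proof of this lemma to its full version, so I am judging your argument on its own terms.)

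The genuine gap is exactly the step you yourself flag as the crux and then do not carry out: for the new leaf $\hat\ell$ you write ``I then show \ldots that $\pi$ can only be $p$'' and supply no argument, only the hint that $p_2$ is an \emph{anchored} suffix of $p$. That hint does not close the step: anchoredness enters only through \Prop1/\Prop2 and through keeping $\Fusions(p,q')$ inside $\PTwig_0$; it says nothing about which branch of $q$ can absorb the new branch $\hat p$ of $\hat q$. What actually closes the step is the structure of the fusion itself. Since $\hat q\in\Fusions(p,q')$, the prefix of $\hat p$ above the attachment point $\lambda(n)$ is the root-to-$\lambda(n)$ branch of $q'$, and the defining embedding $\lambda$ of $p_1$ into $q'$ has its image inside that prefix (images of path embeddings are downward chains in a tree); extending $\lambda$ by the identity on the attached copy of $p_2$ yields an embedding of $p$ into $\hat p$, i.e.\ $\hat p\preccurlyeq p$, hence $\hat p\subseteq p$. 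Combining this with $\pi\subseteq\hat p$ (restrict $\mu$ to $\hat p$ and extend its image to a full branch $\pi$ of $q$) gives $\pi\subseteq p$ with both $\pi,p\in\Paths(q)$, and reducedness forces $\pi=p$; hence $\mu(\hat\ell)$ lies on the $\ell$-branch and therefore in $q''$. With this inserted, your proof goes through. A smaller blemish worth fixing: in both leaf cases you silently identify ``$\pi$ equals $\rho$ (resp.\ $p$) as a path query'' with ``$\pi$ is the same physical branch of $q$''; if $q$ happens to contain two distinct leaves with identical branch queries this needs a word (e.g.\ reroute $\mu$ through the copy inside $q''$), a technicality the lemma statement itself also glosses over.
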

If $\CS_q\subseteq S$, then by Lemma~\ref{lemma:paths-of-a-twig}
$P=\Paths(q)$, and therefore, whatever is the order of choosing paths
from $P$ in line 3, the algorithm $\learner\PTWIG_0$ approaches $q$
and when all paths in $P$ are fused, we obtain $q$.
\begin{theorem}
  \label{thm:twig-learnable}
  Path-subsumption-free Boolean twig queries are learnable in
  polynomial time and data from positive examples \textnormal{(}i.e.,
  in the setting $\PTWIG_0$\textnormal{)}.
\end{theorem}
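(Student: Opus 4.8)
The plan is to verify the two conditions of Definition~\ref{def:1} for the algorithm $\learner\PTWIG_0$ of Figure~\ref{fig:tree-pattern-inference}, together with its polynomial running time, by assembling the lemmas established above. Soundness is immediate from the discussion following the algorithm: the set $C$ computed in line~4 is always nonempty (the running query is obtained by fusing paths of $P$, all satisfied in $S$, with matching root labels), and by construction every element of $C$ is consistent with $S$; hence the returned query is consistent with $S$. For the polynomial-time bound I would argue that the call to $\learner\CPATH_0(S)$ in line~2 runs in polynomial time (Section~\ref{sec:simple-path-patterns}) and returns a set $P$ of at most $|\Paths(S)|$ queries of polynomially bounded total size, so the running twig stays polynomial throughout. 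The loop of lines~3--5 runs $|P|$ times; for each $p$ a fusion is determined by a split node of $p$ (at most $|p|$ choices) together with an attachment node of the current twig $q$ (at most $|q|$ choices), so $|\Fusions(p,q)|\leq|p|\cdot|q|$ and each fusion, together with a witnessing embedding $\lambda\colon p_1\hookrightarrow q$, is computable in polynomial time; finally each test $S\subseteq\mathcal{L}_0(q')$ reduces to polynomially many tree-pattern matchings $t\preccurlyeq q'$, each polynomial.

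For completeness I would take as the characteristic sample $\CS_q$ the match set of $q$ from Section~\ref{sec:match-sets-char}, whose size is polynomial in $|q|$ by property~\Prop2 of Lemma~\ref{lemma:prop-psf-twig}. Fix any sample $S$ with $\CS_q\subseteq S\subseteq\mathcal{L}_0(q)$. By Lemma~\ref{lemma:paths-of-a-twig}, line~2 produces exactly $P=\Paths(q)$, and since $q$ is path-subsumption-free, $|P|$ equals the number of leaves of $q$. I would then prove by induction on the number of processed paths that the running query $q'$ remains a subquery of $q$ and that at each iteration the set $C$ of line~4 has a \emph{unique} $\preccurlyeq$-minimal element, so the arbitrary choice in line~5 is in fact forced and order-independent. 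The base case is the phantom query $\varnothing$ (the subquery induced by the empty set of leaves), with the convention that fusing $p$ into $\varnothing$ returns $\{p\}$. The inductive step is precisely the (unlabelled) fusion lemma preceding the theorem: for the chosen $p\in\Paths(q)\minus\Paths(q')$ the consistent elements of $\Fusions(p,q')$ have a unique $\preccurlyeq$-minimal element $q''$, which is again a subquery of $q$.

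To close the induction I would note that each $p\in P=\Paths(q)$ is processed exactly once and that a fusion of a not-yet-realized path adds exactly one new leaf; since subqueries accumulate leaves monotonically and $|P|$ equals the number of leaves of $q$, after $|P|$ iterations the running query is the subquery of $q$ induced by \emph{all} of its leaves, namely $q$ itself. Hence $\learner\PTWIG_0(S)$ returns $q$ (in particular a query equivalent to $q$), establishing completeness and, with the two preceding points, the theorem.

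The hard part is not in the theorem itself but is concentrated in the fusion lemma it invokes, which simultaneously guarantees that the greedy fusion never leaves the space of subqueries of $q$ and pins down a unique minimal consistent fusion so that the nondeterministic choice in line~5 cannot diverge. I expect the delicate point there to be the uniqueness of the $\preccurlyeq$-minimal consistent fusion: it should rely essentially on $\CS_q\subseteq S$ (so that $t_1$ rules out spurious descendant collapses, exactly as in the proof of Claim~\ref{claim:1}) and on $q$ being path-subsumption-free, so that distinct leaves yield $\preccurlyeq$-incomparable paths and no two fusion targets can be merged. Granting that lemma, the proof of Theorem~\ref{thm:twig-learnable} is a routine assembly of soundness, the polynomial bound, and the inductive completeness argument above.
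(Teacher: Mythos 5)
Your proposal is correct and follows essentially the same route as the paper: soundness from the nonemptiness of $C$, and completeness by taking the match set of Section~\ref{sec:match-sets-char} as the characteristic sample, invoking Lemma~\ref{lemma:paths-of-a-twig} to obtain $P=\Paths(q)$, and then using the (unnamed) fusion lemma as the inductive step guaranteeing that each fusion stays within the subqueries of $q$ and admits a unique $\preccurlyeq$-minimal consistent element, so the greedy choices converge to $q$ regardless of order. The only material you add is the explicit polynomial-time accounting, which the paper leaves implicit; like the paper, you defer the genuinely hard content to the fusion lemma itself.
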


\section{Learning unary twig queries}
\label{sec:tree-queries}
In this section, we present an algorithm $\learner\PTWIG_1$
(Figure~\ref{fig:learner-psf-twig1}) for learning unary
path-subsumption-free twig queries from positive examples i.e., in the
learning setting $\PTWIG_1=(\Tree_1,\PTwig_1,\mathcal{L}_1)$.

Essentially, the learning algorithm uses $\learner\APATH_1$ to
construct a path query $p$ and then it uses $\learner\PTWIG_1^\ast$, a
helper learner derived from $\learner\PTWIG_0$, to decorate the nodes
of the path query with filter expressions (Boolean twig
queries). Here, we use the non-abbreviated syntax of XPath to
represent the path query $p$ as
$\ell_0/\alpha_1\sep\ell_1/\ldots/\alpha_k\sep\ell_k$, where
$\ell_i\in\Sigma\cup\{\wc\}$ and $\alpha_i$ is either $\mathit{child}$
or $\mathit{descendant}$.

\begin{figure}[htb]
  \centering
  \resetLineNoCounter
  \begin{tabbing}
    xxx\=xx\=xx\=xx\=xx\=xx\=xx\=xx\=mmm\kill
    {\bf algorithm} $\learner\PTWIG_1^\ast(S,q')$ \\
    {\bf Input:} a sample $S\subseteq\Tree_1$ of decorated trees and \\
    \>\>\> a query $q'\in\PTwig_1$ such that 
    $S\subseteq\mathcal{L}_1(q')$ \\
    {\bf Output:} a query $q\in\PTwig_1$ s.t.\ 
    $q\preccurlyeq q'$ and $S \subseteq \mathcal{L}_1(q)$\\
    This algorithm is obtained from $\learner\PTWIG_0$ by:\\
    \> $\bullet$ initializing $q$ to $q'$ (line~1)\\
    \> $\bullet$ replacing every $\mathcal{L}_0$ by $\mathcal{L}_1$\\[15pt]
    {\bf algorithm} $\learner\PTWIG_1(S)$ \\
    {\bf Input:} a sample $S$ of decorated trees \\
    {\bf Output:} a query $q\in\PTwig_1$ such that $S\subseteq\mathcal{L}_1(q)$\\
    \lineNo\> $p\colonequals\learner\APATH_1(S)$\\
    \lineNo\> let $p$ be of the form 
    $\ell_0/\alpha_1\sep\ell_1/\ldots/\alpha_k\sep\ell_k$\\ 
    \lineNo\> $q_k'\colonequals\ell_k$ \\
    \lineNo\> {\bf for} $i=k,\ldots,0$ {\bf do}\\
    \lineNo\> \> $S_i\colonequals\emptyset$\\
    \lineNo\> \> {\bf for} $t\in S$ {\bf do}\\
    \lineNo\> \> \> let $n$ be the deepest node on the path from \\
    \> \> \> \> \> the root node $\root_t$ to the selected node $\sel_t$, \\
    \> \> \> \> \> such that $n$ is reachable from $\root_t$ with \\
    \> \> \> \> \> $\ell_0/\alpha_1\sep\ell_1/\ldots/\alpha_i\sep\ell_i$ 
    and $\sel_t$ is reachable  \\ 
    \> \> \> \> \> from $n$ with $q_i'$\\
    \lineNo\> \> \> add the subtree of $t$ rooted at $n$ to $S_i$ \\
    \lineNo\> \> $q_i\colonequals\learner\PTWIG_1^\ast(S_i,q_i')$\\
    \lineNo\> \> {\bf if} $i > 0$ {\bf then}\\
    \lineNo\> \> \> $q_{i-1}'\colonequals\ell_{i-1}/\alpha_i\sep q_i$\\
    \lineNo\> {\bf return $q_0$}
  \end{tabbing}
  \caption{Learning algorithm for $\PTWIG_1$.}
  \label{fig:learner-psf-twig1}
\end{figure}

When decorating the $i$-th step of $p$ i.e., the fragment
$\alpha_i\sep\ell_i$, with a filter expression, the algorithm first
constructs a sample $S_i$ of subtrees that serve as positive examples
for learning the corresponding filter expression. From every decorated
tree in the input sample $S$ one subtree is extracted. Each subtree is
rooted at a node $n$ on the path from the root node to the selected
node of the decorated tree $t$. The choice of $n$ is done so that it
can be reached with the unprocessed part of the path query
$\ell_0/\alpha_1\sep\ell_1/\ldots/\alpha_i\sep\ell_i$ and at the same
time the decorated part of the path query $q_i'$ selects the selected
node $\sel_t$ when evaluated from $n$. An important invariant of the
outer for loop (lines 4-12) is that there is at least one such $n$ for
every $t\in S$. If there is more than one possible choice, the deepest
node is chosen.

\begin{example}
  \label{ex:learner-psf-twig1}
  Consider a sample $S_2$ (Figure~\ref{fig:samples-library}) that
  contains the positive examples corresponding to (a simplified
  version of) the document from Example~\ref{ex:intro-unary}.
\begin{figure}[htb]
  \centering
  \begin{tikzpicture}[yscale=0.76925,xscale=1.25]\small
    \node (r) at (0,0) {\tt library};
    \node (n1) at (0,-1) {\tt collection} edge[-] (r);
    \node[draw,outer sep = 3pt] (n4) at (-1,-2) {\tt title} edge[-] (n1);
    \node (n5) at (1,-2) {\tt author} edge[-] (n1);
    \node (n11) at (-1,-3) {\tt capital} edge[-] (n4);
    \node (n12) at (1,-3) {\tt marx} edge[-] (n5);
    \begin{scope}[xshift=4cm]
    \node (r) at (0,0) {\tt library};
    \node (n2) at (0,-1) {\tt book} edge[-] (r);

    \node[draw,outer sep = 3pt] (n6) at (-1.2,-2) {\tt title} edge[-] (n2);
    \node (n7) at (-0,-2) {\tt author} edge[-] (n2);
    \node (n8) at (1.2,-2) {\tt author} edge[-] (n2);
    \node (n11) at (-1.2,-3) {\tt manifesto} edge[-] (n6);
    \node (n12) at (-0,-3) {\tt marx} edge[-] (n7);
    \node (n13) at (1.2,-3) {\tt engels} edge[-] (n8);
  \end{scope}
  \end{tikzpicture}
  \caption{Examples from a library database}
  \label{fig:samples-library}
\end{figure}
$\learner\APATH_1(S_2)$ returns the query
$p=\mathtt{/library/\wc/title}$. The algorithm attempts to specialize
the bottom fragment $q_2'=\mathtt{title}$ using the two subtrees
$\mathtt{title(capital)}$ and $\mathtt{title(manifesto)}$. The only
Boolean anchored path query these subtrees do have in common is
$\mathtt{title\dblslash{}\wc}$, which is fused into the query yielding
$q_2=\mathtt{title[.\dblslash{}\wc]}$. Next, the algorithm moves to
$q_1'=\mathtt{\wc/title[.\dblslash{}\wc]}$ and calls $\learner\PTWIG_1^\ast$
with two subtrees: one at the node \texttt{collection} and one rooted
at the node \texttt{book}. $\learner\CPATH_0$ called with these two
trees on input returns two path queries $\mathtt{\wc/title\dblslash{}\wc}$ and
$\mathtt{\wc/author/marx}$. The first path query is subsumed by
$q_1'$, and therefore, it is absorbed by $q_1'$ when fusing. Fusing
the second path query into $q_1'$ yields the query
$q_1=\mathtt{\wc[author/marx]/title[.\dblslash{}\wc]}$. Finally, the algorithm
moves level up to the query
$q_0'=q_0=\mathtt{library/\wc[author/marx]/title[.\dblslash{}\wc]}$, which is
also the end result of $\learner\PTWIG_1$. \qed
\end{example}
We observe that $q_0$ can be considered as overspecialized: it
contains the filter expression $[.\dblslash{}\wc]$ which tests that
the selected \texttt{title} nodes have contents, a test trivially true
in the presence of a reasonable schema information. Currently,
however, our algorithms do not take advantage of schema information.

The soundness of $\learner\PTWIG_1$ follows from the invariant of the
main loop (lines 4--12): for every $t\in S$ in line $7$ there is at
least one node with the desired property. Completeness of
$\learner\PTWIG_1$ follows essentially from completeness of the
algorithms $\learner\APATH_1$ and $\learner\PTWIG_0$, and from the
fact that in line~7 we chose the deepest node.
\begin{theorem}
\label{thm:learning-unary-twig}
  Path-subsumption-free unary twig queries are learnable in polynomial
  time and data from positive examples \textnormal{(}i.e., in the
  setting $\PTWIG_1$\textnormal{)}.
\end{theorem}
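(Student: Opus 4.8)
The plan is to check the three requirements of Definition~\ref{def:1} for $\learner\PTWIG_1$, treating soundness and polynomiality as the routine part and concentrating on completeness. Soundness follows by assembling the soundness of the components: line~1 is sound by Theorem~\ref{thm:apath1-learnable}, each call to the helper $\learner\PTWIG_1^\ast$ inherits soundness from $\learner\PTWIG_0$ (Theorem~\ref{thm:twig-learnable}) since it only fuses paths that hold in $S$ onto the already-consistent skeleton $q_i'$, and the loop invariant noted after the algorithm---that line~7 always finds at least one admissible node $n$ for each $t\in S$---guarantees that every $S_i$ is well defined. Polynomiality is then immediate from the polynomiality of $\learner\APATH_1$ and $\learner\PTWIG_0$. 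For completeness I would take as the characteristic sample the match set $\CS_q=\{t_0,t_1\}$ of Section~\ref{sec:match-sets-char}, which is of size polynomial in $|q|$ by property $\Prop2$ (Lemma~\ref{lemma:prop-psf-twig}), and fix an arbitrary $S$ with $\CS_q\subseteq S\subseteq\mathcal{L}_1(q)$.

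The first step is to recover the selecting path. Let $p^{\mathrm{sel}}$ be the unary path query from $\root_q$ to $\sel_q$, which is anchored by the definition of $\PTwig_1$. Every $t\in S$ admits an embedding of $q$ mapping $\sel_q$ to $\sel_t$, and the images of the selecting-path nodes are ancestors of $\sel_t$, so $p^{\mathrm{sel}}$ embeds into $\SelPath(t)$; hence $\SelPath(S)\subseteq\mathcal{L}_1(p^{\mathrm{sel}})$. At the same time $\SelPath(t_0)$ and $\SelPath(t_1)$ are precisely the match-set trees obtained by applying the Section~\ref{sec:match-sets-char} construction to $p^{\mathrm{sel}}$ (up to the harmless fact that the $a_2$-runs have length $|q|\ge|p^{\mathrm{sel}}|$), so $\SelPath(S)$ contains a characteristic sample for $p^{\mathrm{sel}}$ w.r.t.\ $\learner\APATH_1$. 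By completeness of $\learner\APATH_1$ (Theorem~\ref{thm:apath1-learnable}, via Lemma~\ref{lemma:minimal-learner1}) line~1 returns $p\equiv p^{\mathrm{sel}}$, so $p$ has the shape $\ell_0/\alpha_1\sep\ell_1/\ldots/\alpha_k\sep\ell_k$ agreeing node for node with the selecting path of $q$.

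The second step is a downward induction on $i=k,\ldots,0$ with the invariant that $q_i$ is equivalent to the subquery $q^{(i)}$ of $q$ consisting of the subtree rooted at the $i$-th selecting-path node $v_i$, together with every filter hanging off $v_i,\ldots,v_k$ and with $\sel_q$ as selecting node. For the inductive step I would show that, because line~7 always chooses the \emph{deepest} admissible node, the subtree added to $S_i$ from each $t$ is rooted exactly at the image of $v_i$, so $S_i$ comprises precisely the instances relevant to the decoration of $q$ at $v_i$ and introduces no spurious material; and that, since $\CS_q\subseteq S$, the subtrees projected from $t_0,t_1$ furnish a characteristic sample for the path-subsumption-free filter twig at $v_i$. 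Completeness of $\learner\PTWIG_0$ (Theorem~\ref{thm:twig-learnable}), transported through $\learner\PTWIG_1^\ast$ and started from $q_i'=\ell_i/\alpha_{i+1}\sep q_{i+1}$ (respectively $q_k'=\ell_k$), then yields $q_i\equiv q^{(i)}$. Threading $q_{i+1}$ into $q_i'$ and invoking Lemma~\ref{lemma:paths-of-a-twig} shows that the already-built lower part is reproduced rather than duplicated, so at $i=0$ we obtain $q_0\equiv q^{(0)}=q$, giving completeness.

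The hard part will be the inductive step, and specifically the two claims about $S_i$. I expect the delicate point to be proving that the deepest-node tie-break in line~7 selects exactly the image of $v_i$: a shallower choice would drag extra structure into $S_i$ and cause over-specialization, while a deeper choice would miss part of the filter, so both the upper bound and the lower bound on the chosen node must be argued from the embedding of $q$ into the match-set trees $t_0,t_1$. Coupled with this is the need to verify that these two trees project to a genuine characteristic sample for the filter twig at $v_i$ rather than merely a consistent one, so that $\learner\PTWIG_0$ recovers the prescribed decoration and nothing more; here the mutual $\preccurlyeq$-incomparability of $\Paths(q)$ guaranteed by path-subsumption-freeness is what keeps the fusions from dropping or duplicating branches. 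Stitching together the completeness guarantees of $\learner\APATH_1$ and $\learner\PTWIG_0$ at the interface given by the extracted samples $S_i$ is where the argument requires the most care.
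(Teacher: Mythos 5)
Your proposal takes essentially the same route as the paper's (sketched) proof: soundness from the line-7 loop invariant plus soundness of the component learners, and completeness obtained by using the match-set construction of Section~\ref{sec:match-sets-char} as the characteristic sample, first recovering the selecting path via completeness of $\learner\APATH_1$ (through its minimality property) and then the filter decorations via completeness of $\learner\PTWIG_0$ transported through $\learner\PTWIG_1^\ast$, with the deepest-node choice in line~7 doing the crucial work. The paper gives only a two-sentence sketch of exactly this argument (full proofs being deferred to the journal version), and your plan fills in the same skeleton and correctly locates the delicate points, so it is faithful to the paper's approach.
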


\section{Impact of negative examples}
\label{sec:impact-negat-exampl}
In the previous sections, we considered the setting where the user
provides positive examples only. In this section, we allow the user to
additionally specify negative examples. We use two symbols $+$ and $-$
to mark whether an example $t$ of some query is a positive one $(t,+)$
or a negative one $(t,-)$. Formally, for $i\in\{0,1\}$ we consider the
following learning settings:
$\PATH_i^{\pm}=(\Tree_i^\pm,\Path_i,\mathcal{L}_i^\pm)$ and
$\TWIG_i^{\pm}=(\Tree_i^\pm,\Twig_i,\mathcal{L}_i^\pm)$, where
$\Tree_i^\pm=\Tree_i\times\{+,-\}$ and
$\mathcal{L}_i^\pm(q)=\mathcal{L}_i(q)\times\{+\}\cup
(\Tree_i\minus\mathcal{L}_{i}(q))\times\{-\}$.

We study the problem of checking whether there even exists a query
consistent with the input sample because any sound learning algorithm
needs to return $\Null$ if and only if there is no such
query. Formally, given a learning setting
$\mathcal{K}=(\mathcal{D},\mathcal{C},\mathcal{L})$, the {\em
  $\mathcal{K}$-consistency} is the following decision problem
\[
\mathsf{CONS}_{\mathcal{K}} =\{S\subseteq\mathcal{D} \mid \exists
q\in\mathcal{C}.\ S\subseteq \mathcal{L}(q) \}.
\]

Note that in the presence of positive examples the consistency problem
is trivial as long as the query class contains the universal query
$\wc\dblslash{}\wc$. In the presence of negative examples this problem becomes
quite complex.
\begin{theorem}
  $\TWIG_i^\pm$-consistency is NP-complete for any $i\in\{0,1\}$
  (even in the presence of one negative example).
\end{theorem}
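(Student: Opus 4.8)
The plan is to establish both membership in NP and NP-hardness, the latter already for a single negative example and for both $i=0$ and $i=1$.

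For membership I would first record that query evaluation is tractable: deciding $t\preccurlyeq q$, i.e.\ whether an embedding $q\hookrightarrow t$ exists, can be done in polynomial time by standard bottom-up tree-pattern matching. Hence, given a candidate query $q$ together with positive examples $t_1,\dots,t_m$ and one negative $t^-$, consistency is checked by verifying $t_j\preccurlyeq q$ for all $j$ and $t^-\not\preccurlyeq q$, all in polynomial time. The one missing ingredient is a polynomial bound on a witnessing query. I would prove a small-model lemma: whenever a consistent query exists, one of size polynomial in the sample exists (intuitively, a minimal consistent query may be taken to embed almost injectively into a smallest positive example, so its size is bounded by $|t_{\min}|$). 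Guessing such a $q$ and verifying the conditions then puts the problem in NP.

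For hardness I would reduce from 3-SAT. The idea is to use polynomially many positive examples to restrict candidate queries to \emph{assignment-like} patterns and a single negative example to encode the clauses. Taking the two positive paths $t_0=r(a_1(b_1(\ldots a_n(b_n(c))\ldots)))$ and $t_1=r(b_1(a_1(\ldots b_n(a_n(c))\ldots)))$ from Example~\ref{ex:exp-minimal-queries} forces every root-to-leaf path of a consistent query to be a common subsequence of the two, i.e.\ a partial truth assignment choosing one of $a_i,b_i$ per selected variable. The negative $t^-$ would be a tree whose root carries one subtree $D_j$ per clause $C_j$, where $D_j$ is a path listing, for each clause variable, only its falsifying literal and, for every other variable, both symbols. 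This is arranged so that an assignment path $q_\tau$ embeds into $D_j$ exactly when $\tau$ fails to satisfy $C_j$; thus $q_\tau$ is consistent iff $\tau$ satisfies every clause. The easy direction follows at once: a satisfying $\tau$ yields the consistent query $q_\tau$.

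The hard part, and the main obstacle, is the converse for \emph{arbitrary} twig queries. A consistent query need not be a single assignment path, and the conjunctive (branching) power of twigs can dodge the negative example without corresponding to one satisfying assignment: for an unsatisfiable formula, a query that branches high into a $\wc$-node with two descendant children labelled $a_i$ and $b_i$ can still embed both positives while fitting into no clause gadget that mentions variable $i$, so the naive gadget is circumventable. I would therefore engineer the reduction so branching cannot help, the cleanest route being to carry the assignment on a \emph{single} distinguished path that branching cannot escape. In the unary setting ($i=1$) this is precisely the selecting path from the root to the marked node, which a consistent query must match nodewise; I would encode the variables there and arrange $t^-$ so that its marked node is selected exactly when the encoded assignment falsifies some clause, making the query consistent iff the assignment is satisfying. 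For the Boolean case ($i=0$) I would either strengthen each clause subtree so that it absorbs every branching query embedding all positives while still excluding precisely the unsatisfying assignments, or transfer hardness from the unary case. Proving that under the engineered gadget ``a consistent query exists $\Rightarrow$ the formula is satisfiable'' holds for all twig queries, controlling branching and non-injective embeddings, is where the real work lies; checking that the reduction is polynomial and uses a single negative example is then routine.
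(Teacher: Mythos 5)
Your hardness reduction, as concretely given, does not work, and you essentially concede this yourself. Putting the clauses into the single negative example and using the two interleaved paths as positives leaves the clause constraints dodgeable by branching: for a 3-CNF input, the query $r/\wc[.\dblslash{}a_2][.\dblslash{}b_2]\cdots[.\dblslash{}a_n][.\dblslash{}b_n]$ embeds into both $t_0$ and $t_1$ of Example~\ref{ex:exp-minimal-queries} (map the $\wc$ to the root's unique child), yet it embeds into no clause path $D_j$: every clause has at least two variables among $x_2,\ldots,x_n$, and $D_j$ omits one symbol for each of them, while the child edge forces the $\wc$ to sit at the top of a single $D_j$. So your sample is consistent whether or not the formula is satisfiable, and the repairs you sketch remain intentions rather than constructions — the heart of the proof is missing. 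The paper resolves exactly this difficulty by the dual arrangement: clauses are encoded as several \emph{positive} examples (one $c$-tree per clause, containing one brush tree $d(x_1(0,1),\ldots,x_n(0,1))$ per literal, with the leaf falsifying that literal removed), and the single \emph{negative} $c$-tree contains, for each variable $x_i$, a brush with both leaves below $x_i$ removed. Because embeddings need not be injective, a query embeds into a tree iff each of its root-branches does; hence avoiding the negative example forces \emph{some single branch} of the query to demand a child below every variable, i.e., to encode a total valuation, and that very branch must still embed into every positive $c$-tree, i.e., hit a satisfied literal in every clause. Branching cannot cheat there because the existential burden of avoiding the negative example is concentrated on one branch, which then faces all the clause constraints — precisely the quantifier structure that your design gets backwards. (The remaining technical work in the paper concerns queries using $\dblslash$ and $\wc$, tamed by the observation that any separating query has depth at most $4$.)

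On membership, your justification of the small-model lemma is false as stated: minimality does not bound a consistent query by the size of the smallest positive example, because embeddings are not injective. Example~\ref{ex:1} exhibits a two-tree sample whose minimal consistent twig query is a perfect binary tree of exponential size, so ``minimal $\Rightarrow$ small'' fails badly. What is actually needed is that \emph{some} polynomially sized consistent query exists whenever any consistent query does, and establishing this is exactly the ``nontrivial minimal-witness argument'' that the paper flags (and omits). Your certificate structure — guess a small query, then verify the polynomially many embedding and non-embedding conditions in polynomial time — is the right shell, but the step that fills it is missing in your proposal just as much as the paper says it is nontrivial.
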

\begin{proof}
  We only outline the proof of NP-hardness of
  $\TWIG_0^\pm$-consistency with a reduction from
  $\mathsf{SAT}$. Showing the membership to NP is more difficult,
  uses a nontrivial minimal-witness argument, and is omitted.

  We illustrate the reduction on an example of a $\mathsf{CNF}$
  formula $\varphi_0=(\neg x_1\lor x_2 \lor \neg x_3)\land(x_1\lor
  \neg x_2)$ for which the corresponding sample is presented in
  Figure~\ref{fig:reduction} (positive and negative examples are
  indicated with the symbols $+$ and $-$ respectively).
  \begin{figure*}[htb!]
    \centering
    \begin{tikzpicture}[yscale=0.85,xscale=0.5]
      \fontsize{6}{7}
      \selectfont
      \begin{scope}[yshift=-1cm,xshift=-8.25cm]
        \node (c) at (0,0) {$c$};
        \path (c) node[above=0.2cm] {${+}$};
        \begin{scope}[yshift=-1cm,xshift=-3cm]
          \node (d) at (0,0) {$d$} edge [-,semithick] (c);
          \begin{scope}[yshift=-1cm,xshift=-1cm]
            \node (x) at (0,0) {$x_1$} edge [-,semithick] (d);
            \node (f) at (0,-1) {$0$} edge [-,semithick] (x);
          \end{scope}
          \begin{scope}[yshift=-1cm,xshift=0cm]
            \node (x) at (0,0) {$x_2$} edge [-,semithick] (d);
            \node (f) at (-.25,-1) {$0$} edge [-,semithick] (x);
            \node (t) at (.25,-1) {$1$} edge [-,semithick] (x);
          \end{scope}
          \begin{scope}[yshift=-1cm,xshift=1cm]
            \node (x) at (0,0) {$x_3$} edge [-,semithick] (d);
            \node (f) at (-.25,-1) {$0$} edge [-,semithick] (x);
            \node (t) at (.25,-1) {$1$} edge [-,semithick] (x);
          \end{scope}
        \end{scope}
        \begin{scope}[yshift=-1cm,xshift=0cm]
          \node (d) at (0,0) {$d$} edge [-,semithick] (c);
          \begin{scope}[yshift=-1cm,xshift=-1cm]
            \node (x) at (0,0) {$x_1$} edge [-,semithick] (d);
            \node (f) at (-.25,-1) {$0$} edge [-,semithick] (x);
            \node (t) at (.25,-1) {$1$} edge [-,semithick] (x);
          \end{scope}
          \begin{scope}[yshift=-1cm,xshift=0cm]
            \node (x) at (0,0) {$x_2$} edge [-,semithick] (d);
            \node (t) at (0,-1) {$1$} edge [-,semithick] (x);
          \end{scope}
          \begin{scope}[yshift=-1cm,xshift=1cm]
            \node (x) at (0,0) {$x_3$} edge [-,semithick] (d);
            \node (f) at (-.25,-1) {$0$} edge [-,semithick] (x);
            \node (t) at (.25,-1) {$1$} edge [-,semithick] (x);
          \end{scope}
        \end{scope}
        \begin{scope}[yshift=-1cm,xshift=3cm]
          \node (d) at (0,0) {$d$} edge [-,semithick] (c);
          \begin{scope}[yshift=-1cm,xshift=-1cm]
            \node (x) at (0,0) {$x_1$} edge [-,semithick] (d);
            \node (f) at (-.25,-1) {$0$} edge [-,semithick] (x);
            \node (t) at (.25,-1) {$1$} edge [-,semithick] (x);
          \end{scope}
          \begin{scope}[yshift=-1cm,xshift=0cm]
            \node (x) at (0,0) {$x_2$} edge [-,semithick] (d);
            \node (f) at (-.25,-1) {$0$} edge [-,semithick] (x);
            \node (t) at (.25,-1) {$1$} edge [-,semithick] (x);
          \end{scope}
          \begin{scope}[yshift=-1cm,xshift=1cm]
            \node (x) at (0,0) {$x_3$} edge [-,semithick] (d);
            \node (f) at (0,-1) {$0$} edge [-,semithick] (x);
          \end{scope}
        \end{scope}
      \end{scope}
      \begin{scope}[yshift=-1cm,xshift=0cm]
        \node (c) at (0,0) {$c$};
        \path (c) node[above=0.2cm] {${+}$};
        \begin{scope}[yshift=-1cm,xshift=-1.5cm]
          \node (d) at (0,0) {$d$} edge [-,semithick] (c);
          \begin{scope}[yshift=-1cm,xshift=-1cm]
            \node (x) at (0,0) {$x_1$} edge [-,semithick] (d);
            \node (t) at (0,-1) {$1$} edge [-,semithick] (x);
          \end{scope}
          \begin{scope}[yshift=-1cm,xshift=0cm]
            \node (x) at (0,0) {$x_2$} edge [-,semithick] (d);
            \node (f) at (-.25,-1) {$0$} edge [-,semithick] (x);
            \node (t) at (.25,-1) {$1$} edge [-,semithick] (x);
          \end{scope}
          \begin{scope}[yshift=-1cm,xshift=1cm]
            \node (x) at (0,0) {$x_3$} edge [-,semithick] (d);
            \node (f) at (-.25,-1) {$0$} edge [-,semithick] (x);
            \node (t) at (.25,-1) {$1$} edge [-,semithick] (x);
          \end{scope}
        \end{scope}
        \begin{scope}[yshift=-1cm,xshift=1.5cm]
          \node (d) at (0,0) {$d$} edge [-,semithick] (c);
          \begin{scope}[yshift=-1cm,xshift=-1cm]
            \node (x) at (0,0) {$x_1$} edge [-,semithick] (d);
            \node (f) at (-.25,-1) {$0$} edge [-,semithick] (x);
            \node (t) at (.25,-1) {$1$} edge [-,semithick] (x);
          \end{scope}
          \begin{scope}[yshift=-1cm,xshift=0cm]
            \node (x) at (0,0) {$x_2$} edge [-,semithick] (d);
            \node (f) at (0,-1) {$0$} edge [-,semithick] (x);
          \end{scope}
          \begin{scope}[yshift=-1cm,xshift=1cm]
            \node (x) at (0,0) {$x_3$} edge [-,semithick] (d);
            \node (f) at (-.25,-1) {$0$} edge [-,semithick] (x);
            \node (t) at (.25,-1) {$1$} edge [-,semithick] (x);
          \end{scope}
        \end{scope}
      \end{scope}
      \begin{scope}[yshift=-1cm,xshift=8cm]
        \node (c) at (0,0) {$c$} ;
        \path (c) node[above=0.2cm] {${-}$};
        \begin{scope}[yshift=-1cm,xshift=-3cm]
          \node (d) at (0,0) {$d$} edge [-,semithick] (c);
          \begin{scope}[yshift=-1cm,xshift=-1cm]
            \node (x) at (0,0) {$x_1$} edge [-,semithick] (d);
          \end{scope}
          \begin{scope}[yshift=-1cm,xshift=0cm]
            \node (x) at (0,0) {$x_2$} edge [-,semithick] (d);
            \node (f) at (-.25,-1) {$0$} edge [-,semithick] (x);
            \node (t) at (.25,-1) {$1$} edge [-,semithick] (x);
          \end{scope}
          \begin{scope}[yshift=-1cm,xshift=1cm]
            \node (x) at (0,0) {$x_3$} edge [-,semithick] (d);
            \node (f) at (-.25,-1) {$0$} edge [-,semithick] (x);
            \node (t) at (.25,-1) {$1$} edge [-,semithick] (x);
          \end{scope}
        \end{scope}
        \begin{scope}[yshift=-1cm,xshift=0cm]
          \node (d) at (0,0) {$d$} edge [-,semithick] (c);
          \begin{scope}[yshift=-1cm,xshift=-1cm]
            \node (x) at (0,0) {$x_1$} edge [-,semithick] (d);
            \node (f) at (-.25,-1) {$0$} edge [-,semithick] (x);
            \node (t) at (.25,-1) {$1$} edge [-,semithick] (x);
          \end{scope}
          \begin{scope}[yshift=-1cm,xshift=0cm]
            \node (x) at (0,0) {$x_2$} edge [-,semithick] (d);
          \end{scope}
          \begin{scope}[yshift=-1cm,xshift=1cm]
            \node (x) at (0,0) {$x_3$} edge [-,semithick] (d);
            \node (f) at (-.25,-1) {$0$} edge [-,semithick] (x);
            \node (t) at (.25,-1) {$1$} edge [-,semithick] (x);
          \end{scope}
        \end{scope}
        \begin{scope}[yshift=-1cm,xshift=3cm]
          \node (d) at (0,0) {$d$} edge [-,semithick] (c);
          \begin{scope}[yshift=-1cm,xshift=-1cm]
            \node (x) at (0,0) {$x_1$} edge [-,semithick] (d);
            \node (f) at (-.25,-1) {$0$} edge [-,semithick] (x);
            \node (t) at (.25,-1) {$1$} edge [-,semithick] (x);
          \end{scope}
          \begin{scope}[yshift=-1cm,xshift=0cm]
            \node (x) at (0,0) {$x_2$} edge [-,semithick] (d);
            \node (f) at (-.25,-1) {$0$} edge [-,semithick] (x);
            \node (t) at (.25,-1) {$1$} edge [-,semithick] (x);
          \end{scope}
          \begin{scope}[yshift=-1cm,xshift=1cm]
            \node (x) at (0,0) {$x_3$} edge [-,semithick] (d);
          \end{scope}
        \end{scope}
      \end{scope}
    \end{tikzpicture}
    \caption{Reduction of SAT to $\TWIG_0^\pm$-consistency for
      $\varphi_0=(\neg x_1\lor x_2 \lor \neg x_3)\land(x_1\lor \neg
      x_2)$.}
    \label{fig:reduction}
  \end{figure*}

  The building block of the reduction is a {\em brush tree} which is
  used to encode Boolean valuations and constraints on them. For
  instance, for the set of variables $\{x_1,x_2,x_3\}$ the full brush
  tree is $d(x_1(0,1),x_2(0,1),x_3(0,1))$ but typically we remove some
  of the leaves. For instance, the valuation
  $V_0=\{(x_1,\false),(x_2,\false),(x_2,\true)\}$ is represented by
  the tree $t_0=d(x_1(0),x_2(0),x_3(1))$. Note that the tree pattern
  $c(t_0)$ separates the positive examples from the negative ones in
  Figure~\ref{fig:reduction} because $V_0$ satisfies $\varphi_0$.

  The constructed set of examples consists of several $c$-trees. The
  positive $c$-trees specify the satisfying valuations of the input
  $\mathsf{CNF}$ formula; there is one $c$-tree per clause of the
  input formula. Each $c$-tree contains one brush tree per literal of
  the clause, every brush tree encoding the valuations that satisfy
  the corresponding literal (one leaf removed). The negative $c$-tree
  ensures that a brush filter that separates the positive examples
  from negative is well-formed and encodes a valuation. This $c$-tree
  contains one brush tree per variable of the input formula, every
  brush tree has both leaves of the corresponding variable $x_i$
  removed. We claim that this set of examples is consistent if and
  only if the input $\mathsf{CNF}$ formula is satisfiable. The
  \emph{if} part is trivial and the proof of the \emph{only if} part
  is technical and uses the observation that the depth of any twig
  query separating the positive examples from negative ones is bounded
  by $4$.\qed
\end{proof}
The result holds even for very limited query classes that do not use
$\dblslash{}$-edges and $\wc$, and in particular the result hold for
path-subsumption-free twig queries.

The problem of consistency of the input sample in the presence of
positive and negative examples has also been consideed for string
patterns and found to be NP-complete~\cite{MiShSh00}. The proof can be
easily adapted to show the following.
\begin{theorem}
  $\PATH_i^\pm$-consistency is NP-complete for any $i\in\{0,1\}$.
\end{theorem}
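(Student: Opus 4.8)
The plan is to reduce from the consistency problem for string patterns, shown NP-complete in~\cite{MiShSh00}, after first settling membership in NP, which---unlike the twig case---is straightforward for paths. Any $q\in\Path_i$ consistent with the positive examples must embed into each of them, and since embeddings of path queries are injective and root-anchored, the number of nodes of $q$ is bounded by the height of the shallowest positive example; moreover every constant label of $q$ must occur in some positive example, so the effective alphabet is finite. Hence a consistent query, if one exists, has polynomial size and uses polynomially many labels, and we may guess it and verify in polynomial time that it embeds into every positive tree and into no negative tree (embedding a linear pattern into a tree is a routine dynamic program). This places $\PATH_i^\pm$-consistency in NP for both $i\in\{0,1\}$.

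For NP-hardness I would exhibit a membership-preserving correspondence between path queries over linear trees and string patterns. The key observation is that a path query, read top-down over a linear tree (a string), is exactly a pattern with don't-cares: a $\Sigma$-label is a constant symbol, a $\wc$-node is a single-symbol don't-care, a child edge forces adjacency, and a $\dblslash$-edge acts as a variable-length gap (its proper-descendant semantics allows a gap of zero or more symbols). Given an instance of pattern consistency over an alphabet $A$ with positive and negative strings, I would encode each string $w=w_1\cdots w_m$ as the linear tree $r(w_1(\cdots w_m(e)\cdots))$, where $r$ and $e$ are fresh boundary symbols; in the unary setting the selected node is the $e$-leaf. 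The fixed root $r$ together with root-anchoring (condition~1 of embeddings) lets a $\dblslash$-edge leaving the root simulate a leading gap, while the unique end marker $e$ supplies the right boundary: for $i=1$ the selecting leaf forces the query to reach $e$, and for $i=0$ the query may either consume $e$ (a right-anchored pattern) or stop earlier (a free trailing gap). Under this encoding a pattern matches $w$ if and only if the corresponding query embeds into the tree for $w$, so the pattern instance is consistent exactly when the path-query instance is, and hardness transfers directly.

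The main obstacle is making this correspondence tight in the \emph{only if} direction, i.e.\ arguing that an arbitrary consistent path query really comes from a pattern of the intended class. Here I would exploit the boundary markers: because $r$ occurs only at the root and $e$ only at the unique leaf of every example, any consistent query is forced to treat them as anchors (a query whose root label is a genuine symbol of $A$ embeds into no encoded tree, and in the unary case the leaf must map to $e$), which strips away spurious uses of the markers and leaves precisely the intended don't-care structure over $A$. A secondary point to verify carefully is the boundary behaviour of gaps---leading and trailing don't-cares, and the zero-length gaps permitted by the proper-descendant semantics---so that the two ``match'' relations coincide on \emph{every} encoded string and not merely on the positive ones. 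Once this is established, NP-hardness follows from~\cite{MiShSh00} and, combined with the membership argument above, yields NP-completeness of $\PATH_i^\pm$-consistency for both $i\in\{0,1\}$.
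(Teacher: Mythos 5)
Your NP-membership argument is fine and matches what makes the path case easier than the twig case: any consistent path query embeds injectively along a root-to-node path of each positive example, so its size is bounded by the height of the smallest positive tree (and if the sample is purely negative, a child-edge chain of wildcards deeper than every negative tree is a polynomial-size consistent query), so one can guess a polynomial-size candidate and verify consistency in polynomial time.

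The hardness half, however, has a genuine gap. You use the pattern-consistency problem of \cite{MiShSh00} as a black box, but consistency is not monotone in the hypothesis class: enlarging the class of admissible separators can turn a no-instance into a yes-instance. Path queries read over your encoded linear trees are \emph{strictly more} expressive than regular patterns, because they also have the single-symbol wildcard $\wc$, which no pattern built from constants and variable-length gaps can express. Concretely, take $A=\{a,b\}$, positive strings $P=\{a,b\}$, negative strings $N=\{aa,ab,ba,bb\}$. No regular pattern is consistent: a pattern containing a constant fails to match one of the two positives, and the only variable-only pattern matching both positives (a single variable) matches all of $N$ as well. Yet after your encoding, the path query $r/\wc/e$ is consistent with the encoded sample (for $i=0$ and $i=1$ alike). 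So your reduction maps a no-instance to a yes-instance, and the biconditional you assert fails; your boundary-marker argument only removes spurious uses of $r$ and $e$, it does not eliminate the extra power of $\wc$. Repairing this means proving hardness of consistency for the richer class of ``patterns with gaps \emph{and} single-symbol don't-cares,'' which is not what \cite{MiShSh00} states --- and that is exactly why the paper does not reduce from the pattern-consistency problem but instead \emph{adapts the hardness construction} of \cite{MiShSh00}, re-verifying that on the constructed instances no separator from the larger class exists. The paper's remark that this adaptation cannot be pushed to twig queries, which are ``much more expressive even when interpreted over linear trees,'' is the same phenomenon one level up, and it should have been a warning sign for a black-box reduction at the path level too.
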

We remark, however, that the proof cannot be extended to twig queries
because these are much more expressive even when interpreted over
linear trees (words).

Overall, the negative results for checking consistency give us
\begin{corollary}
  Unless $P=NP$, none of the classes $\Path_i$ and $\Twig_i$ for
  $i\in\{0,1\}$ is learnable in polynomial time and data in the
  presence of positive and negative examples.
\end{corollary}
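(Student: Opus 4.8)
The plan is to read the corollary directly off the two NP-completeness results for consistency, by observing that any \emph{sound} learning algorithm is in effect a decision procedure for the corresponding consistency problem. Fix $i\in\{0,1\}$ together with one of the settings $\TWIG_i^\pm$ or $\PATH_i^\pm$, and suppose towards a contradiction that the corresponding class ($\Twig_i$ or $\Path_i$) were learnable in polynomial time and data in that setting. Then, by Definition~\ref{def:1}, there is a learning algorithm $\learner$ running in time polynomial in the size of its input sample, and $\learner$ is \emph{sound}: for every sample $S$ it returns a query consistent with $S$ whenever one exists, and returns $\Null$ otherwise.

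The key step is to turn $\learner$ into a polynomial-time algorithm for consistency. By the soundness clause, $\learner(S)\neq\Null$ holds if and only if there exists a query $q$ in the class with $S\subseteq\mathcal{L}(q)$, i.e. if and only if $S\in\mathsf{CONS}_{\mathcal{K}}$ for the setting $\mathcal{K}$ at hand. Hence the procedure that runs $\learner(S)$ and accepts exactly when the output differs from $\Null$ decides $\mathsf{CONS}_{\mathcal{K}}$, and it does so in polynomial time since $\learner$ is polynomial. For $\Twig_i$ this yields a polynomial-time decision procedure for $\TWIG_i^\pm$-consistency, and for $\Path_i$ one for $\PATH_i^\pm$-consistency.

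Finally I would invoke the two preceding theorems, which state that $\TWIG_i^\pm$-consistency and $\PATH_i^\pm$-consistency are NP-complete for each $i\in\{0,1\}$. A polynomial-time decision procedure for any of these problems forces $P=NP$, contradicting the assumption. Ranging over $i\in\{0,1\}$ and over the two families of settings covers all four classes $\Path_0,\Path_1,\Twig_0,\Twig_1$, which establishes the corollary. There is no genuine obstacle beyond the work already done in the two theorems; the only points needing care are (i) that the polynomial bound in Definition~\ref{def:1} is in the size of the \emph{sample}, which is exactly the input to the consistency problem, so no further reduction is required, and (ii) that soundness is an ``if and only if'' condition (a consistent query is returned precisely when one exists), which is what makes $\learner$ a genuine decision procedure rather than merely a one-sided one.
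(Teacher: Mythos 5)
Your proposal is correct and is exactly the argument the paper intends: the corollary is stated as an immediate consequence of the two NP-completeness theorems, since a sound polynomial-time learner returns $\Null$ precisely when no consistent query exists and hence decides the consistency problem in polynomial time. Your two points of care (the polynomial bound being in the sample size, and soundness being an if-and-only-if condition) are precisely what makes this one-line derivation valid, so there is nothing to add.
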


\section{Related Work}
\label{sec:related-work}
Our research adheres to computational learning theory~\cite{KeVa94}, a
branch of machine learning, and in particular, to the area of language
inference~\cite{Go67}. Our learning framework is inspired by the one
generally used for inference of languages of word and
trees~\cite{OG-inference,OnGa92} (see also \cite{Higuera05} for survey
of the area). Analogous frameworks have been employed in the context
of XML for learning of DTDs and XML
Schemas~\cite{BeNeScVa10,BeGeNeVa10}, XML
transformations~\cite{LeMaNi10}, and $n$-ary automata
queries~\cite{CGLN07}.

Because the positive examples are generally believed to be easier to
obtain, learning from positive examples only is desirable. However,
many classes of languages are learnable only in the presence of both
positive and negative examples e.g., regular languages~\cite{Go67},
deterministic regular languages~\cite{BeGeNeVa10} are not learnable
from positive examples only, in fact any \emph{superfinite} language
class, a class containing all finite languages and at least one
infinite, cannot be learned from positive examples even if we consider
algorithms that do not work in polynomial time. To enable learning
from only positive examples various restrictions have been considered
e.g., reversible languages~\cite{Angluin82}, $k$-testable
languages~\cite{GaVi90}, languages of $k$-occurrence regular
expressions~\cite{BeGeNeVa10}, and $(k,l)$-contextual tree
languages~\cite{RaBrBu08}. What is important to point out here is that
the ability to learn subclasses of path and twig queries from positive
examples comes from the fact that the expressive power of path and
twig queries is relatively weak. Paradoxically, the very same fact is
also responsible for the unfeasability to learn path and twig queries
when both positive and negative examples are present.

Our basic learning algorithm for unary embeddable path queries is
inspired and can be seen as an extension of algorithms for inference
of word patterns~\cite{Angluin79,Shinohara82} (see \cite{ShAr95} for a
survey of the area). A word pattern is a word using extra wildcard
characters. For instance, \emph{regular patterns} use a wildcard
$\oplus$ matching any nonempty string e.g.,
$a\mathord{\oplus}b\mathord{\oplus}c$ matches $aabbc$ and $abbbc$ but
not $abbc$, $abc$, and $cbc$. \emph{Extended regular patterns} use a
wildcard $\circledast$ that matches any (possibly empty) string e.g.,
$a\mathord{\circledast}b$ matches $ab$ and $acbcb$. To capture unary
path queries we need to use the wildcard $\oplus$ and another wildcard
$\odot$ that matches a single letter, and then for instance the
pattern $a\mathord{\oplus}b\mathord{\odot}c$ corresponds to the path
query $/a\dblslash{}b/\wc/c$ when interpreted over paths of the input
tree. We observe that $\oplus$ is equivalent to
$\mathord{\circledast}\mathord{\odot}$ and engineer our learning
algorithm using the ideas behind the algorithms for inference of
regular patters~\cite{Angluin79} and extended regular
patters~\cite{Shinohara82}.

Learning of unary XML queries has been pursued with the use of node
selecting tree automata~\cite{CGLN07}, with extensions allowing to
infer $n$-ary queries~\cite{LeNiGi06}, take advantage of schema
information~\cite{CGLN08}, and use pruning techniques to handle
incompletely annotated documents~\cite{CGLN07}.
The main advantage of using node selecting tree automata is their
expressive power. Node selecting tree automata capture exactly the
class of $n$-ary MSO tree queries~\cite{ThWr68,LeNiGi06}, which
properly includes twig and path queries. However, tree automata have
several drawbacks which may render them unsuitable for learning in
certain scenarios: this is a heavy querying formalism with little
support from the existing infrastructure and it does not allow an easy
visualization of the inferred query. 

Although, the class of twig queries is properly included by the class
of MSO queries and path queries are captured by regular languages,
using automata-based techniques to infer the query and then convert it
to twigs is unlikely to be successful because automata translation is
a notoriously difficult task and typically leads to significant
blowup~\cite{EhZe76} and it is generally considered beneficial to
avoid it~\cite{Fernau04}. An alternative approach, along the lines
of~~\cite{BeNeScVa10}, would be to define a set of structural
restrictions on the automaton that would ensure an easy translation to
twig queries and enforce those conditions during inference. However,
such restrictions would need to be very strong, at least for twig
queries, and this approach would require significant modification of
the inference algorithm, to the point where it would constitute a new
algorithm.

Methods used for inference of languages represented by automata differ
from the methods used in our learning algorithms. An automata-based
inference typically begins by constructing an automaton recognizing
exactly the set of positive examples, which is then generalized by a
series of generalization operation e.g., fusions of pairs of
states. To avoid overgeneralization of the automata, negative examples
are used to filter only consistent generalizations
operations~\cite{OnGa91}, and if negative examples are not available,
structural properties of the automata class can be used to pilot the
generalization process~\cite{Angluin82,GaVi90,BeGeNeVa10}. Our
algorithms, similarly to word pattern inference
algorithms~\cite{Angluin79,Shinohara82}, begin with the universal
query and iteratively specialize the query by incorporating
subfragments common to all positive examples.

XLearner~\cite{MoKiMa04} is a practical system that infers XQuery
programs. It uses Angluin’s DFA inference algorithm~\cite{Angluin87a}
to construct the XPath components of the XQuery program. The system
uses direct user interaction, essentially equivalence and membership
queries, to refine the inferred query. Because of that the learning
framework, called the \emph{minimally adequate
  teacher}~\cite{Angluin87a}, is different from ours and allows to
infer more powerful queries. We also point out that learning twigs is
not feasible with equivalence queries only~\cite{CaCeGo06}.

Raeymaekers et al. propose learning of $(k,l)$-contextual tree
languages to infer queries for web
wrappers~\cite{RaBrBu08}. $(k,l)$-contextual tree languages form a
subclass of regular tree languages that allows to specify conditions
on the nodes of the tree at depth up to $l$ and each condition
involves exactly $k$ subsequent children of a node. Because only nodes
at bounded depth can be inspected and the relative order among
children is used, $(k,l)$-contextual tree languages are incomparable
with twig queries which can inspect nodes at arbitrary depths but
ignore the relative order of nodes.

Finally, we point out that the problem of query inference has been
studied in the setting of relational
setting~\cite{SPGW10,TrChPa09,GiBu09}. Relational databases and their
query languages offer a set of opportunities and challenges radically
different from those encountered in semi-structured databases. For
instance, the query inference involves constructing a desired
selection condition that yields the required tuples from a table, a
task that easily becomes intractable. 

\section{Conclusions and future work}
\label{sec:concl-future-work}
We have studied the problem of inferring an XML query from examples
given by the user. We have investigated several classes of Boolean and
unary, path and twig queries and considered two settings for the
problem: one allowing positive examples only and one that allows both
positive and negative examples. For the setting with positive examples
only, we have presented sound and complete learning algorithms for
practical subclasses of queries: anchored path queries and
path-subsumption-free twig queries. On the other hand, inclusion of
negative examples to the input sample renders learning unfeasible.

We believe that negative examples have an important informative
quality and we intend to investigate approaches that take advantage of
it. Two directions are possible: relaxing the definition of
learnability and extending the query class. A notion allowing the
query to select some negative examples and omit some positive examples
is a natural direction of making our learning algorithms capable of
producing queries of better quality (cf.\ Example~\ref{ex:2}) and able
to handle noisy samples. For the second direction, our preliminary
results show that adding union to the query languages renders
consistency quite simple to decide but the satisfaction of \Prop1 and
\Prop2 is not clear, and therefore, new learning techniques need to be
developed. We are also interested in extending the query language with
other operators (e.g., negation) and see their impact on learnability.

We observe that the main reason for restricting our attention to
anchored path queries are the properties \Prop1 and \Prop2 defined in
Section \ref{sec:classes} that allow to use embeddings to equate the
semantics of the query with its structure and enforce the existence of
match sets of polynomial size. \cite{MiSu99,MiSu04} introduced adorned
path queries, allowing to represent $\dblslash{}\wc/\wc$ as
$\dblslash{}^{\geq2}$, and extended embeddings to homomorphisms of
adorned queries. Homomorphisms are shown to connect tightly the
structure of path queries and their semantics (\Prop1). It would be
interesting to see to what extent the notion of homomorphism could be
used to improve learnability results.  We point out that for path
queries the only know construction of match sets produces exponential
sets. Moreover, the homomorphism technique does not work for twig
queries.

Finally, we would like to enable our algorithms to take advantage of
schema information (cf.\ Example~\ref{ex:learner-psf-twig1}). The
schema may be given explicitly e.g., as a DTD, or implicitly as a
result of a learning algorithm. Because testing the containment of
XPath queries in the presence of DTDs is know to be intractable in
general~\cite{DeTa01,NeSc03,BeFaGe05} and in fact most of the
reductions showing hardness use (or can be modified to use) anchored
queries, the use of DTDs in this context may be quite limited and we
intend to investigate alternative schema formalisms tailored for query
learning.


\end{document}